\newtheorem{theorem}{Theorem}
\newtheorem{lemma}[theorem]{Lemma} 
\newtheorem{claim}[theorem]{Claim} 
\newtheorem{proposition}[theorem]{Proposition} 
\newtheorem{corollary}[theorem]{Corollary}
\newtheorem{definition}{Definition}
\def\<{\langle}
\def\>{\rangle}
\def\tg{\tilde{g}}
\def\tW{\tilde{W}}
\def\hx{\hat{x}}
\def\hy{\hat{y}}
\def\tW{\widetilde{W}}
\def\eps{{\varepsilon}}
\def\id{{\rm I}}
\def\sT{{\sf T}}
\def\sign{{\operatorname{\rm{sgn}}}}
\def\P{{\mathbb P}}
\def\prob{{\mathbb P}}
\def\integers{{\mathbb Z}}
\def\E{{\mathbb E}} %expectation
\def\mbar{{\bar{m}}}
\def\etabar{{\bar{\eta}}}
\def\reals{\mathbb{R}}
\def\normal{{\sf N}}
\def\Poiss{{\sf Poiss}}
\def\Unif{{\sf Unif}}
\def\DE{{\sf DE}}
\def\mom{{\sf m}}
\def\de{{\rm d}}
\def\d{{\mathrm{d}}}
\newcommand\norm[1]{\left\lVert{#1}\right\rVert}
\def\bsl{\backslash}
\newcommand\myeqref[1]{{Eq.\,\eqref{#1}}}
\def\cX{{\cal X}}
\def\ed{\stackrel{{\rm d}}{=}}
\def\convD{{\,\stackrel{\mathrm{d}}{\Rightarrow} \,}}
\def\sG{{\sf G}}
\def\hv{\widehat{v}}
\DeclareMathAlphabet{\mathpzc}{OT1}{pzc}{m}{it}
\def\hx{\widehat{x}}
\def\hs{\widehat{s}}
\def\MMSE{{\sf MMSE}}
\def\tg{\widetilde{g}}
\def\sphere{{\mathbb{S}}}
\def\OPT{{\sf OPT}}
\def\OPT{{\sf OPT}}
\def\Unif{{\rm Uniform}}
\def\sech{{\text{sech}}}
\def\P{\mathbb{P}}
\def\E{\mathbb{E}}
\def\de{\mathrm{d}}
\title{Contextual Stochastic Block Models}
\author{Yash Deshpande\thanks{Department of Mathematics, Massachusetts Institute
of Technology} \;\and \;   Andrea Montanari\;\thanks{Departments of Electrical Engineering and Statistics, Stanford University} \and\; Elchanan Mossel\thanks{Department of Mathematics, Massachusetts Institute
of Technology} \;\and\;
Subhabrata Sen\thanks{Department of Mathematics, Massachusetts Institute of Technology}} 
\begin{document}
\maketitle
% !TEX root = main.tex
%--------------------------
\begin{abstract}
We provide the first information theoretic tight analysis for inference of latent community structure given a sparse graph along with high dimensional node covariates, correlated with the same latent communities. Our work bridges recent theoretical breakthroughs in the detection of latent community structure without nodes covariates and a large body of empirical work using diverse heuristics for combining node covariates with graphs for inference. The tightness of our analysis implies in particular, the information theoretical necessity of combining the different sources of information. 
Our analysis holds for networks of large degrees as well as for a Gaussian version of the model. 

%We introduce a probabilistic generative model for a sparse graph with latent community structure, along with high dimensional node covariates, correlated with the same latent cluster. We use linearized belief propagation to predict the information theoretic threshold in this model. To substantiate our claims, we provide extensive numerical evidence to support our predictions. This suggests that in certain cases, it is possible to recover the latent community assignment \emph{weakly} by combining the node and edge data, while it is impossible to achieve non-trivial correlation using either information alone.  Finally, to provide rigorous evidence to the validity of our predictions, we introduce a dense, gaussian version of this model, and rigorously establish the information theoretic limit in this context. Our proofs in this case use a novel Slepian comparison bound, which might be of independent interest.
\end{abstract}

% !TEX root = main.tex
%--------------------------

\section{Introduction}

%Graph partitioning methods, or clustering nodes of
%a graph while respecting its topology, 
Data clustering is a widely used primitive in exploratory
data analysis and summarization. These methods discover
clusters or partitions that are assumed to reflect a latent partitioning of the data
with semantic significance. In a
machine learning pipeline, results of such a clustering may then be used for downstream
supervised tasks, such as feature engineering, privacy-preserving
classification or fair allocation \cite{chaudhuri2011differentially,kumar2012radiomics,corbett2017algorithmic}.

At risk of over-simplification, there are two settings that
are popular in literature. 
In {\em graph clustering}, the dataset of $n$ objects is 
represented as a symmetric \emph{similarity matrix} $A = (A_{ij})_{1\le i, j\le n}$. For instance, $A$ can be binary, where
$A_{ij} = 1$ (or $0$) denotes that the two objects $i$, $j$ are similar 
(or not). It is, then, natural to interpret $A$ as the adjacency 
matrix of a graph. This can be carried over to non-binary settings
by considering weighted graphs. On the other hand, in more traditional (binary) {\em classification} problems, the $n$ objects are represented
as $p$-dimensional feature or covariate vectors $b_1, b_2, \cdots, b_n$. This feature
representation can be the input for a clustering method such as $k$-means, or instead used to construct a similarity matrix $A$, which 
in turn is used for clustering or partitioning.
These two representations are often taken to be mutually exclusive and, in fact, interchangeable. 
Indeed, just as feature representations can be used to 
construct similarity matrices, popular spectral methods \cite{ng2002spectral,von2007tutorial} implicitly construct
a low-dimensional feature representation from the similarity matrices. 

This paper is motivated by scenarios where the graph, or similarity,
representation $A\in\reals^{n\times n}$, and the feature representation $B = [b_1, b_2, \dots, b_n]\in \reals^{p\times n}$ provide \emph{independent}, or \emph{complementary},
information on the latent clustering of the $n$ objects. 
(Technically, we will assume that $A$ and $B$ are conditionally independent given the node labels.)
We argue that in fact in almost all practical graph clustering problems, feature representations provide complementary information of the latent clustering. This is indeed the case in many social and biological networks, see e.g.~\cite{newman2016structure} and references within.
% Elchanan : Can also add: \cite{hanisch2002co,ideker2001integrated} for other bio examples

As an example, consider the `political blogs' dataset \cite{adamic2005political}. This is a directed network of political blogs during the 2004 US presidential election, with a link between two blogs if one referred to the other. 
It is possible to just use the graph structure in order to identify political communities 
(as was done in \cite{adamic2005political}). Note however that much more data is available. 
For example we may consider an alternative feature representation of the blogs, wherein
each blog is converted to a `bag-of words' vector of its
content. This
gives a quite different, and complementary representation of 
blogs that plausibly reflects their political leaning. 
A number of approaches can be used for the simple
task of predicting leaning from the graph information
(or feature information) individually. However, 
given access to both sources, it is challenging to combine
them in a principled fashion. 
%This situation exemplifies a natural statistical challenge: 
%the presence of additional 
%feature or covariate information on nodes of a graph is common in 
%applications of social and biological networks \cite{}. 
%where the objective was to study the citation or linkage graph of political blogs in the aftermath of the 2004 US presidential election. The blogs were labeled according to their
%leaning as `conservative' or `liberal'; indeed this leaning is reflected 
%in the citation graph \cite{adamic2005political}. This is an example of the first setting, 
%where the blogs are represented as nodes their citation graph: 
%two blogs are deemed `similar' if one links to the other. Consider an alternative feature representation of the blogs, wherein
%each blog is converted to a `bag-of words' vector of its
%content. This
%gives a quite different and complementary representation of 
%blogs that plausibly reflects their political leaning. 
%A number of approaches can be used for the simple
%task of predicting leaning from the graph information
%(or feature information) individually. However, 
%given access to both sources, it is challenging to combine
%them in a principled fashion. 
%This situation exemplifies a natural statistical challenge: 
%the presence of additional 
%feature or covariate information on nodes of a graph is common in 
%applications of social and biological networks \cite{}. 

In this context, we introduce a simple statistical model
of complementary graph and high-dimensional covariate data that share latent cluster structure. This model is an intuitive
combination of two well-studied models in machine learning and
statistics: the \emph{stochastic block model}  
and the \emph{spiked covariance model} \cite{abbe2017community, holland1983stochastic,johnstone2004sparse}. 
We focus on the task of uncovering this latent structure and
make the following contributions:

\begin{description}
\item[Sharp thresholds:] 
We establish a sharp information-theoretic threshold for detecting 
the latent structure in this model. This threshold is based on 
non-rigorous, but powerful, techniques from statistical physics. 
\item[Rigorous validation:] We consider a certain `Gaussian' limit 
of the statistical model, which is of independent interest. In this 
limit, we rigorously establish the correct information-theoretic
threshold using novel Gaussian comparison inequalities. We further show convergence to the Gaussian limit predictions as the density of the graph diverges.   
\item[Algorithm:] We provide a simple, iterative 
algorithm for inference based on the belief propagation heuristic. 
For data generated from the model, we empirically demonstrate that 
the the algorithm achieves the conjectured information-theoretic 
threshold. 
\end{description}

The rest of the paper is organized as follows. 
The model and results are presented in Section
\ref{sec:results}. Further related work is discussed
in Section \ref{sec:related}.  The prediction of the threshold
from statistical physics techniques is presented
in \ref{sec:bp}, along with the algorithm.
While all proofs are 
presented in the appendix, we provide
an overview of the proofs of our rigorous results in Section 
\ref{sec:overview}.
 Finally,
we numerically validate the prediction in Section
\ref{sec:expt}. 

\section{Model and main results}\label{sec:results}

We will focus on the simple case where the $n$ objects
form two  
latent clusters of approximately
equal size, labeled
$+$ and $-$. Let $v\in\{\pm 1\}^n$ be the vector 
encoding this partitioning. Then, 
the observed data is a pair of matrices $(A^G, B)$, where $A^G$
is the adjacency matrix of the graph $G$ and $B \in \mathbb{R}^{p \times n} $ is the matrix
of covariate information. Each column $b_i$, $i\le n$ of matrix $B$ contains the covariate information about vertex $i$.  We use the following probabilistic
model: conditional on $v$, and a latent vector $u\sim\normal(0, I_p/p)$:
\begin{align}
\prob(A^G_{ij}=1)  &= 
\begin{cases}
c_{{\rm in}}/n &\text{ with probability } \, ,\\
c_{{\rm out}}/n &\text{ otherwise. } 
\end{cases} \label{eq:graphmodel} \\
b_i  &= \sqrt\frac{\mu}{n} v_i u + \frac{Z_i}{\sqrt{p}}, \label{eq:covariatemodel}
\end{align}
where $Z_i\in \reals^{p}$ has independent standard normal entries.
It is convenient to parametrize the edge probabilities by the average degree $d$ and the normalized degree separation $\lambda$:
\begin{align}
c_{{\rm in}} = d + \lambda \sqrt{d}\, ,\;\;\;\;\;
c_{{\rm out}} = d - \lambda \sqrt{d}\, .
\end{align}
Here  $d$, $\lambda$, $\mu$ are parameters
of the model which, for the sake of simplicity, we assume to be fixed
and known. In other words, two objects $i, j$ in the same
cluster or community are \emph{slightly more likely} 
to be connected than for objects $i, j'$ in
different clusters. Similarly, according to $\eqref{eq:covariatemodel}$, they have 
\emph{slightly positively} correlated feature vectors $b_i$, $b_j$,
while objects $i, j'$ in different clusters have negatively
correlated covariates $b_i, b_{j'}$. 

Note that this model is a combination of two
observation
models that have been extensively studied: the stochastic
block model and the spiked covariance model. The stochastic block
model has its roots in sociology literature \cite{holland1983stochastic}
and has witnessed a resurgence of interest from the computer science
and statistics community since the work of 
Decelle et al. \cite{decelle2011asymptotic}. This work focused
on the sparse setting where the graph as $O(n)$ edges and
conjectured,
using the non-rigorous cavity method, 
the following \emph{phase transition} phenomenon. This
was 
later established rigorously in a series of papers \cite{mossel2015reconstruction,mossel2013proof,massoulie2014community}.
\begin{theorem}[\cite{mossel2015reconstruction,mossel2013proof,massoulie2014community}]\label{thm:sbm}
Suppose $d >1$ is fixed. The graph $G$ is distinguishable with high probability from an Erd\"os-Renyi random
graph with average degree $d$ if and only if $\lambda \ge  1$. Moreover, if $\lambda > 1$,
there exists a polynomial-time computable estimate $\hv = \hv(A^G)\in\{\pm 1\}^n$  of the cluster assignment satisfying, almost surely:
\begin{align}
\liminf_{n\to \infty} \frac{|\<\hv, v\>|}{n} \ge \eps(\lambda) > 0.
\end{align}
\end{theorem}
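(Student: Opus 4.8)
The plan is to prove the two directions separately, since the clean threshold $\lambda = 1$ comes from two quite different phenomena. For the negative direction ($\lambda < 1 \Rightarrow$ indistinguishability) I would use the second-moment method together with small-subgraph conditioning, following \cite{mossel2015reconstruction}. Write $L_n = \de\prob_{\SBM}/\de\prob_{\mathrm{ER}}$ for the likelihood ratio between the planted model (with $v$ uniform and marginalized out) and the Erd\H{o}s--R\'enyi law $G(n,d/n)$. The goal is to show that $\{L_n\}$ is uniformly integrable, equivalently that $\E_{\mathrm{ER}}[L_n^2]$ stays bounded, precisely when $\lambda^2 < 1$. The standard computation reduces $\E[L_n^2]$ to a sum over pairs of label vectors weighted by a function of their overlap, governed by the moment generating function of an overlap random walk; the critical quantity is $\lambda^2$ because the edge-density contrast satisfies $(c_{\mathrm{in}} - c_{\mathrm{out}})^2 / \big(2(c_{\mathrm{in}} + c_{\mathrm{out}})\big) = \lambda^2 + o(1)$. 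One then identifies the limit of $L_n$ with an infinite product over cycle lengths $k$ of Poisson-type factors with parameters proportional to $(\lambda^2)^k/(2k)$, which defines an $L^2$ random variable iff $\sum_k (\lambda^2)^k/(2k) < \infty$, i.e. $\lambda < 1$. Contiguity of $\prob_{\SBM}$ and $\prob_{\mathrm{ER}}$ then follows from Le Cam's first lemma, and contiguity precludes any test that is correct with probability tending to one.

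For the positive direction I would split into (a) distinguishability and (b) weak recovery, both for $\lambda > 1$. For (a) a soft argument suffices: the number of $k$-cycles in $G$ converges to a Poisson law whose mean is strictly larger under $\SBM$ than under $G(n,d/n)$ once $\lambda > 1$ and $k$ is large, so a threshold test on a suitable combination of short-cycle counts distinguishes the two ensembles with probability tending to one. More robustly, one uses the non-backtracking matrix $B$ of $G$: its spectral radius is $(1+o(1))\sqrt{d}$ under $G(n,d/n)$, whereas under $\SBM$ with $\lambda > 1$ it has an isolated real eigenvalue near $\lambda\sqrt{d}$, separated from a bulk of radius $\sqrt{d}$. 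For (b) I would take the eigenvector associated with this second eigenvalue of $B$ and round its entry signs to a vector $\hv \in \{\pm 1\}^n$, following the non-backtracking spectral approach of \cite{massoulie2014community}; the resulting estimator satisfies $\liminf_n |\<\hv, v\>|/n \ge \eps(\lambda) > 0$. The correlation lower bound ultimately rests on the local weak convergence of $(G,v)$ to a Poisson($d$) Galton--Watson tree carrying a broadcast (Ising) process with parameters $(c_{\mathrm{in}}, c_{\mathrm{out}})$, together with the Kesten--Stigum bound \cite{mossel2013proof}, which guarantees that the root label remains nontrivially correlated with deep leaves exactly when the second eigenvalue of the broadcast matrix exceeds $1/\sqrt{d}$, i.e. $\lambda > 1$.

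The main obstacle is part (b). Establishing the spectral separation for the non-backtracking matrix of a sparse random graph is delicate because $B$ is a large non-normal operator and $G$ has $\Theta(n)$ vertices of logarithmically growing degree, so naive concentration fails; the argument requires a careful combinatorial expansion of traces of powers of $B$ (counting non-backtracking and tangle-free walks), the Ihara--Bass identity to pass to a $2n\times 2n$ deformed adjacency operator, and a coupling of neighborhoods of $G$ with the Galton--Watson tree to import the Kesten--Stigum computation. The alternative route of \cite{mossel2015reconstruction,mossel2013proof} replaces (b) by first proving non-vanishing reconstruction for the broadcast process on the limiting tree and then aggregating these local estimates across $G$; this avoids non-normal matrices but instead demands a quantitative local-weak-convergence statement and a careful treatment of the dependence between overlapping neighborhoods. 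Either way, converting the clean threshold $\lambda^2 = 1$ of the tree broadcast problem into a statement about a single random graph instance is where essentially all the difficulty lies; the two endpoints used here — contiguity below threshold and cycle-count distinguishability above it — are comparatively routine.
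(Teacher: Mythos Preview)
The paper does not prove this theorem; it is quoted as background from \cite{mossel2015reconstruction,mossel2013proof,massoulie2014community} with no argument given. Your outline is a faithful summary of the strategies in those references: second-moment/contiguity below threshold, cycle statistics or non-backtracking spectra for distinguishability above threshold, and Kesten--Stigum plus local weak convergence (or the non-backtracking eigenvector) for weak recovery. There is nothing in the present paper to compare against, and your sketch is consistent with the cited literature.
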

In other words, given the graph $G$, it is possible to non-trivially
estimate the latent clustering $v$ if, and only if, $\lambda > 1$. 

The covariate model \eqref{eq:covariatemodel} was
proposed by Johnstone and Lu \cite{johnstone2004sparse} and has been extensively studied in statistics and random matrix 
theory. The weak recovery threshold was characterized by a number of authors, including Baik et al \cite{baik2005phase}, Paul \cite{paul2007asymptotics} and Onatski et al \cite{onatski2013asymptotic}.
\begin{theorem}[\cite{baik2005phase,paul2007asymptotics,onatski2013asymptotic}] \label{thm:baik}
Let $\hv_1$ be the principal eigenvector
of $B^\sT B$, where $\hv_1$ is normalized so that $\norm{\hv_1}^2 = n$.
Suppose that $p, n\to\infty$  with $p/n\to 1/\gamma \in (0, \infty)$. Then
$ \liminf_{n\to\infty} |\<\hv_1, v\>|/n > 0$ if and only if $\mu > \sqrt{\gamma}$.
Moreover, if $\mu < \sqrt{\gamma}$, no such estimator exists. 
\end{theorem}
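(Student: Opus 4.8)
The plan is to observe that the covariate problem is, after conditioning on $v$, \emph{exactly} a spiked Wishart (sample covariance) model, and then to invoke the classical Baik--Ben Arous--P\'ech\'e (BBP) phase transition: the positive part of the claim is the eigenvalue/eigenvector analysis of Baik--Silverstein and Paul, and the converse is a contiguity argument.

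\emph{Exact reduction.} Conditionally on $v$, the matrix $B$ is matrix-Gaussian, because $b_i=\sqrt{\mu/n}\,v_iu+Z_i/\sqrt p$ is a linear image of $(u,Z_i)$ with $u\sim\normal(0,I_p/p)$ independent of the $Z_i$'s. A direct computation gives $\E[b_ib_j^\sT\mid v]=\tfrac1p\big(\delta_{ij}+\tfrac\mu n v_iv_j\big)I_p$; equivalently the $p$ rows of $\sqrt p\,B$ are i.i.d.\ $\normal(0,\Sigma_v)$ with $\Sigma_v=I_n+\mu\,vv^\sT/n$, so $p\,B^\sT B$ is a Wishart matrix from $p$ samples in dimension $n$ with population covariance $\Sigma_v$. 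Since $\|v\|^2=n$, the spike direction is the unit vector $w=v/\sqrt n$, the population spike eigenvalue is $1+\mu$, and the aspect ratio is $n/p\to\gamma$; moreover $|\langle\hv_1,v\rangle|/n=|\langle\hv_1/\sqrt n,w\rangle|$ is exactly the overlap of the leading sample eigenvector with the planted direction. Thus the statement is literally the BBP phase transition for a spike of strength $\mu$ at ratio $\gamma$.

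\emph{Positive direction ($\mu>\sqrt\gamma$).} Invoke the spiked-Wishart analysis: when the population spike eigenvalue $1+\mu$ exceeds $1+\sqrt\gamma$, the top sample eigenvalue detaches from the Marchenko--Pastur edge $(1+\sqrt\gamma)^2$ and, by Paul's eigenvector theorem, $\big|\langle\hv_1/\sqrt n,w\rangle\big|^2\to(1-\gamma/\mu^2)/(1+\gamma/\mu)>0$ almost surely, which is $\liminf_n|\langle\hv_1,v\rangle|/n>0$. For a self-contained proof one can instead write $p\,B^\sT B=\sum_{k\le p} r_kr_k^\sT$, apply the Sherman--Morrison/secular-equation identity to $\langle w,(p\,B^\sT B-zI)^{-1}w\rangle$, locate the isolated pole outside the bulk via the Marchenko--Pastur Stieltjes transform, and read off its residue as the squared overlap.

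\emph{Converse ($\mu<\sqrt\gamma$) and main obstacle.} Below threshold the failure of PCA itself is classical, but the theorem asserts that \emph{no} estimator $\hv(B)$ correlates with $v$, an information-theoretic claim, which I would establish by contiguity. Let $P_\mu$ be the joint law of $(B,v)$ and $Q$ the law where $v\sim\Unif(\{\pm1\}^n)$ and independently $B=Z/\sqrt p$; the two have the same $v$-marginal. The likelihood ratio $dP_\mu/dQ$ is an explicit Gaussian functional, and its second moment under $Q$, after integrating out the Gaussians, equals $\E_{v,v'}\!\big[f\big(\mu,\langle v,v'\rangle/n\big)\big]$ for an explicit $f$ and independent uniform $v,v'$; since $\langle v,v'\rangle/\sqrt n$ is asymptotically $\normal(0,1)$, this stays bounded exactly when $\mu<\sqrt\gamma$. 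Bounded $\chi^2$-divergence gives that $P_\mu$ is contiguous with respect to $Q$; since under $Q$ every $\hv(B)$ is independent of $v$ and so has $|\langle\hv,v\rangle|/n\to0$, and this quantity is bounded by $1$, contiguity forces $\E_{P_\mu}|\langle\hv,v\rangle|/n\to0$ for all estimators. The hard point is that, as is well known for spiked Wishart (unlike Wigner) ensembles, the plain second moment can diverge slightly before the true threshold, so one may need to condition on the $O(1)$-boundedness of $\|B^\sT B\|$ (or restrict to near-orthogonal pairs $v,v'$) and run a conditional second-moment argument to push contiguity all the way to $\mu=\sqrt\gamma$ — matching this to the eigenvalue threshold from the positive part is what makes the characterization sharp.
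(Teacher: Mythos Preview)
The paper does not prove Theorem~\ref{thm:baik}: it is quoted from \cite{baik2005phase,paul2007asymptotics,onatski2013asymptotic} as background, so there is no in-paper proof to compare your proposal against. Your outline is essentially the standard route through those references, and the reduction you give is correct: conditioning on $v$, the $p$ rows of $\sqrt{p}\,B$ are i.i.d.\ $\normal(0,I_n+\mu vv^\sT/n)$, so $pB^\sT B$ is spiked Wishart with spike $1+\mu$ at aspect ratio $n/p\to\gamma$, and the positive direction is exactly the BBP/Paul eigenvalue--eigenvector transition at $\mu=\sqrt\gamma$.

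One genuine wrinkle in your converse: you set up $P_\mu$ and $Q$ as \emph{joint} laws on $(B,v)$, but the second-moment expression $\E_{v,v'}[f(\mu,\langle v,v'\rangle/n)]$ you write down is the second moment of the \emph{marginal} likelihood ratio $\de P_\mu^B/\de P_0^B=\E_v[p_\mu(B\mid v)/p_0(B)]$, not of the joint one. The joint ratio $\de P_\mu/\de Q=p_\mu(B\mid v)/p_0(B)$ has second moment $\E_v\!\int p_\mu(B\mid v)^2/p_0(B)\,\de B$, with no independent copy $v'$, and for a fixed spike direction this quantity typically diverges, so joint contiguity does not follow from your calculation. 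The clean fix is to run the $\E_{v,v'}$ second-moment bound for the \emph{marginal} law of $B$ (this is what \cite{onatski2013asymptotic} does, and what the present paper does for its Gaussian model in Lemma~\ref{lem:detection_lowerbdd}), and then add a short separate step converting bounded $\chi^2$/contiguity of $B$-laws into ``no estimator correlates with $v$''; your final sentence about needing a conditional second moment to reach the exact threshold in the Wishart case is correct and is precisely the refinement carried out in the cited works.
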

In other words, this theorem shows that it is possible to estimate $v$ nontrivally solely
from the covariates using, in fact, a spectral method if, and only if
$\mu > \sqrt{\gamma}$. 

Our first result is the following prediction that establishes
the analogous threshold prediction that smoothly interpolates
between Theorems \ref{thm:sbm} and \ref{thm:baik}. 

\begin{claim}[Cavity prediction]  \label{claim:cavity} Given $A^G, B$ as in
Eqs.\eqref{eq:graphmodel}, \eqref{eq:covariatemodel}, and
assume that $n, p\to \infty$ with $p/n \to 1/\gamma \in (0, \infty)$. Then there
exists an estimator $\hv = \hv(A^G, B) \in \{\pm 1\}^n$ 
so that $\liminf |\<\hv, v\>|/n$ is bounded away 
from $0$ \emph{ if and only if} 
\begin{align}
\lambda^2 +  \frac{\mu^2}{\gamma} > 1\, . 
\end{align}
\end{claim}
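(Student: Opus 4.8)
The plan is to establish Claim~\ref{claim:cavity} through the cavity (belief-propagation) heuristic, which yields both the achievability and the impossibility halves of the threshold, and then --- as in Section~\ref{sec:overview} --- to make it rigorous in the Gaussian limit using Gaussian comparison inequalities. The starting point is the Bayes posterior of $(v,u)$ given $(A^G,B)$: up to normalization it is a product of sparse pairwise factors from the SBM likelihood on the edges and non-edges of $G$, times, for each vertex $i$, a Gaussian factor $\exp\!\big(-\tfrac{p}{2}\,\|b_i-\sqrt{\mu/n}\,v_i u\|^2\big)$ coupling $v_i$ to the global vector $u\sim\normal(0,I_p/p)$. Two structural features drive the analysis: the graph part is locally tree-like with Poisson$(d)$ degrees, so BP is the natural heuristic there, whereas the covariate part is dense with weak $O(1/\sqrt n)$ interactions, so the correct heuristic for that layer is its Gaussian/TAP reduction, i.e.\ approximate message passing (AMP).

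The key step is to eliminate the nuisance vector $u$. Running BP on the covariate factors shows that the only feedback $u$ sends to the labels is a scalar ``side channel'': vertex $i$ effectively observes $y_i\propto\langle b_i,\widehat u\rangle$, where $\widehat u\propto\sum_j\widehat v_j b_j/\sqrt{np}$ is the running AMP estimate of $u$ built from the current label marginals with the appropriate Onsager correction. This closes the iteration into graph BP for $v$ augmented by a local field from the side channel, coupled to a one-dimensional recursion for the quality $m_u=\langle\widehat u,u\rangle$ of $\widehat u$.

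Next I would pass to the $n\to\infty$ state-evolution description of this iteration, with order parameters the label overlap $m_v\asymp\langle\widehat v,v\rangle/n$ and $m_u$. The replica-symmetric fixed-point equations couple the two, and the uninformative point $m_v=m_u=0$ is always a solution; linearizing the state-evolution map there, its Jacobian decomposes \emph{additively} --- since conditionally on $v$ the observations $A^G$ and $B$ are independent, their infinitesimal contributions to the posterior add. The graph block is the Kesten--Stigum operator of the sparse SBM, with critical multiplier $\big(\tfrac{c_{\mathrm{in}}-c_{\mathrm{out}}}{2}\big)^2/\big(\tfrac{c_{\mathrm{in}}+c_{\mathrm{out}}}{2}\big)=\lambda^2$; the covariate block is the round trip $m_v\to m_u\to m_v$ through the weak spiked channel, whose multiplier works out to $\mu^2/\gamma$ under the scaling $p/n\to1/\gamma$. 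Hence the trivial fixed point is locally stable --- and, one then checks, globally attracting, so $m_v\to0$ from every initialization, and under the cavity hypothesis that BP is Bayes-optimal no estimator correlates with $v$ --- precisely when $\lambda^2+\mu^2/\gamma<1$; when $\lambda^2+\mu^2/\gamma>1$ a stable informative fixed point bifurcates off and the corresponding BP/AMP estimator achieves $\liminf|\langle\widehat v,v\rangle|/n>0$. As a consistency check, $\mu=0$ reproduces Theorem~\ref{thm:sbm} and $\lambda=0$ reproduces Theorem~\ref{thm:baik}.

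To make the impossibility half rigorous when $\lambda^2+\mu^2/\gamma<1$, I would show that the planted law of $(A^G,B)$ is mutually contiguous with the null law --- an Erd\H{o}s--R\'enyi$(d/n)$ graph independent of a pure-noise Gaussian $B$ --- which forbids any estimator correlated with $v$; conditioning on $v$ and integrating out $u$, the likelihood ratio factorizes into a graph part and a covariate part, and its second moment is a product of two generating functions that converge jointly exactly under $\lambda^2+\mu^2/\gamma<1$. I expect the main obstacle to be twofold, and to lie exactly where the physics heuristic needs shoring up. First, in the sparse regime the graph second moment diverges through short cycles, so the plain second-moment method must be upgraded to small-subgraph conditioning (in the style of Mossel--Neeman--Sly and Banks et al.), run jointly with the covariate factor. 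Second, the global vector $u$ destroys the local tree structure that BP relies on, so the BP/AMP hybrid and its Onsager term need justification --- which is precisely why the rigorous treatment retreats to the ``Gaussian'' limit in which $A^G$ is replaced by a rank-one-plus-noise matrix: there both directions follow from a Gaussian interpolation for the free energy plus Gordon-type comparison inequalities, after which a separate argument identifies these Gaussian-limit formulas as the $d\to\infty$ limits of the sparse predictions.
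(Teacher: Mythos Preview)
Your cavity derivation is essentially the paper's (Section~\ref{sec:bp}, Lemma~\ref{lem:densityevolutioninstability}): BP/AMP on the joint posterior, density evolution, linearization at the uninformative fixed point. The one substantive difference is in the linearization itself. You eliminate $m_u$ and present the overlap recursion as the scalar $m_v\mapsto(\lambda^2+\mu^2/\gamma)\,m_v$, arguing that the graph and covariate channels contribute additively. The paper instead keeps the coupled pair $(\E[V\bar{\eta}],\E[U\bar{m}])$ under synchronous updates and obtains the $2\times2$ Jacobian $J=\left(\begin{smallmatrix}\lambda^2&\mu/\gamma\\ \mu&0\end{smallmatrix}\right)$, whose spectral radius is $\tfrac12\big(\lambda^2+\sqrt{\lambda^4+4\mu^2/\gamma}\big)$, not $\lambda^2+\mu^2/\gamma$. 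The two thresholds coincide because $\det(I-J)=1-\lambda^2-\mu^2/\gamma$, but your additive picture is only literally the linearized map if the covariate round-trip is run as an inner loop between successive graph steps; under the paper's parallel schedule it is the eigenvalue, not the sum, that governs the instability. On the rigorous side your plan matches the paper's --- second-moment contiguity below threshold, Gaussian comparison above --- except that the paper does not attempt the sparse-graph contiguity via small-subgraph conditioning that you sketch: it proves both directions only for the Gaussian surrogate (Theorem~\ref{thm:gaussian}, Lemma~\ref{lemma:upper_bound}) and transfers to the graph via Lindeberg interpolation (Theorem~\ref{thm:Universality}) in the large-$d$ limit, leaving the finite-$d$ statement of Claim~\ref{claim:cavity} as a non-rigorous prediction.
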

We obtain this prediction via the cavity method, a 
powerful technique from the statistical physics of mean
field models \cite{MezardMontanari}. This derivation is 
outlined in Section \ref{sec:bp}. Theorems \ref{thm:sbm} and \ref{thm:baik} confirm this prediction rigorously in the corner cases, in which either $\lambda$ or $\mu$ vanishes, using sophisticated tools from random matrix theory and sparse random graphs. 

Our main result confirms rigorously this claim in the limit of large degrees.
\begin{theorem}\label{thm:graph}
Suppose $v$ is uniformly distributed in $\{\pm1\}^n$ and we observe
$A^G, B$ as in \eqref{eq:graphmodel}, \eqref{eq:covariatemodel}. 
Consider the limit $p, n \to \infty$ with $p/n\to 1/\gamma$. Then we have, for
some $\eps(\lambda,\mu)>0$ independent of $d$, 
\begin{align}
\liminf_{n\to\infty} \sup_{\hv(\,\cdot\,)}\frac{|\<\hv(A^G,B),v\>|}{n} & \ge 
\eps(\lambda,\mu)-o_d(1) & \mbox{if $\lambda^2+\mu^2/\gamma>1$,}\\
\limsup_{n\to\infty} \sup_{\hv(\,\cdot\,)}\frac{|\<\hv(A^G,B),v\>|}{n} & =
o_d(1)& \mbox{if $\lambda^2+\mu^2/\gamma<1$.}
\end{align}
Here the limits hold in probability, the supremum is over estimators $\hv: (A^G,B)\mapsto \hv(A^G,B)\in\reals^n$, with $\|\hv(A^G,B)\|_2 = \sqrt{n}$. Here $o_d(1)$ indicates a term independent of $n$ which tends to zero as $d \to \infty$. 
\end{theorem}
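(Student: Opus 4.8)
The plan is a two-step reduction: first replace the sparse graph $A^G$ by a Gaussian surrogate of matching signal-to-noise ratio, and then solve the resulting Gaussian model exactly. Define the \emph{Gaussian model} in which, conditionally on $v$ (uniform on $\{\pm1\}^n$) and $u\sim\normal(0,I_p/p)$, one observes
\[
\boldsymbol{Y} \;=\; \frac{\lambda}{n}\, v\, v^\sT \;+\; \frac{1}{\sqrt n}\,\boldsymbol{W}\,,\qquad
b_i \;=\; \sqrt{\tfrac{\mu}{n}}\, v_i\, u \;+\; \tfrac{1}{\sqrt p}\, Z_i\,,
\]
where $\boldsymbol{W}$ is a GOE matrix (symmetric, off-diagonal variance $1/n$) and the $Z_i$ are standard Gaussian; that is, the covariate channel is unchanged and the sparse stochastic block model is replaced by its rank-one spiked-Wigner analogue. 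I would establish two facts. \emph{(a) Gaussian comparison:} the best achievable overlap in the graph model and in the Gaussian model agree up to an $o_n(1)+o_d(1)$ error. \emph{(b) Exact threshold for the Gaussian model:} there is a constant $\eps(\lambda,\mu)$, not depending on $d$, with $\sup_{\hv}|\langle\hv,v\rangle|/n \to \eps(\lambda,\mu)>0$ if $\lambda^2+\mu^2/\gamma>1$ and $\to 0$ if $\lambda^2+\mu^2/\gamma<1$. Combining (a) and (b) yields the theorem, with the $\pm o_d(1)$ slack coming entirely from (a).

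For step (a), the content is that when $d$ is large every edge of $G$ carries only $O(1/n)$ bits, so a Lindeberg/interpolation swap makes the whole graph indistinguishable from $\boldsymbol{Y}$ at leading order. I would compare mutual informations: bound $|I(v;A^G,B)-I(v;\boldsymbol{Y},B)|$ by interpolating from one channel to the other (edge by edge, or through one continuous interpolation of a combined Gaussian channel), Taylor-expanding the log-likelihood and checking that the per-edge discrepancy is $O(d^{-1/2}n^{-2})$ in expectation, so that the $\binom{n}{2}$ contributions sum to $O(n\,d^{-1/2})$. Dividing by $n$ and invoking the I-MMSE identity — which ties $\tfrac1n I$ to the matrix minimum mean-squared error for $v v^\sT/n$, which in turn controls the best overlap $\sup_{\hv}|\langle\hv,v\rangle|/n$ — transfers the overlap between the two models with an $o_d(1)$ loss. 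The covariate part is literally identical in the two models, so only the graph channel is swapped; some care is needed to verify that the covariate information is carried along unchanged throughout the interpolation.

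For step (b), the impossibility direction follows from a (conditional) second-moment computation for the likelihood ratio $L_n$ between the planted law and the null law $\mathbb{P}^{\mathsf{0}}$ (i.e.\ $\lambda=\mu=0$). Since the Wigner and covariate channels are conditionally independent given $v$, $\mathbb{E}_{\mathbb{P}^{\mathsf{0}}}[L_n^2]$ factorizes; the Wigner factor is $\exp\!\big(\tfrac{\lambda^2}{2}\langle v,v'\rangle^2/n\big)$ and the covariate factor, after the Gaussian integrals and the attendant $\log\det$ correction, is $(1-\mu^2\langle v,v'\rangle^2/n^2)^{-p/2}\approx\exp\!\big(\tfrac{\mu^2}{2\gamma}\langle v,v'\rangle^2/n\big)$, so that $\mathbb{E}_{\mathbb{P}^{\mathsf{0}}}[L_n^2]\approx\mathbb{E}_{v,v'}\exp\!\big(\tfrac12(\lambda^2+\mu^2/\gamma)\langle v,v'\rangle^2/n\big)$ for independent uniform $v,v'$; since $\langle v,v'\rangle/\sqrt n\Rightarrow\normal(0,1)$ this is bounded precisely when $\lambda^2+\mu^2/\gamma<1$ (truncating to $|\langle v,v'\rangle|\le n^{0.51}$ disposes of rare large-overlap pairs if needed). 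A bounded second moment gives contiguity, and, together with a free-energy bound obtained from the Gaussian comparison inequalities, forces the normalized mutual information to its trivial value, hence the matrix MMSE to $1$, hence the overlap to $0$. For achievability ($\lambda^2+\mu^2/\gamma>1$) I would run an approximate message passing iteration on $(\boldsymbol{Y},B)$ jointly; its state evolution is a scalar recursion whose effective signal-to-noise ratio equals $\lambda^2+\mu^2/\gamma$ — the Fisher informations of the two conditionally independent channels add — and for the $\{\pm1\}$ prior this recursion has a stable fixed point bounded away from the uninformative one exactly when $\lambda^2+\mu^2/\gamma>1$, the corresponding AMP estimate achieving overlap $\eps(\lambda,\mu)>0$ read off from that fixed point. (A spectral estimator using the top eigenvector of an optimally weighted combination $\xi\boldsymbol{Y}+\eta(\tfrac1p B^\sT B-I_n)$ crosses the same threshold by a rank-one deformation (BBP-type) argument and already suffices for the lower bound in the statement.)

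I expect the crux to be step (b) for the Gaussian model — in particular, pinning the limiting free energy of a model carrying \emph{both} a rank-one Wigner spike and a rank-one Wishart spike along the same direction $v$, sharply enough that the converse yields \emph{zero} overlap below the threshold and that the AMP state evolution is provably attracted to the claimed fixed point above it. This is where the novel Gaussian comparison (Slepian/Gordon-type) inequalities do the work: after a Hubbard--Stratonovich linearization of the quadratic term $(\sum_i x_i v_i)^2/n$ in the posterior Hamiltonian, the log-partition function becomes a min--max of a Gaussian process to which a comparison inequality applies, producing matching upper and lower bounds on the free energy. A secondary difficulty is keeping the cross-interaction between the two information sources under control uniformly along the interpolation in step (a); since the covariate channel is already Gaussian, that part reduces to a routine Lindeberg estimate.
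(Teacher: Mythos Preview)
Your two-step architecture --- (a) Lindeberg/interpolation to the Gaussian model together with the I-MMSE/overlap dictionary, (b) solve the Gaussian model sharply --- matches the paper exactly, and your treatment of (a) and of the impossibility half of (b) (second moment $\Rightarrow$ contiguity $\Rightarrow$ no nontrivial overlap) is the same as theirs.

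The divergence is in the \emph{achievability} half of step (b). The paper does \emph{not} go through AMP state evolution (that route is left non-rigorous, as a ``cavity prediction'') nor through a free-energy computation after a Hubbard--Stratonovich transform. Instead it analyzes the purely variational test statistic
\[
T\;=\;\sup_{\|x\|=\|y\|=1}\Big[\,\langle x,Ax\rangle\;+\;b_*\,\langle x,By\rangle\,\Big],\qquad b_*=\tfrac{2\mu}{\lambda\gamma},
\]
and pins its value by a \emph{two-sided} Sudakov--Fernique comparison: an upper bound by replacing the Gaussian process with a separable linear process $\langle x,\tilde g_x\rangle+\langle y,\tilde g_y\rangle$, and --- this is the novel part --- a matching lower bound by replacing it with two \emph{independent spiked-Wigner} processes $\tfrac12\langle x,\widetilde W_x x\rangle+\tfrac12\langle y,\widetilde W_y y\rangle$. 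The cross term $\langle x,v_0\rangle\langle y,u_0\rangle$ is then decoupled via the scalar inequality $2\sqrt{ab}=\min_{t>0}(at+b/t)$, reducing everything to rank-one deformed GOE eigenvalue/eigenvector asymptotics. That sandwich gives both the null value of $T$ and a strict gap above threshold, and a restricted-overlap version of the same bound shows the maximizer $\hat x$ has nontrivial correlation with $v$.

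So your spectral backup is closer in spirit than your AMP route, but the object the paper studies is the \emph{bilinear} form in $(x,y)$ rather than a BBP analysis of a linear combination $\xi Y+\eta(\tfrac1p B^{\sT}B-I_n)$; the latter mixes Wigner and Wishart noise and is not directly covered by the standard rank-one-deformation toolkit, which is precisely why the paper resorts to the two-process comparison. Your Hubbard--Stratonovich/free-energy picture would ultimately need to reproduce the same comparison inequalities, so it is not a shortcut; and rigorous AMP state evolution for the coupled Wigner-plus-rectangular model, while plausible, is a separate technical project that the paper deliberately avoids.
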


In order to establish this result, we consider a modification
of the original model in \eqref{eq:graphmodel}, \eqref{eq:covariatemodel}, which is of independent interest. Suppose, 
conditional on $v\in \{\pm 1\}$ and the latent vector $u$
we observe $(A, B)$ as follows:
\begin{align}
A_{ij} & \sim \begin{cases}
\normal(\lambda v_i v_j/n, 1/n) &\text{ if } i < j\\
\normal(\lambda v_i v_j/n, 2/n )&\text{ if } i = j,
\end{cases} \label{eq:gaussiangraphmodel}\\
B_{ai} & \sim \normal( \sqrt{\mu} v_i u_a/\sqrt{n}, 1/p) \label{eq:gaussiancovariatemodel}.
\end{align}
This model differs from \eqref{eq:graphmodel}, in that the graph
observation $A^G$ is replaced by the observation $A$ which
is equal to $\lambda v v^{\sT}/n$, corrupted by Gaussian noise. 
This model generalizes so called `rank-one deformations' of random matrices
\cite{peche2006largest,knowles2013isotropic,benaych2011eigenvalues}, as well as the $\integers_2$ synchronization
model \cite{abbe2014decoding,cucuringu2015synchronization}. 

Our main motivation for introducing the Gaussian observation model is that 
it captures the large-degree  behavior of the original graph model.
The next result formalizes this intuition: its proof is an immediate generalization 
of the Lindeberg interpolation method of \cite{deshpande2016asymptotic}.
\begin{theorem}\label{thm:Universality}
Suppose $v\in \{\pm 1\}^n$ is uniformly random, and $u$ is independent. 
We denote by $I(v ; A^G, B)$ the \emph{mutual information} of
the latent random variables $v$ and the observable data $A^G, B$.
For all $\lambda, \mu$: 
 we have that:
\begin{align}
\lim_{d\to\infty}\limsup_{n\to\infty} \frac{1}{n} |I(v ; A^G, B) - I(v;  A, B)| &= 0, \\
\lim_{d\to\infty} \limsup_{n\to \infty} 
\Big|
\frac{1 }{n} \frac{\d I(v; A^G, B)}{\d (\lambda^2)} -
\frac{1}{4}\MMSE(v; A^G, B)\Big| & = 0,
\end{align}
where $\MMSE(v; A^G, B) = n^{-2}\E\{\lVert vv^\sT - \E\{vv^\sT | A^G, B\} \rVert_F^2\}$.
\end{theorem}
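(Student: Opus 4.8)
The plan is to prove Theorem~\ref{thm:Universality} by a Lindeberg-type interpolation argument, following the scheme of \cite{deshpande2016asymptotic}. The covariate part $B$ is identical in the two models (compare \eqref{eq:covariatemodel} with $b_i = \sqrt{\mu/n}\, v_i u + Z_i/\sqrt p$ and \eqref{eq:gaussiancovariatemodel}), so the whole argument concerns only the graph observation: replacing the sparse Bernoulli-type matrix $A^G$ by the Gaussian matrix $A$ of \eqref{eq:gaussiangraphmodel}, conditionally on $(v,u)$, and arguing that this does not change the (normalized) mutual information or its derivative in $\lambda^2$ in the large-$d$ limit. First I would record the elementary but key observation that, conditional on $v$, the entries $\{A^G_{ij}\}_{i<j}$ are independent, with $A^G_{ij}\sim\Ber\big((d+\lambda\sqrt d\, v_iv_j)/n\big)$; after centering and scaling, $\widetilde A_{ij} := (A^G_{ij} - d/n)$ has conditional mean $\lambda\sqrt d\, v_iv_j/n$ and conditional variance $\Theta(d/n)$, matching the first two conditional moments of the Gaussian model's entry $\sqrt d\,A_{ij}$ up to the relevant normalization; the Bernoulli entry has a bounded third absolute moment of order $d/n$ as well. (Subtracting the known constant $d/n$ from every off-diagonal entry is an invertible transformation of the data, hence changes no mutual information; I would do this as a preprocessing step so the two models are genuinely comparable.)

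The core of the proof is a Gaussian interpolation. I would introduce, for $t\in[0,1]$, an interpolating observation $A^{(t)}$ whose $(i,j)$ entry is a convex-combination-of-noises coupling, e.g. $A^{(t)}_{ij} = \lambda v_iv_j/n + \sqrt t\, W_{ij}^{\mathrm{Gauss}} + \sqrt{1-t}\, W_{ij}^{\mathrm{Bern}}$ in the appropriately scaled variables, so that $t=1$ recovers (the scaled, centered version of) the sparse graph model and $t=0$ recovers the Gaussian model. Writing $\psi(t) := \frac1n I(v; A^{(t)}, B)$, I would differentiate in $t$ and use the I-MMSE / Gaussian-channel identities together with a third-order Taylor expansion of the single-entry contribution around its first two moments; the mismatch at each of the $\binom n2$ edges is controlled by the third absolute moment, which is $O(d/n)$, so summing over edges gives a bound of order $\sqrt d \cdot n^2 \cdot (d/n)^{3/2}/n = O(d^{?}/\sqrt n)$ — more carefully, the standard bookkeeping yields $|\psi(1)-\psi(0)| = O(\,\mathrm{poly}(d)/\sqrt n\,)$, which vanishes as $n\to\infty$ for each fixed $d$, and then trivially as $d\to\infty$. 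This gives the first display. For the second display I would apply the same interpolation to the derivative $\frac{\d}{\d(\lambda^2)}\big[\frac1n I(v;A^{(t)},B)\big]$, using the I-MMSE relation $\frac{\d}{\d(\lambda^2)}\frac1n I = \frac14\MMSE(v;\cdot)$ that holds exactly in the Gaussian model (from \eqref{eq:gaussiangraphmodel} the signal-to-noise in the rank-one part is $\lambda^2$, giving the factor $1/4$ after accounting for the symmetric/diagonal scaling), and transferring it to the graph model up to an $o_d(1)$ error via the same moment-matching estimate applied to the two-point function $\E\{v_iv_j \,|\, \cdot\}$ rather than to $I$ itself.

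The main obstacle I anticipate is making the third-moment/Lindeberg bound uniform enough: one must bound derivatives of the per-edge mutual-information contribution (equivalently, of $\log$ of a single-entry likelihood ratio) uniformly along the interpolation path, and the Bernoulli channel with success probability $\Theta(d/n)$ is degenerate as $n\to\infty$, so the naive ``bounded third derivative of $\log$'' estimate must be replaced by a more careful argument exploiting that the relevant quantities are averaged and that the posterior mean $\E\{v_iv_j|\cdot\}$ is bounded in $[-1,1]$. This is precisely the technical content of \cite{deshpande2016asymptotic}, so I would set up the statement so that their Lemmas apply essentially verbatim, the only new ingredient being the presence of the extra (fixed) side information $B$, which is innocuous because it is independent of the graph channel given $(v,u)$ and can be carried along as conditioning throughout. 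A secondary, purely bookkeeping obstacle is the diagonal term: the Gaussian model \eqref{eq:gaussiangraphmodel} explicitly puts variance $2/n$ on the diagonal to mimic a GOE normalization, while the graph has no self-loops; since there are only $n$ such entries, their total contribution to $\frac1n I$ is $O(1/n)$ and can be absorbed into the error term, but I would flag it so the constants are honest.
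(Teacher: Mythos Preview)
Your high-level plan---Lindeberg interpolation following \cite{deshpande2016asymptotic}, carrying $B$ along as fixed side information---is exactly what the paper invokes (it gives no proof beyond saying the result is ``an immediate generalization'' of that reference). But there is a genuine gap in your scaling analysis, and it matters.

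You claim the Lindeberg remainder is $O(\mathrm{poly}(d)/\sqrt n)$ and hence vanishes as $n\to\infty$ for each fixed $d$. This is incorrect, and it contradicts the form of the theorem: the statement takes $\lim_{d\to\infty}\limsup_{n\to\infty}$ precisely because the inner $\limsup$ is \emph{not} zero for finite $d$. Concretely, after centering and rescaling so that $\widetilde A^G_{ij}=(A^G_{ij}-d/n)/\sqrt d$ matches the Gaussian $A_{ij}$ in mean and variance (both $\Theta(1/n)$), its third absolute moment is $\Theta\big(1/(n\sqrt d)\big)$, not $\Theta(n^{-3/2})$. In the free-energy Hamiltonian each entry couples to $v_iv_j$ with an $O(1)$ coefficient, and the third derivative of $\tfrac1n\log Z$ in a single entry is $O(1/n)$; summing over $\binom n2$ entries gives a remainder of order $n^2\cdot \tfrac{1}{n\sqrt d}\cdot\tfrac1n = O(\lambda^3/\sqrt d)$, independent of $n$. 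This is why the theorem needs $d\to\infty$, and equally why the approximate I-MMSE identity in the second display cannot hold exactly at fixed $d$: the Bernoulli channel simply does not satisfy the Gaussian I-MMSE relation.

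A second issue is the interpolation you wrote down. Defining $A^{(t)}=\lambda vv^{\sT}/n+\sqrt t\,W^{\mathrm{Gauss}}+\sqrt{1-t}\,W^{\mathrm{Bern}}$ gives a single channel whose noise law at intermediate $t$ is a convolution of a Gaussian with a scaled centered Bernoulli; $I(v;A^{(t)},B)$ is then not cleanly differentiable in $t$, and this is not how \cite{deshpande2016asymptotic} proceeds. There one first expands the graph log-likelihood $\sum_{(i,j)} A^G_{ij}\log(1+\lambda v_iv_j/\sqrt d)+\cdots$ to exhibit an effective Hamiltonian of the Gaussian form in the rescaled adjacency---this expansion is where the leading $O(\lambda^3/\sqrt d)$ error first enters---and only then applies Lindeberg by swapping entries of the rescaled Bernoulli for Gaussians inside a \emph{fixed} Hamiltonian (or, equivalently, uses a two-channel interpolation where Gaussian and graph observations are supplied simultaneously with strengths $\sqrt t$ and $\sqrt{1-t}$, not summed). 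Either route produces the $O(1/\sqrt d)$ remainder above. Your remarks about conditioning on $B$ throughout and about the diagonal contributing $O(1/n)$ are correct and carry over unchanged.
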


For the Gaussian observation model \eqref{eq:gaussiangraphmodel}, \eqref{eq:gaussiancovariatemodel} we can establish a precise weak recovery threshold,
which is the main technical novelty of this paper.
\begin{theorem}\label{thm:gaussian}
Suppose $v$ is uniformly distributed in $\{\pm1\}^n$ and we observe
$A, B$ as in \eqref{eq:gaussiangraphmodel}, \eqref{eq:gaussiancovariatemodel}. Consider the limit $p, n \to \infty$ with $p/n\to 1/\gamma$.
\begin{enumerate}
\item If $\lambda^2 + \mu^2/\gamma < 1$, then for any estimator $\hv:(A,B)\mapsto \hv(A, B)$,
with $\|\hv(A,B)\|_2=\sqrt{n}$, we have $\limsup_{n\to \infty} |\<\hv, v\>| /n = 0$. 
\item  If $\lambda^2 + \mu^2/\gamma > 1$, let $\hv(A,B)$ be 
normalized so that $\|\hv(A,B)\|_2=\sqrt{n}$, and proportional the maximum eigenvector of the matrix $M(\xi_*)$, where
\begin{align}
M(\xi) = A+\frac{2\mu^2}{\lambda^2\gamma^2 \xi}\, B^{\sT}B + \frac{\xi}{2}\, \id_n\, ,
\end{align}
and $\xi_* = \arg\min_{\xi>0}\lambda_{\max}(M(\xi))$. 
Then,  $\liminf_{n\to \infty} |\<\hv, v\>| /n > 0$ in probability.
\end{enumerate}
\end{theorem}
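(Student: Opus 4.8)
The two items are proved by completely different tools. Item~2 (achievability) follows from a spectral analysis of the matrix $M(\xi_*)$, while item~1 (impossibility) is obtained by showing that the matrix minimum mean-square error for $vv^{\sT}$ converges to its trivial value $1$, which forces $|\langle\hv,v\rangle|/n\to0$ for every estimator $\hv$. It is convenient to first reduce the model. Write $A=\frac{\lambda}{n}vv^{\sT}+\frac{1}{\sqrt{n}}W$ with $W$ a GOE-type matrix, and marginalize $u\sim\normal(0,I_p/p)$: then the rows of $B$ (indexed by $a\in[p]$) become, conditionally on $v$, i.i.d.\ draws from $\normal(0,\frac1p(I_n+\frac{\mu}{n}vv^{\sT}))$, independent of $W$. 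Thus $(A,B)$ is a ``doubly spiked'' Gaussian model — a rank-one spiked Wigner observation together with an \emph{independent} (given $v$) spiked-covariance observation sharing the spike direction $v$ — and $B^{\sT}B$ is a genuine spiked Wishart matrix with aspect ratio $n/p\to\gamma$ and spike along $v$.

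\textbf{Achievability ($\lambda^2+\mu^2/\gamma>1$).} Fix $\xi>0$. Expanding $B^{\sT}B$ and using $\|u\|^2=1+o(1)$, we may write $M(\xi)=\big(\frac{\xi}{2}+\frac{2\mu^2}{\lambda^2\gamma^2\xi}\big)I_n+\big(\lambda+\frac{2\mu^3}{\lambda^2\gamma^2\xi}\big)\frac{1}{n}vv^{\sT}+N(\xi)+\Delta(\xi)$, where $N(\xi)=\frac{1}{\sqrt{n}}W+\frac{2\mu^2}{\lambda^2\gamma^2\xi}\big(\frac1p Z^{\sT}Z-I_n\big)$ is the ``bulk'' part and $\Delta(\xi)$ collects the lower-order signal--noise cross terms from $B^{\sT}B$. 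Since $W$ and $Z$ are independent Gaussians, $\frac{1}{\sqrt{n}}W$ (semicircle) and $\frac1p Z^{\sT}Z$ (Marchenko--Pastur of aspect ratio $\gamma$) are asymptotically free, so the empirical spectrum of $N(\xi)$ converges to the free additive convolution of the two laws, with an explicit right edge $e(\xi)$ and Stieltjes transform $g_\xi$; the remaining perturbation has bounded rank and, up to vanishing error, lies in $\sspan(v)$. By the BBP-type transition for such deformed matrices, $\lambda_{\max}(M(\xi))$ detaches from the bulk — and the corresponding top eigenvector $\hv$ satisfies $\liminf_n|\langle\hv,v\rangle|^2/n>0$ — exactly when the effective spike strength along $v$ exceeds the critical value $1/g_\xi\big(e(\xi)^+\big)$. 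The final, model-specific step is to show that the coefficient $\frac{2\mu^2}{\lambda^2\gamma^2\xi}$ and the regularizer $\frac{\xi}{2}I_n$ are tuned so that minimizing $\lambda_{\max}(M(\xi))$ over $\xi$ converts this detachment inequality into precisely $\lambda^2+\mu^2/\gamma>1$ — i.e.\ $\xi\mapsto\lambda_{\max}(M(\xi))$ is a Legendre-type encoding of the correct combination of the two channels' $R$-transforms — and that at $\xi_*$ the outlier is simple and separated from $e(\xi_*)$ by $\Omega(1)$, so that the eigenvector correlation stays bounded away from $0$.

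\textbf{Impossibility ($\lambda^2+\mu^2/\gamma<1$).} The goal is $\frac1{n^2}\E\|vv^{\sT}-\E[vv^{\sT}\mid A,B]\|_F^2\to1$, equivalently that the Bayes-optimal overlap vanishes, which implies the claim for all estimators. Conceptually, for the task of recovering $v$ the covariate channel $B$ should be asymptotically equivalent to an auxiliary Gaussian Wigner (i.e.\ $\integers_2$-synchronization) observation $\widetilde{A}=\frac{\widetilde{\lambda}}{n}vv^{\sT}+\frac1{\sqrt{n}}\widetilde{W}$ at signal-to-noise ratio $\widetilde{\lambda}^2=\mu^2/\gamma$; since signal-to-noise contributions of independent Gaussian channels combine additively, observing $(A,B)$ should then be no more informative — in the free-energy sense — than a single spiked Wigner matrix at SNR $\lambda^2+\mu^2/\gamma$, for which weak recovery is known to be impossible when this number is $<1$. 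To make this rigorous I would run a Guerra/Lindeberg-type Gaussian interpolation between the law of $(A,B)$ and that of $(A,\widetilde{A})$ — or directly down to a family of decoupled scalar Gaussian channels for the coordinates $v_i$ — and bound the derivative of the normalized mutual information along the path by a term vanishing as $n\to\infty$; this sign-definite bound, combined with the I-MMSE relation and the fact that the minimizing overlap of the resulting replica-symmetric variational formula is $0$ throughout $\{\lambda^2+\mu^2/\gamma<1\}$, yields the conclusion. These derivative bounds are the ``Gaussian comparison inequalities'' alluded to in the introduction.

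\textbf{Main obstacle.} The crux is the comparison step in the impossibility proof. The covariate channel is \emph{not} a spiked Wigner model but a spiked-\emph{covariance} model whose rank-one spike $uv^{\sT}$ carries an unobserved, high-dimensional Gaussian left factor $u$ that must be integrated out; proving — with the sharp constant $\mu^2/\gamma$ and with enough uniformity to feed it into the interpolation — that the marginalized channel is information-equivalent, for recovering $v$, to a Gaussian channel of exactly that SNR is delicate. In particular, a naive second-moment computation diverges because of contributions near overlap $\pm1$, so one genuinely needs the interpolation machinery together with fine control of the large-deviation rate of the overlap. On the achievability side the same nuisance resurfaces, more benignly, as the cross term $\sqrt{\mu/(np)}\,(vu^{\sT}Z+Z^{\sT}uv^{\sT})$ inside $B^{\sT}B$, which has operator norm $\Theta(\sqrt{\mu\gamma})$ and so cannot merely be discarded: one must check that it neither moves the outlier eigenvalue nor tilts its eigenvector off the $v$-direction.
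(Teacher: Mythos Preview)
Your impossibility argument rests on a mistaken premise. You assert that ``a naive second-moment computation diverges because of contributions near overlap $\pm1$,'' and therefore reach for a Guerra/Lindeberg interpolation. In fact the second-moment method works cleanly here and is exactly what the paper uses: one computes
\[
\E_{0,0}\Big[\Big(\frac{\de\prob_{\lambda,\mu}}{\de\prob_{0,0}}\Big)^2\Big]
=\E\Big[\exp\Big(n\big(\tfrac{\lambda^2}{2}X^2+\tfrac{\mu}{\gamma}XY\big)\Big)\Big],
\]
with $X$ a normalized Rademacher sum and $Y$ the first coordinate of a uniform point on $\sphere^{p-1}$. The entropic cost $h(s)\le -s^2/2$ for $X$ and the spherical factor $(1-y^2)^{p/2}$ for $Y$ dominate the exponent near $\pm1$, and the resulting Gaussian integral is finite precisely when $\lambda^2+\mu^2/\gamma<1$. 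This yields contiguity of $\prob_{\lambda,\mu}$ to $\prob_{0,0}$, from which $|\langle\hv,v\rangle|/n\to0$ follows. So the interpolation machinery you propose is unnecessary, and your diagnosis of the obstacle is off. Relatedly, the ``Gaussian comparison inequalities'' mentioned in the introduction refer to the achievability side, not the impossibility side.

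On achievability, the paper takes a genuinely different route from yours. Rather than collapsing the $y$-maximization into $B^{\sT}B$ and invoking a BBP transition for a rank-one perturbation of a free additive convolution, the paper keeps the bilinear form $\langle x,Ax\rangle+b_*\langle x,By\rangle$ over $(x,y)\in\sphere^n\times\sphere^p$ and applies Sudakov--Fernique twice: once against a separable linear process $\langle x,\tg_x\rangle+\langle y,\tg_y\rangle$ (upper bound), and once against a decoupled pair of GOE quadratic forms $\frac12\langle x,\tW_x x\rangle+\frac12\langle y,\tW_y y\rangle$ (lower bound). Both comparison processes are analyzed in closed form, and the two bounds coincide asymptotically, yielding the sharp threshold and the eigenvector overlap directly. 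Your free-probability approach is not wrong in principle, but the step you label ``final, model-specific'' --- showing that minimizing $\lambda_{\max}(M(\xi))$ over $\xi$ turns the BBP detachment condition into exactly $\lambda^2+\mu^2/\gamma>1$ --- is the entire content of the result and is left as an assertion. The paper's comparison argument avoids computing the free convolution or its Stieltjes transform altogether, and also sidesteps the rank-two cross term $\sqrt{\mu/n}\,(vu^{\sT}Z+Z^{\sT}uv^{\sT})$ that you correctly flag as nontrivial.
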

Theorem \ref{thm:graph} is proved by using this threshold result, in conjunction with the universality  Theorem \ref{thm:Universality}.

\section{Related work} \label{sec:related}
The need to incorporate node information in graph clustering has been long recognized. To address the problem, diverse clustering methods have been introduced--- e.g. those based on generative models \cite{newman2016structure,hoff2003random,zanghi2010clustering,yang2009combining,kim2012latent,leskovec2012learning,xu2012model,hoang2014joint,yang2013community}, heuristic model free approaches \cite{binkiewicz2017covariate,zhang2016community,gibert2012graph,zhou2009graph,neville2003clustering, gunnemann2013spectral, dang2012community,cheng2011clustering, silva2012mining, smith2016partitioning}, Bayesian methods \cite{chang2010hierarchical,balasubramanyan2011block} etc. \cite{bothorel2015clustering} surveys other clustering methods for graphs with node and edge attributes. Semisupervised graph clustering \cite{peel2012supervised,eaton2012spin,zhang2014phase}, where labels are available for a few vertices are also somewhat related to our line of enquiry. The literature in this domain is quite vast and extremely diffuse, and thus we do not attempt to provide an exhaustive survey of all related attempts in this direction. 

In terms of rigorous results,
\cite{aicher2014learning,lelarge2015reconstruction} introduced and analyzed a model with informative edges, but they make the strong and unrealistic requirement that the label of individual edges and each of their endpoints are uncorrelated and are only able to prove one side of their conjectured threshold. The papers \cite{binkiewicz2017covariate,zhang2016community} --among others-- rigorously analyze specific heuristics for clustering and provide some guarantees that ensure consistency. However, these results are not optimal. Moreover, it is possible that they only hold in the regime where using either the node covariates or the graph suffices for inference.

Several theoretical works \cite{kanade2016global,mossel2016local} analyze the performance of local algorithms in the semi-supervised setting, i.e., where the true labels are given for a small fraction of nodes. In particular \cite{kanade2016global} establishes that for the two community sparse stochastic block model, correlated recovery is impossible given any vanishing proportion of nodes. Note that this is in stark contrast to Theorem \ref{thm:graph}  (and the Claim for the sparse graph model) above, which posits that given high dimensional covariate information actually shifts the information theoretic threshold for detection and weak recovery. The analysis in \cite{kanade2016global,mossel2016local} is also local in nature, while our algorithms and their analysis go well beyond the diameter of the graph.

% !TEX root = main.tex
%--------------------------
\section{Belief propagation: algorithm and cavity prediction}
\label{sec:bp}

Recall the model \eqref{eq:graphmodel}, \eqref{eq:covariatemodel}, where we are given the data $(A^G, B)$ 
and our task is to infer the latent community labels $v$. From a Bayesian perspective, a principled approach computes posterior 
expectation with respect to 
the conditional distribution $\P(v, u | A^G, B)= \P(v, u, A^G, B)/\P(A^G, B)$. This
is, however, not computationally tractable because it
requires to marginalize over $v\in\{+1,-1\}^n$ and $u\in\reals^p$. 
At this point, it becomes necessary to choose an
approximate inference procedure, such as variational inference or mean field approximations \cite{wainwright2008graphical}. In Bayes inference problem on locally-tree like graphs, belief propagation is optimal 
among local algorithms (see for instance \cite{deshpande2015finding} for an explanation of why this is the case). 

The algorithm proceeds by computing, in an iterative fashion \emph{vertex messages} $\eta^t_i, m^t_a$ for $i\in[n]$, $a\in[p]$ and
\emph{edge messages} $\eta^t_{i\to j}$ for all pairs
$(i, j)$ that are connected in the graph $G$. 
For a vertex $i$ of $G$, we denote its neighborhood
in $G$
by $\partial i$.  % (fold)
Starting from an initialization $(\eta^{t_0}, m^{t_0})_{t_0 = -1, 0}$, we update the messages in the following
\emph{linear} fashion:
\begin{align}
\eta^{t+1}_{i\to j} &= \sqrt{\frac{\mu}{\gamma}} (B^\sT m^t)_i - \frac{\mu}{\gamma} \eta^{t-1}_i + \frac{\lambda}{\sqrt d}\sum_{k\in \partial i \bsl j} \eta^t_{k\to i} - \frac{\lambda \sqrt{d}}{n} \sum_{k \in [n]} \eta^t_k \label{eq:bpedge}, \\
\eta^{t+1}_{i} &= \sqrt{\frac{\mu}{\gamma}} (B^\sT m^t)_i - \frac{\mu}{\gamma} \eta^{t-1}_i + \frac{\lambda}{\sqrt d}\sum_{k\in \partial i } \eta^t_{k\to i} - \frac{\lambda \sqrt{d}}{n} \sum_{k \in [n]} \eta^t_k  \label{eq:bpvertexG},\\
m^{t+1} &= \sqrt{\frac{\mu}{\gamma}} B\eta^t - \mu m^{t-1} \label{eq:bpvertexB}.
\end{align}
Here, and below, we will use $\eta^t= (\eta^t_i)_{i\in [n]}
$, $m^t = (m^t_a)_{a\in[p]}$ to denote
the vectors of vertex messages. After running the algorithm for some number of iterations $t_{\max}$, we return, as an estimate, the sign of the vertex
messages $\eta^{t_{\max}}_i$, i.e.
\begin{align}
 \hv_i(A^G, B) & = \sign(\eta^{t_{\max}}_i).
 \end{align} 
These update equations have
a number of intuitive features. First, in the case
that $\mu = 0$, i.e. we have no covariate information, 
the edge messages become:
\begin{align}
\eta^{t+1}_{i\to j} &=  \frac{\lambda}{\sqrt d}\sum_{k\in \partial i \bsl j} \eta^t_{k\to i} - \frac{\lambda \sqrt{d}}{n} \sum_{k \in [n]} \eta^t_k \label{eq:bpedgenomu}, 
\end{align}
which corresponds closely to the spectral power method on the \emph{nonbacktracking walk} matrix of $G$ \cite{krzakala2013spectral}. Conversely, when $\lambda = 0$, the updates equations on $m^t, \eta^t$ correspond closely to the usual
power iteration to compute singular vectors of $B$. 

We obtain this algorithm from belief propagation
using two approximations. First, we linearize
the belief propagation update equations around a certain
`zero information' fixed point. Second, we use
an `approximate message passing' version of the belief propagation
updates which results in the addition of the memory
terms in Eqs. \eqref{eq:bpedge}, \eqref{eq:bpvertexG}, \eqref{eq:bpvertexB}. The details of these approximations
are quite standard and deferred to Appendix \ref{sec:bpderivation}. 
For a heuristic discussion, we refer the interested reader to the tutorials \cite{MontanariChapter,tramel2014statistical}
(for the Gaussian approximation)
and the papers \cite{decelle2011asymptotic,krzakala2013spectral}
(for the linearization procedure).

As with belief propagation, the behavior of this iterative algorithm, in the limit
$p, n\to \infty$ can be tracked using a distributional 
recursion called \emph{density evolution}. 

\begin{definition}[Density evolution]\label{def:densityevolutiondef}
Let $(\mbar,U)$ and $(\etabar,V)$ be independent random vectors such that  $U \sim \normal(0, 1)$, $V\sim\Unif(\{\pm 1\})$,
$\mbar, \etabar$ have finite variance. 
Further assume that $(\etabar,V)\ed (-\etabar,-V)$ and
$(\mbar,U)\ed (-\mbar,-U)$ (where $\ed$ denotes equality in distribution).

We then define new random pairs $(\mbar',U')$ and $(\etabar',V')$,
where $U' \sim \normal(0, 1)$, $V'\sim\Unif(\{\pm 1\})$, and
$(\etabar,V)\ed (-\etabar,-V)$, $(\mbar,U)\ed (-\mbar,-U)$,
via the following distributional equation
\begin{align}
\mbar'\big|_{U'} &\stackrel{\d}{=} \mu \E\{V\etabar\} U' + \big( \mu \E\{\etabar^2\}\big)^{1/2} \zeta_1 , \\
\etabar'\big|_{V'=+1} & 
\stackrel{\d}{=} \frac{\lambda}{\sqrt{d}} \Big[
\sum_{k=1}^{k_+} \etabar_k\big|_{+} + \sum_{k=1}^{k_-} 
\etabar_k\big|_{-} \Big] - \lambda \sqrt{d} \E\{ \etabar\} \nonumber\\
& +\frac{\mu}{\gamma} \E\{ U\mbar\} + \Big(\frac{\mu}{\gamma} \E\{\mbar^2\}\Big) ^{1/2}  \zeta_2.
\end{align}
Here we use the notation $X|_{Y} \ed Z$ to mean that the conditional distribution of $X$ given $Y$ is the same as the (unconditional) distribution of $Z$. Notice that the distribution of $\etabar'\big|_{V'=-}$ is determined by the last 
equation using the symmetry property. Further $\etabar_k|_+$
and $\etabar_k|_-$ denote independent random variables distributed (respectively) as $\etabar|_{V=+}$ and $\etabar|_{V=-}$.
Finally $k_+ \sim \Poiss(d/2 + \lambda \sqrt{d}/2)$, 
$k_- \sim \Poiss(d/2 - \lambda \sqrt{d}/2)$, $\zeta_1\sim\normal(0, 1)$ and $\zeta_2\sim\normal(0, 1)$ 
are mutually independent and independent from the previous random variables. 

The \emph{density evolution} map, 
denoted by $\DE$, is defined as the mapping from the law of  $(\etabar,V, \mbar,U)$ to the law of
$(\etabar',V', \mbar',U')$. With a slight abuse
of notation, we will omit $V,U$, $V',U'$, whose distribution is
left unchanged and write 
\begin{align}
(\etabar', \mbar')=\DE (\etabar,  \mbar )\, . 
\end{align}
\end{definition}

The following claim is the core of the cavity prediction.
It states that the density evolution recursion faithfully
describes the distribution of the iterates $\eta^t, m^t$.
\begin{claim}
\label{claim:DEandBP}
Let $(\etabar^0,V)$, $(\mbar^0,U)$ be random vectors satisfying the conditions of definition \ref{def:densityevolutiondef}. Define the density evolution sequence 
$(\etabar^t, \mbar^t) = \DE^t(\etabar^0, \mbar^0)$, 
i.e. the result of iteratively applying the mapping $\DE$ $t$ times.

Consider the linear message passing algorithm of Eqs.~(\ref{eq:bpedge}) to (\ref{eq:bpvertexB}), with the following initialization. We set $(m^0_r)_{r\in [p]}$ conditionally independent given $u$, with conditional distribution
$m^0_r|_u \ed \mbar^0|_{U = \sqrt{p}u_r}$. Analogously, 
$\eta^0_i, \eta_{i\to j}^0$ are conditionally independent given $v$
with $\eta^0_i|_v \ed \etabar^0|_{V = v_i}$, 
$\eta^0_{i\to j}|_v \ed \etabar^0|_{V = v_i}$. Finally 
$\eta^{-1}_i = \eta^{-1}_{i\to j} = m^{-1}_r = 0$ for all $i,j, r$.

Then, as $n, p\to \infty$ with $p/n \to 1/\gamma$, the following holds for uniformly random indices $i \in [n]$ and $a\in [p]$:
\begin{align}
(m^t_a, u_a \sqrt{p}) & \convD (\mbar^t, U)\\
(\eta^t_i, v_i) &\convD (\etabar^t, V).
\end{align}
\end{claim}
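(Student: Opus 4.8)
# Proof Proposal for Claim~\ref{claim:DEandBP}

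The plan is to recognize the linear message-passing iteration of Eqs.~\eqref{eq:bpedge}--\eqref{eq:bpvertexB} as an instance of an \emph{approximate message passing (AMP)} recursion and to invoke the corresponding state evolution theorem. The key structural observation is that, after the standard reduction from edge messages to vertex messages, the update is an affine map involving two random matrix blocks: the (recentered and rescaled) nonbacktracking-type operator attached to the sparse graph $G$, and the Gaussian covariate matrix $B$. So I would proceed in three stages. First, I would pass from the edge messages $\eta^t_{i\to j}$ to the vertex messages $\eta^t_i$ and show that, up to error terms that vanish in the $n,p\to\infty$ limit, the pair $(\eta^t, m^t)$ obeys a closed AMP-type recursion of the form $\eta^{t+1} = W_G\,\phi_t(\eta^t) + \sqrt{\mu/\gamma}\,B^\sT m^t - \mathsf{Onsager}$, $m^{t+1} = \sqrt{\mu/\gamma}\,B\,\eta^t - \mathsf{Onsager}$, where the Onsager correction terms are exactly the memory terms $-(\mu/\gamma)\eta^{t-1}$ and $-\mu m^{t-1}$ already present in the display, together with the $-\lambda\sqrt d\,n^{-1}\sum_k\eta^t_k$ graph-centering term.

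Second, for the Gaussian covariate block I would apply the standard state evolution / Bolthausen conditioning technique for AMP with a Gaussian matrix (as in the references the paper already cites for the Gaussian approximation), which yields that empirical distributions of coordinates of $B^\sT m^t$ and $B\eta^t$ converge to Gaussians whose variances are governed by the second-moment quantities $\E\{\mbar^2\}$ and $\E\{\etabar^2\}$ appearing in Definition~\ref{def:densityevolutiondef}; the cross terms with $u$ produce the $\E\{V\etabar\}$ and $\E\{U\mbar\}$ mean shifts. For the sparse graph block I would instead use a \emph{local weak convergence} (Galton--Watson tree) argument: since $G$ is sparse with average degree $d$ and community-biased edge probabilities parametrized by $\lambda$, the local neighborhood of a uniformly random vertex $i$ converges to a two-type Poisson branching process, with offspring counts $k_+\sim\Poiss(d/2+\lambda\sqrt d/2)$ and $k_-\sim\Poiss(d/2-\lambda\sqrt d/2)$ exactly as in the definition. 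Running $t$ rounds of the linear message passing on this tree and tracking the message at the root, together with a moment/coupling argument to control the contribution of short cycles and the global term $n^{-1}\sum_k \eta^t_k$ (which concentrates on $\E\{\etabar^t\}$), reproduces the distributional recursion for $\etabar'|_{V'=\pm1}$.

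Third, I would combine the two blocks. Because $A^G$ and $B$ are conditionally independent given $v$ (and $u$ is independent of $v$), the noise injected through the graph block and through the covariate block are asymptotically independent at every iteration, so the joint law of $(\eta^t_i, v_i, m^t_a, u_a\sqrt p)$ factorizes in the limit into the product of the two marginal limiting laws described above; this is precisely the statement that $(\eta^t,m^t)=\mathrm{DE}^t(\etabar^0,\mbar^0)$. The induction on $t$ is clean once the one-step statement is established, using the assumed symmetry $(\etabar,V)\ed(-\etabar,-V)$ and $(\mbar,U)\ed(-\mbar,-U)$ to keep the state description within the class of Definition~\ref{def:densityevolutiondef} and the finite-variance hypothesis to justify uniform integrability when passing moments through the recursion.

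The main obstacle, I expect, is the graph block: standard AMP state evolution theorems are proved for dense (e.g.\ Gaussian or sub-Gaussian i.i.d.) matrices, whereas here the relevant operator comes from a sparse random graph, so one cannot directly cite an off-the-shelf AMP result for that part. The resolution is to treat the sparse part by the nonbacktracking/local-tree analysis developed for the bare stochastic block model (the linearized BP $\leftrightarrow$ nonbacktracking correspondence noted after Eq.~\eqref{eq:bpedgenomu}), and to make the two analyses compatible by carrying a \emph{joint} conditioning argument: at each step one conditions on the sigma-algebra generated by the graph neighborhoods explored so far and by the past AMP iterates on the $B$ side, showing the new randomness on each side is fresh and independent. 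Handling the cross-feedback — the term $\sqrt{\mu/\gamma}\,(B^\sT m^t)_i$ entering the $\eta$-update and $\sqrt{\mu/\gamma}\,B\eta^t$ entering the $m$-update — requires verifying that the graph-side messages $\eta^t$ remain, coordinatewise, asymptotically a deterministic function of $v_i$ plus independent noise uncorrelated with the realization of $B$, which follows from the conditional independence of $A^G$ and $B$ but needs to be tracked carefully through the Onsager corrections so that no spurious correlations are introduced.
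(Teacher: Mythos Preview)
The paper does \emph{not} prove Claim~\ref{claim:DEandBP}. It is deliberately labeled a ``Claim'' rather than a theorem or lemma, and is presented as part of the non-rigorous cavity derivation announced in the introduction (``This threshold is based on non-rigorous, but powerful, techniques from statistical physics''). The only related argument in the paper is the heuristic derivation of the message-passing updates in Appendix~\ref{sec:bpderivation}, obtained by Gaussian ansatz and Taylor expansion; no attempt is made to justify the density-evolution limit rigorously. So there is no ``paper's own proof'' to compare against.

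Your outline is a sensible blueprint for what a rigorous proof would have to contain, and you have correctly identified the main obstacle: there is no existing state-evolution theorem that simultaneously handles a sparse random-graph operator (treated by local weak convergence to a two-type Galton--Watson tree) and a dense Gaussian block (treated by Bolthausen conditioning), with cross-feedback between the two at every iteration. The joint-conditioning scheme you sketch is the natural strategy, but carrying it out is a substantial technical project in its own right---in particular, showing that the $\eta^t$ coordinates remain asymptotically independent of the realization of $B$ after several rounds of feedback through the Onsager terms is not something that follows from a citation, and controlling the global centering term $n^{-1}\sum_k\eta^t_k$ on the sparse side (where coordinates are heavy-tailed due to Poisson degree fluctuations) requires more than a routine concentration bound. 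In short: your proposal is not wrong, but it is a research program rather than a proof, and it goes well beyond anything the paper attempts.
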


The following simple lemma shows the instability 
of the density evolution recursion. 
\begin{lemma}\label{lem:densityevolutioninstability}
Under the density evolution mapping, 
we obtain the random variables
$(\etabar', \mbar' )= \DE(\etabar, \mbar'$
Let $\mom$ and $\mom'$ denote 
the vector of the first two moments of $(\etabar,V,\mbar,U)$ and $(\etabar',V',\mbar',U')$ defined as follows:
\begin{align}
\mom &= (\E\{V\etabar\}, \E\{U\mbar\}, 
\E\{\etabar^2\}, \E\{\mbar^2\})\, ,
\end{align}
and similarly for $\mom'$. Then, for $\|\mom\|_2\to 0$, we have
\begin{align}
\mom' = \left[\begin{matrix}
\lambda^2 & \mu/\gamma & 0 & 0\\
\mu & 0 & 0 & 0 \\
0& 0 & \lambda^2 & \mu/\gamma \\
0 & 0 & \mu & 0 \\
\end{matrix}\right]\mom + O(\|\mom\|^2)
\end{align}
In particular, the linearized map $\mom\mapsto \mom'$ 
at $\mom=0$ has spectral radius larger than one if and only if 
$\lambda^2+\mu^2/\gamma>1$.
\end{lemma}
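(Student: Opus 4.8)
The plan is to directly compute the linearization of the density evolution map $\DE$ at the zero fixed point $\mom = 0$, tracking only the first two moments $(\E\{V\etabar\}, \E\{U\mbar\}, \E\{\etabar^2\}, \E\{\mbar^2\})$. First I would observe that the zero-mean, zero-variance law is indeed a fixed point: if $\etabar = \mbar = 0$ a.s., then the Gaussian noise coefficients $(\mu\E\{\etabar^2\})^{1/2}$ and $(\frac\mu\gamma\E\{\mbar^2\})^{1/2}$ vanish, the deterministic drift terms vanish, and the Poisson sums are empty, so $(\etabar',\mbar') = (0,0)$. Then I would perturb: take $\etabar,\mbar$ with small moments and propagate them through the two distributional equations.

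For the first-moment coordinates, from $\mbar'|_{U'} \ed \mu\E\{V\etabar\}U' + (\mu\E\{\etabar^2\})^{1/2}\zeta_1$, multiplying by $U'$ and taking expectations gives $\E\{U'\mbar'\} = \mu\,\E\{V\etabar\}$, exactly (no higher-order terms), which supplies the second row of the matrix. For $\E\{V'\etabar'\}$ I would condition on $V'=+1$: the drift term contributes $\frac{\lambda}{\sqrt d}\E[\,\sum_{k=1}^{k_+}\etabar_k|_+ + \sum_{k=1}^{k_-}\etabar_k|_-\,] - \lambda\sqrt d\,\E\{\etabar\}$. Using Wald's identity with $\E k_\pm = d/2 \pm \lambda\sqrt d/2$ and writing $\E\{\etabar|_+\} = \E\{V\etabar\} + \E\{\etabar\}$, $\E\{\etabar|_-\} = \E\{\etabar\} - \E\{V\etabar\}$ (valid since $\P(V=\pm1)=1/2$), the $\E\{\etabar\}$ pieces cancel against $-\lambda\sqrt d\,\E\{\etabar\}$ and what remains is $\lambda^2\,\E\{V\etabar\}$; the covariate contribution $\frac\mu\gamma\E\{U\mbar\}$ adds the off-diagonal $\mu/\gamma$. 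That gives the first row. The variance coordinates are handled the same way but one order up: $\E\{\mbar'^2\} = \mu^2\E\{V\etabar\}^2 + \mu\E\{\etabar^2\} = \mu\,\E\{\etabar^2\} + O(\|\mom\|^2)$, and $\E\{\etabar'^2\}$, by independence of the Poisson sum, the drift and the two Gaussian noises, expands (after the same cancellation of $\E\{\etabar\}$ terms and absorbing the squares of first moments into $O(\|\mom\|^2)$) to $\lambda^2\E\{\etabar^2\} + \frac\mu\gamma\E\{\mbar^2\} + O(\|\mom\|^2)$. The cross terms between $\etabar$ and $\mbar$ blocks vanish because $(\mbar,U)$ and $(\etabar,V)$ are independent, which is why the Jacobian is block-diagonal in the stated $2\times2$-block form.

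Having the matrix, the final claim is a one-line linear algebra fact: the matrix is block triangular (indeed block diagonal into two identical $2\times2$ blocks $\left[\begin{smallmatrix}\lambda^2 & \mu/\gamma\\ \mu & 0\end{smallmatrix}\right]$), so its spectrum is the spectrum of this $2\times2$ block, whose eigenvalues solve $t^2 - \lambda^2 t - \mu^2/\gamma = 0$, giving $t_\pm = \tfrac12(\lambda^2 \pm \sqrt{\lambda^4 + 4\mu^2/\gamma})$. The larger root $t_+$ exceeds $1$ iff $1 - \lambda^2 < \sqrt{\lambda^4 + 4\mu^2/\gamma}$, i.e. (when $\lambda^2<1$, squaring) iff $1 - 2\lambda^2 + \lambda^4 < \lambda^4 + 4\mu^2/\gamma$, i.e. iff $\lambda^2 + \mu^2/\gamma > 1$; and when $\lambda^2 \ge 1$ the inequality $t_+ > 1$ is automatic and $\lambda^2 + \mu^2/\gamma > 1$ as well, so the equivalence holds in all cases.

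The only mildly delicate point — the ``main obstacle,'' though it is more bookkeeping than genuine difficulty — is making the Wald-identity step fully rigorous: the Poisson counts $k_\pm$ are independent of the i.i.d. copies $\etabar_k|_\pm$, and those copies have finite variance by hypothesis, so $\E[\sum_{k=1}^{k_\pm}\etabar_k|_\pm] = \E k_\pm \cdot \E\{\etabar|_\pm\}$ and the corresponding second moment is $\E k_\pm\cdot\E\{(\etabar|_\pm)^2\} + (\E k_\pm^2 - \E k_\pm)\E\{\etabar|_\pm\}^2$; one then checks that the $\lambda\sqrt d$-scaled first-moment contributions cancel exactly against the $-\lambda\sqrt d\,\E\{\etabar\}$ counterterm, and all leftover products of two first moments (including the $d$-dependent factor $(\lambda\sqrt d/2)\cdot(\lambda\sqrt d)=\lambda^2 d/2$ multiplying squared first moments) are collected into the $O(\|\mom\|^2)$ remainder — this is where the $d$-dependence disappears from the linear part. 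I would present the first-moment computation in full and simply remark that the second-moment computation is identical in structure.
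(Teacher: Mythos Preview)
Your proposal is correct and follows essentially the same route as the paper: both compute the four moments directly from the distributional recursion (the paper simply states the resulting moment formulas, while you spell out the Wald-identity step), read off the Jacobian at zero, and reduce the spectral-radius question to the characteristic polynomial $z^2-\lambda^2 z-\mu^2/\gamma$ of the repeated $2\times 2$ block. The only cosmetic difference is in the last step---the paper checks the sign of this polynomial at $z=\pm 1$ rather than solving the quadratic---so you might add the one-line observation $t_+\ge|t_-|$ (from $t_++t_-=\lambda^2\ge 0$) to make explicit that the spectral radius is indeed $t_+$.
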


The interpretation of the lemma is as follows. If we choose an initialization $(\etabar^0, V)$, $(\mbar^0,U)$ with $\etabar^0,\mbar^0$ \emph{positively correlated} with $V$ and $U$,
then this correlation increases exponentially over 
time if and only if $\lambda^2+\mu^2/\gamma>1$\footnote{Notice 
that both the messages variance $\E(\eta^2)$ and 
covariance with the ground truth $\E(\eta V)$ increase, but the normalized correlation (correlation divided by standard deviation)
increases.}. In other words, a small initial correlation is amplified.

While we do not have an initialization that is positively correlated with the true labels,  a random initialization $\eta^0, m^0$ 
has a random correlation with $v, u$ of order
$1/\sqrt{n}$.  If $\lambda^2+\mu^2/\gamma>1$, this correlation
is amplified over iterations, yielding  a nontrivial
reconstruction of $v$. On the other hand, if $\lambda^2+\mu^2/\gamma<1$ then this correlation is expected to remain small, indicating that the algorithm does not yield a useful estimate. 

% paragraph partial_i (end) i$

% !TEX root = main.tex
%--------------------------
\section{Proof overview}\label{sec:overview}

 As mentioned above, a key step of our analysis is provided by Theorem \ref{thm:gaussian},
 which establishes a weak recovery threshold for the Gaussian observation model
 of Eqs.~(\ref{eq:gaussiangraphmodel}), (\ref{eq:gaussiancovariatemodel}). 
 
 The  proof proceeds in two steps: first, we prove that, for $\lambda^2+\mu^2/\gamma<1$ it is impossible  
 to distinguish between data $A, B$ generated according to this model, and data generated according to the
 null model $\mu=\lambda=0$. Denoting by $\prob_{\lambda,\mu}$ the law of data $A, B$, this is proved via a standard
 second moment argument. Namely, we bound the chi square distance uniformly in $n,p$
 \begin{align}
 \chi^2(\prob_{\lambda,\mu},\prob_{0,0}) \equiv \E_{0,0}\left\{\left(\frac{\de \prob_{\lambda,\mu}}{\de\prob_{0,0}}\right)^2\right\}-1\le C\, ,
 \end{align}
 and then bound the total variation distance by the chi-squared
 distance $\|\prob_{\lambda,\mu}-\prob_{0,0}\|_{TV}\le 1-(\chi^2(\prob_{\lambda,\mu},\prob_{0,0})+1)^{-1}$.
 This in turn implies that no test can distinguish between the two hypotheses with probability approaching one as $n,p\to\infty$. 
 The chi-squared bound also allows to show that weak recovery is impossible in the same regime.
 
In order to prove that weak recovery is possible for  $\lambda^2+\mu^2/\gamma> 1$, we consider the
following optimization problem over $x\in\reals^n$, $y\in\reals^p$:
\begin{align}
\mbox{maximize} \, & \;\; \<x,Ax\> + b_*\<x,By\>,\label{eq:OptEstimator}\\
\mbox{subject to}\, & \;\; \|x\|_2 = \|y\|_2 =1\, .
\end{align}
where $b_* = \frac{2\mu}{\lambda \gamma}$.
Denoting solution of this problem by $(\hx, \hy)$, we output the (soft) label estimates $\hv = \sqrt{n} \hx$. 
This definition turns out to be equivalent to the spectral algorithm in the statement
of Theorem \ref{thm:gaussian}, and is therefore efficiently computable.

This optimization problem undergoes a phase transition exactly at the weak recovery threshold $\lambda^2+\mu^2/\gamma=1$,
as stated below.
\begin{lemma}\label{lemma:upper_bound}
Denote by $T = T_{n,p}(A,B)$ the value of the optimization problem (\ref{eq:OptEstimator}). 
\begin{enumerate}
\item[(i)] If $\lambda^2 + \frac{\mu^2}{\gamma}<1$, then, almost surely 
\begin{align}
\lim_{n,p\to \infty}T_{n,p}(A,B) = 2\sqrt{1 + \frac{b_*^2 \gamma}{4}} + b_*\, .
\end{align}
\item[(ii)] If $\lambda,\mu>0$, and  $\lambda^2 + \frac{\mu^2}{\gamma}>1$ then there exists $\delta= \delta(\lambda,\mu)>0$ such that, almost surely
\begin{align}
\lim_{n,p\to \infty}T_{n,p}(A,B) = 2\sqrt{1 + \frac{b_*^2 \gamma}{4}} + b_* +\delta(\lambda,\mu)\, .
\end{align}
\item[(iii)] Further, define 
\begin{align}
\tilde{T}_{n,p}(\tilde{\delta};A,B) = \sup_{\|x\|= \|y\|=1, | \langle x, v \rangle | < \tilde{\delta} \sqrt{n}} \Big[ \langle x, Ax \rangle + b_* \langle x, By \rangle \Big]. \nonumber
\end{align}
Then for each $\delta >0$, there exists $\tilde{\delta}>0$ sufficiently small, such that, amlost surely
\begin{align}
\lim_{n,p\to\infty}\tilde{T}_{n,p} (\tilde{\delta};A,B) < 2 \sqrt{1 + \frac{b_*^2 \gamma}{4}} + b_* + \frac{\delta}{2} \, .
\end{align} 
\end{enumerate}
\end{lemma}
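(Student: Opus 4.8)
The plan is to analyze the three quantities via Gaussian process comparison (Gordon's minimax / convex Gaussian min-max theorem, CGMT), exactly as one expects for bilinear forms in $A$ and $B$ built from rank-one signals plus Gaussian noise. First decompose the data as $A = \frac{\lambda}{n} vv^\sT + W$ with $W$ GOE-type, and $B = \sqrt{\mu/n}\, u v^\sT + Z/\sqrt{p}$ with $Z$ i.i.d.\ standard normal. Substituting into the objective $\langle x, Ax\rangle + b_*\langle x, By\rangle$, the signal contributes $\frac{\lambda}{n}\langle x,v\rangle^2 + b_*\sqrt{\mu/n}\,\langle x,v\rangle\langle y,u\rangle$, and the noise contributes $\langle x, Wx\rangle + b_*\langle x, (Z/\sqrt p) y\rangle$. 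The noise part is a Gaussian process indexed by $(x,y)$ on the product of spheres; its supremum (and the constrained suprema) are what Gordon's inequality controls.

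The key steps, in order: (1) Reduce the optimization to a scalar problem. Parametrize by the overlaps $\alpha = \langle x,v\rangle/\sqrt n$ and $\beta = \langle y,u\rangle$ (or $\langle y, u\rangle/\|u\|$, handling $\|u\|\to 1$), and observe that conditional on these overlaps the residual directions of $x$ and $y$ only interact with the noise. (2) Apply the CGMT to the bilinear term $b_*\langle x, (Z/\sqrt p)y\rangle$ and the quadratic term $\langle x, Wx\rangle$: Gordon's comparison replaces $\langle x, (Z/\sqrt p) y\rangle$ by $\frac{1}{\sqrt p}(\|x\|\,\langle g, y\rangle + \|y\|\,\langle h, x\rangle)$ with $g,h$ independent Gaussian vectors, and for the GOE quadratic form one uses that $\max_{\|x\|=1}\langle x, Wx\rangle \to 2$ with the constrained version on $\{|\langle x,v\rangle|\le \tilde\delta\sqrt n\}$ also going to $2$ (Wigner semicircle edge, with a continuity/perturbation estimate in $\tilde\delta$). (3) Carry out the resulting deterministic scalar optimization over $(\alpha,\beta)$ and the norms of the orthogonal components. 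One finds the unconstrained value is $2\sqrt{1 + b_*^2\gamma/4} + b_*$ whenever the signal overlaps are forced to zero — this is part (i), since for $\lambda^2+\mu^2/\gamma<1$ the optimum is attained at $\alpha=\beta=0$ by the instability computation already reflected in Lemma~\ref{lem:densityevolutioninstability}. For $\lambda^2+\mu^2/\gamma>1$, the scalar problem has a strictly better interior maximizer, giving the extra additive $\delta(\lambda,\mu)>0$ — part (ii). (4) For part (iii), rerun the same scalar optimization but with the extra hard constraint $|\alpha|<\tilde\delta$; since the objective's value as a function of $\alpha$ is continuous and equals the part-(i) value at $\alpha=0$, for $\tilde\delta$ small enough the constrained optimum is within $\delta/2$ of that value. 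The plugging-in of $b_* = 2\mu/(\lambda\gamma)$ is what makes the critical curve come out as $\lambda^2 + \mu^2/\gamma = 1$ rather than something else; one should double check this is consistent with the choice $b_* = \frac{2\mu}{\lambda\gamma}$ in \eqref{eq:OptEstimator} (the factor seems to be exactly tuned so the bilinear and quadratic "gains" balance).

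I expect the main obstacle to be step (2)–(3) done jointly and with the right uniformity: the CGMT in its standard form applies cleanly to convex-concave problems on convex sets, but here we have a maximization of an indefinite quadratic form over the (nonconvex) sphere, so one must either pass through the variational/eigenvalue representation $T = \lambda_{\max}(A + b_*\cdot(\text{stuff}))$ and use a matrix/deterministic-equivalent argument (Bai–Silverstein type for $B^\sT B$ together with a rank-one deformation analysis à la \cite{benaych2011eigenvalues}), or carefully linearize using the auxiliary variable that turns $\max_{\|x\|=1}\langle x, Mx\rangle$ into a min-max over $(x,\xi)$ — which is exactly why $M(\xi)$ and $\xi_* = \arg\min_\xi \lambda_{\max}(M(\xi))$ appear in Theorem~\ref{thm:gaussian}. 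Concretely I would introduce the Lagrangian/resolvent identity $\langle x, Ax\rangle + b_*\langle x, By\rangle = \min_{\xi>0}\big[\ldots\big]$ so that optimizing out $y$ first gives a $B^\sT B$ term, then the whole thing becomes $\min_{\xi>0}\lambda_{\max}(M(\xi))$ plus lower-order terms, and the BBP-type phase transition for the rank-one-deformed matrix $M(\xi)$ at its optimal $\xi$ delivers all three claims; the constrained version in (iii) follows from the same spectral picture because restricting $|\langle x,v\rangle|$ small forces $x$ into the bulk, where $\lambda_{\max}$ is replaced by the bulk edge. The bookkeeping to show the almost-sure convergence (not just in probability) and uniformity in $\tilde\delta$ is the part that will require the most care but is routine given concentration of $\lambda_{\max}$ and of linear/bilinear forms.
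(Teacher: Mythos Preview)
Your plan is in the right spirit---Gaussian comparison reducing to a low-dimensional overlap problem---but the execution differs from the paper in a way that matters specifically for the \emph{lower} bound. The paper does not use CGMT. It applies Sudakov--Fernique twice, comparing the noise process $\langle x, Wx\rangle + b\langle y, Zx\rangle$ to two different auxiliary processes. For the upper bound it compares to $\mathcal{X}_1(x,y) = \langle x, \tilde g_x\rangle + \langle y, \tilde g_y\rangle$ with $\tilde g_x \sim \normal(0, (4\rho + b^2\tau)I_n)$ and $\tilde g_y \sim \normal(0, b^2\tau (n/p) I_p)$; note that this linearizes \emph{both} $W$ and $Z$ together rather than handling the GOE part separately via the semicircle edge as you propose. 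For the lower bound---the crux---it compares to $\mathcal{X}_2(x,y) = \tfrac12\langle x, \tW_x x\rangle + \tfrac12\langle y, \tW_y y\rangle$ with $\tW_x,\tW_y$ two \emph{independent} GOE matrices of tuned variances. The increment-variance difference is $\tfrac{b^2\tau}{4}(\langle x,x'\rangle_n - \langle y,y'\rangle_p)^2 \ge 0$, so Sudakov--Fernique gives $\OPT \ge \E\max \mathcal{X}_2$. After decoupling the cross term $\langle x,v_0\rangle\langle y,u_0\rangle$ via $2\sqrt{ab}=\min_{t>0}(at+b/t)$, maximizing $\mathcal{X}_2$ splits into two rank-one-deformed GOE problems, for which BBP gives exact asymptotics; the resulting upper and lower bounds coincide and yield the closed form $\min_{t>0}\{\sG(2\lambda+b\mu t, 4\rho+b^2\tau) + \gamma^{-1}\sG(b/t, b^2\gamma\tau)\}$. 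Parts (i)--(iii) then follow by specializing $\rho=\tau=1$, $b=b_*\sqrt\gamma$, and checking when the minimizer $t_*$ has $\sG'(2\lambda + b\mu t_*, 4+b^2)>0$.

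Your CGMT route hits exactly the obstacle you name: the sphere is nonconvex, so the tight (lower) direction of CGMT does not apply, and you offer no fix. Your fallback via $\min_\xi \lambda_{\max}(M(\xi))$ is a genuinely different route and might be made to work, but $M(\xi)$ is a rank-one deformation of a \emph{sum} of a GOE and a scaled Wishart, not of a single standard ensemble; pinning down the exact bulk edge and BBP outlier for this mixture (free convolution of semicircle and Marchenko--Pastur, then a deformation analysis) is a nontrivial random-matrix computation that you have not sketched and that the paper avoids entirely. The two-GOE Sudakov--Fernique lower bound is the paper's main technical novelty here, and it is the key idea your proposal is missing.
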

The first two points imply that $T_{n,p}(A,B)$ provide a statistic to  distinguish
between $\prob_{0,0}$ and $\prob_{\lambda,\mu}$ with probability of error that vanishes as $n,p\to\infty$
if $\lambda^2+\mu^2/\gamma>1$. The third point (in conjunction with the second one) guarantees that
the maximizer $\hx$ is positively correlated with $v$, and hence implies weak recovery.

In fact, we prove a stronger result that provides an asymptotic expression for the
value $T_{n,p}(A,B)$ for all $\lambda,\mu$. We obtain the above phase-transition result by specializing 
the resulting formula in the two regimes $\lambda^2+\mu^2/\gamma<1$ and $\lambda^2+\mu^2/\gamma>1$.
We prove this asymptotic formula by Gaussian process comparison, using Sudakov-Fernique inequality. 
Namely, we compare the Gaussian process appearing in the optimization problem of
Eq.~(\ref{eq:OptEstimator}) with the following ones:
\begin{align}
{\mathcal X}_1(x,y) & = 
\frac{\lambda}{n} \<x, v_0\>^2 + \<x, \tg_x\>
+ b_* \sqrt{\frac{\mu}{n}} \<x, v_0\> \<y, u_0\> + \<y, \tg_y\> \, ,\\
{\mathcal X}_2(x,y) & = 
\frac{\lambda}{n} \<x, v_0\>^2 + \frac{1}{2} \<x, \tW_x x\> 
+ b_*\sqrt{\frac{\mu}{n}} \<x, v_0\> \<y, u_0\> + \frac{1}{2}\<y, \tW_y y\> \, ,
\end{align}
where $\tg_x$, $\tg_y$ are isotropic Gaussian vectors, with suitably chosen variances, and 
$\tW_x$, $\tW_y$ are GOE matrices, again with properly chosen variances. We prove that
$\max_{x,y}\cX_1(x,y)$ yields an upper bound on $T_{n,p}(A,B)$, and $\max_{x,y}\cX_2(x,y)$
yields a lower bound on the same quantity.

Note that maximizing the first process ${\mathcal X}_1(x,y)$ essentially reduces to solving a separable problem over
the coordinates of $x$ and $y$ and hence to an explicit expression. On the other hand, maximizing
the second process leads (after decoupling the term $\<x, v_0\> \<y, u_0\>$) to two separate problems,
one for the vector $x$, and the other for $y$. Each of the two problems reduce to finding the maximum eigenvector
of a rank-one deformation of a GOE matrix, a problem for which we can leverage on significant amount of information
from random matrix theory.
The resulting upper and lower bound coincide asymptotically.

As is often the case with Gaussian comparison arguments, the proof is remarkably compact, and somewhat surprising
(it is unclear a priori that the two bounds should coincide asymptotically). While upper bounds by processes
of the type of $\cX_1(x,y)$ are quite common in random matrix theory, we think that the lower bound by $\cX_2(x,y)$ (which  is crucial for proving our main theorem) is novel and might have interesting generalizations.

% !TEX root = main.tex
%--------------------------

\section{Experiments}\label{sec:expt}

\def\dirE{{\overrightarrow{E}}}
\def\vBP{{\widehat{v}^{\sf BP}}}
\def\uBP{{\widehat{u}^{\sf BP}}}

\begin{figure}[t]
\centering
\includegraphics[width=0.32\linewidth]{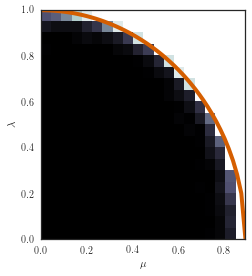}
\includegraphics[width=0.32\linewidth]{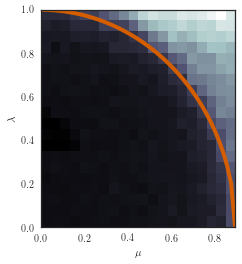}
\includegraphics[width=0.32\linewidth]{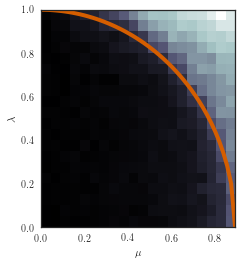}
\caption{(Left) Empirical probability of rejecting the null
(lighter is higher) using BP test. 
(Middle) Mean overlap $|\<\vBP, v\>/n|$ and 
(Right) mean covariate overlap $|\<\uBP, u\>|$ attained by BP
estimate.
\label{fig:phase_transition} }
\end{figure}

We demonstrate the efficacy of the full belief
propagation algorithm, restated below:
\begin{align}
\eta^{t+1}_{i} & = \sqrt{\frac{\mu}{\gamma}} \sum_{q\in [p]}
B_{qi}m^t_q -%
 \frac{\mu}{\gamma}\bigg( \sum_{q\in [p]}\frac{ B_{qi}^2}{\tau^t_q}\bigg) \tanh(\eta^{t-1}_i) + %
\sum_{k\in \partial i} f(\eta^t_{k\to i}; \rho) - \sum_{k\in[n]} f(\eta^t_k; \rho_n) \, ,\label{eq:textamp1}\\
\eta^{t+1}_{i\to j} & =  \sqrt{\frac{\mu}{\gamma}} \sum_{q\in [p]}
B_{qi}m^t_q -%
 \frac{\mu}{\gamma}\bigg( \sum_{q\in [p]}\frac{ B_{qi}^2}{\tau^t_q}\bigg) \tanh(\eta^{t-1}_i) + %
\sum_{k\in \partial i\setminus j} f(\eta^t_{k\to i}; \rho) - \sum_{k\in[n]} f(\eta^t_k; \rho_n)\, ,\label{eq:textamp2} \\
m^{t+1}_q & = \frac{\sqrt{\mu/\gamma}}{\tau^{t+1}_q}%
 \sum_{j\in [n]} B_{qj}\tanh(\eta^t_j)- %
 {\frac{\mu}{\gamma\tau^{t+1}_q}} %
 \bigg(\sum_{j\in [n]} B_{qj}^2 \sech^2(\eta^t_j) \bigg)%
  m^{t-1}_q \label{eq:textamp3} \\
\tau^{t+1}_q & = \left(1 + \mu - \frac{\mu}{\gamma} \sum_{j\in [n]} B_{qj}^2 \sech^2(\eta^t_j)\right)^{-1}\label{eq:textamp4}.
\end{align}
Here the function $f(; \rho)$ and the parameters $\rho, \rho_n$ are
defined as:
\begin{align}
f(z; \rho) &\equiv \frac{1}{2}\log \Big(\frac{\cosh(z + \rho)}{\cosh(z-\rho)}\Big)\, , \\
\rho &\equiv \tanh^{-1}(\lambda/\sqrt{d})\, , \\
\rho_n & \equiv \tanh^{-1}\Big(\frac{\lambda \sqrt{d}}{n-d}\Big). 
\end{align}
We refer the reader to Appendix \ref{sec:bpderivation} for a derivation
of the algorithm. As demonstrated in Appendix \ref{sec:bpderivation}, the
BP algorithm in Section \ref{sec:bp} is obtained by linearizing the above
in $\eta$. 

In our experiments, we perform 100 Monte Carlo runs of the following process:
\begin{enumerate}
\item Sample $A^G, B$ from $\P_{\lambda, \mu}$ with $n = 800, p=1000, d = 5$.
\item Run BP algorithm for $T = 50$ iterations with random initialization
$\eta^0_i, \eta^{-1}_i, m^0_a, m^{-1}_a \sim_\textrm{iid} \normal(0, 0.01)$.
yielding vertex and covariate iterates $\eta^T \in \reals^n$, $m^T\in \reals^p$.
\item Reject the null hypothesis if $\norm{\eta^T}_2 > \norm{\eta^0}_2$, else accept the null.
\item Return estimates $\vBP_i = \sign(\eta^T_i)$, $\uBP_a = m^T_a/\norm{m^T}_2$.
\end{enumerate}

Figure \ref{fig:phase_transition} (left) shows empirical probabilities
of rejecting the null for 
$(\lambda, \mu)\in [0, 1]\times[0, \sqrt{\gamma}]$. The next two
plots display the mean overlap $|\<\vBP, v\>/n|$ and
$\<\uBP, u\>/\norm{u}$ achieved by the BP estimates (lighter
is higher overlap). Below the theoretical curve
(red) of $\lambda^2 + \mu^2/\gamma = 1$,
the null hypothesis is accepted and the estimates show negligible
correlation with the truth. These results are in excellent 
agreement with our theory.

\section*{Acknowledgements}

A.M. was partially supported by grants NSF DMS-1613091, NSF CCF-1714305 and NSF IIS-1741162. E.M was partially supported
by grants NSF DMS-1737944 and ONR N00014-17-1-2598. Y.D would
like to acknowledge Nilesh Tripuraneni for discussions about
this paper. 

\iftoggle{nips}{
\newpage

\bibliographystyle{amsalpha}
\bibliography{all-bibliography}
\newpage
\appendix
% !TEX root = main.tex
%--------------------------
\section{Proof of Theorem \ref{thm:gaussian}}
We establish Theorem \ref{thm:gaussian} in this section. First, we introduce the notion of contiguity of measures 
\begin{definition}
Let $\{P_n\}$ and $\{Q_n\}$ be two sequences of probability measures on the measurable space $(\Omega_n , \mathcal{F}_n)$. We say that $P_n$ is contiguous to $Q_n$ if for any sequence of events $A_n$ with $Q_n(A_n) \to 0$, $P_n(A_n) \to 0$. 
\end{definition}
It is standard that for two sequences of probability measures $P_n$ and $Q_n$ with $P_n$ contiguous to $Q_n$, $\limsup_{n \to \infty }d_{\mathrm{TV}}(P_n, Q_n) <1 $. The following lemma provides sufficient conditions for establishing contiguity of two sequence of probability measures. 
\begin{lemma}[see e.g. \cite{montanari2015limitation} ]
\label{lem:contiguity_suff}
Let $P_n$ and $Q_n$ be two sequences of probability measures on $(\Omega_n, \mathcal{F}_n)$. Then $P_n$ is contiguous to $Q_n$ if 
\begin{align}
\E_{Q_n}\Big[ \Big( \frac{\de P_n}{\de Q_n} \Big)^2  \Big] \nonumber
\end{align}
exists and remains bounded as $n \to \infty$. 
\end{lemma}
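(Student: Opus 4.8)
The plan is to deduce the contiguity of $P_n$ to $Q_n$ directly from the second-moment bound on the likelihood ratio, via a single application of the Cauchy--Schwarz inequality; this is the classical ``second moment method'' for contiguity. First I would record that for the likelihood ratio $L_n := \de P_n/\de Q_n$ to be well defined one needs $P_n\ll Q_n$, which is implicit in the hypothesis (and holds in all our applications, where $\prob_{\lambda,\mu}\ll\prob_{0,0}$ exactly); then $L_n$ is a nonnegative $\mathcal{F}_n$-measurable random variable with $\E_{Q_n}[L_n]=1$, and by assumption there is a finite constant $C$, independent of $n$, with $\E_{Q_n}[L_n^2]\le C$. Note that $\E_{Q_n}[L_n^2]=1+\chi^2(P_n,Q_n)$, so this is exactly the quantity controlled by the chi-square computations used elsewhere in the paper.

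Next, let $\{A_n\}$ be an arbitrary sequence of events with $A_n\in\mathcal{F}_n$ and $Q_n(A_n)\to 0$; the goal is to show $P_n(A_n)\to 0$. Writing the probability under $P_n$ as an expectation of $L_n$ under $Q_n$ and applying Cauchy--Schwarz gives
\begin{align}
P_n(A_n)\;=\;\E_{Q_n}\big[\ind_{A_n}L_n\big]\;\le\;\big(\E_{Q_n}[\ind_{A_n}]\big)^{1/2}\big(\E_{Q_n}[L_n^2]\big)^{1/2}\;\le\;\sqrt{C}\,\sqrt{Q_n(A_n)}\,.\nonumber
\end{align}
The right-hand side tends to $0$ as $n\to\infty$, hence $P_n(A_n)\to 0$, which is precisely the definition of $P_n$ being contiguous to $Q_n$.

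There is essentially no technical obstacle here: the only point meriting care is the implicit absolute-continuity assumption recorded above (if one only assumed finiteness of the second moment of the absolutely continuous part in the Lebesgue decomposition $P_n=P_n^{\mathrm{ac}}+P_n^{\mathrm{s}}$, one would additionally need to rule out mass escaping through the singular part, but this does not arise in our setting). For completeness I would also note the standard consequence quoted just before the lemma: contiguity of $P_n$ to $Q_n$ forces $\limsup_{n}d_{\mathrm{TV}}(P_n,Q_n)<1$, since otherwise one could pass to a subsequence along which $d_{\mathrm{TV}}(P_n,Q_n)\to 1$, take $A_n=\{L_n\ge 1\}$ on that subsequence (and $A_n=\emptyset$ elsewhere), and obtain $Q_n(A_n)\to 0$ with $P_n(A_n)\to 1$, contradicting the contiguity just established.
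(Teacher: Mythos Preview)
Your proof is correct; the Cauchy--Schwarz argument you give is exactly the standard second-moment route to contiguity. Note that the paper does not supply its own proof of this lemma: it is stated with a citation (``see e.g.~\cite{montanari2015limitation}'') and used as a black box, so there is no paper proof to compare against beyond observing that your argument is the classical one the citation points to.
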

Our next result establishes that asymptotically error-free detection is impossible below the conjectured detection boundary. 
\begin{lemma}\label{lem:detection_lowerbdd}
Let $\lambda, \mu >0$ with $\lambda^2 + \frac{\mu^2}{\gamma} <1$. Then  $\P_{\lambda,\mu}$ is contiguous to $\P_{0,0}$. 
\end{lemma}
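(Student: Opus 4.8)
The plan is to apply Lemma~\ref{lem:contiguity_suff} with $P_n = \P_{\lambda,\mu}$ and $Q_n = \P_{0,0}$, so that everything reduces to showing that the chi-squared divergence $\chi^2(\P_{\lambda,\mu},\P_{0,0}) = \E_{0,0}\{(\de\P_{\lambda,\mu}/\de\P_{0,0})^2\} - 1$ stays bounded as $n,p\to\infty$. Under $\P_{0,0}$ the observations $A$ and $B$ are pure Gaussian noise: $A$ is a (symmetric) Gaussian matrix with independent $\normal(0,1/n)$ entries above the diagonal and $\normal(0,2/n)$ on the diagonal, and $B$ has i.i.d.\ $\normal(0,1/p)$ entries. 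Under $\P_{\lambda,\mu}$ the signal is a rank-one perturbation $\lambda v v^\sT/n$ in $A$ and $\sqrt{\mu/n}\, v_i u_a$ in $B$, where $v\sim\Unif(\{\pm1\}^n)$ and $u\sim\normal(0,I_p/p)$ are independent latent variables. Because the noise in $A$ and in $B$ is independent, the likelihood ratio factorizes conditionally on $(v,u)$, and after squaring and taking the expectation under $\P_{0,0}$ one is left with an expectation over two independent copies $(v,u)$ and $(v',u')$ of the latent variables.

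Concretely, the first step is to write down the likelihood ratio explicitly. Completing the square in the Gaussian densities gives, up to the normalization that cancels,
\begin{align}
\frac{\de\P_{\lambda,\mu}}{\de\P_{0,0}}(A,B) = \E_{v,u}\exp\Big( \lambda\, n \langle A, vv^\sT/n\rangle_{\!*} - \tfrac{\lambda^2}{2}\|vv^\sT/\sqrt n\|^2_{\!*} + \text{(analogous $B$ terms)} \Big), \nonumber
\end{align}
where I am being deliberately schematic about the exact inner products and the diagonal corrections. Squaring this, using Fubini, and integrating out the Gaussian matrices $A,B$ under $\P_{0,0}$ turns the $A$-part into $\exp(\lambda^2 \langle v,v'\rangle^2/(2n))$ and the $B$-part into a term of the form $\exp(\mu \langle v,v'\rangle \langle u,u'\rangle/n \cdot (\text{const}))$ — i.e.\ the cross terms couple the two independent copies only through the overlaps $\langle v,v'\rangle$ and $\langle u,u'\rangle$. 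The second step is then to carry out the expectation over $u,u'\sim\normal(0,I_p/p)$, which is a Gaussian integral producing something like $(1-\mu^2\langle v,v'\rangle^2/(\gamma n^2))^{-p/2}$ (valid once $n$ is large since the overlap is typically $O(\sqrt n)$), and finally the expectation over $v,v'$, for which $S := \langle v,v'\rangle/\sqrt n$ is, by a CLT / exact binomial computation, asymptotically $\normal(0,1)$.

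After these reductions the problem becomes: show that $\E_{S}\big[\exp\big(\tfrac{\lambda^2}{2}S^2\big)\cdot\big(1-\tfrac{\mu^2}{\gamma}\tfrac{S^2}{n}\big)^{-p/2}\big]$ stays bounded, where $S$ is (asymptotically) a standard Gaussian and $p/n\to 1/\gamma$. Using $(1-x/n)^{-p/2}\to \exp(\mu^2 S^2/(2\gamma))$ pointwise, the integrand converges to $\exp\big(\tfrac12(\lambda^2+\mu^2/\gamma)S^2\big)$, whose Gaussian expectation is finite precisely when $\lambda^2+\mu^2/\gamma<1$ (this is the familiar threshold for finiteness of $\E\exp(tZ^2)$, $Z\sim\normal(0,1)$, namely $t<1/2$). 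The main obstacle — and the only genuinely delicate point — is to upgrade this pointwise/heuristic convergence to a uniform bound: one must control the contribution of atypically large overlaps $|S|$ (the binomial tail of $\langle v,v'\rangle$ is not Gaussian, and the factor $(1-\mu^2 S^2/(\gamma n))^{-p/2}$ blows up when $S^2$ approaches $\gamma n/\mu^2$), and justify interchanging limit and expectation. I would handle this by splitting the expectation over $S$ into a bulk region $|S|\le n^{1/4}$ (say), where a careful non-asymptotic inequality of the form $(1-x/n)^{-p/2}\le \exp((\mu^2/\gamma + o(1)) S^2/2)$ holds and dominated convergence applies against a Gaussian with parameter strictly below $1$, and a tail region, where crude bounds on the binomial probability $\P(\langle v,v'\rangle = k)$ (e.g.\ $\binom{n}{(n+k)/2}2^{-n}\le e^{-k^2/(2n)}$ type estimates, or Hoeffding) beat the at-most-exponential-in-$n$ growth of the other factor. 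Assembling the two pieces gives a bound uniform in $n$, which via Lemma~\ref{lem:contiguity_suff} yields contiguity of $\P_{\lambda,\mu}$ to $\P_{0,0}$.
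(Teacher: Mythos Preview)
Your proposal is correct and follows the same second-moment route as the paper: reduce $\E_{0,0}\big[(\de\P_{\lambda,\mu}/\de\P_{0,0})^2\big]$ to an expectation over two independent copies of $(v,u)$, integrate out the Gaussian noise to obtain an expression in the overlaps $\langle v,v'\rangle$ and $\langle u,u'\rangle$, and show the result is finite exactly when $\lambda^2+\mu^2/\gamma<1$. The only difference is bookkeeping on the $u$-part: you integrate out $u,u'$ to a determinant factor (your formula should read $(1-\mu^2\langle v,v'\rangle^2/n^2)^{-p/2}$, without the extra $\gamma$), whereas the paper keeps $Y=\langle u_1,u_2\rangle$ as a variable, bounds the joint density of $(X,Y)=(\langle v_1,v_2\rangle/n,\,Y)$ by $e^{n h(s)}(1-y^2)^{p/2}$ (binomial entropy for $X$, spherical-coordinate density for $Y$), uses $h(s)\le -s^2/2$ and $\log(1-y^2)\le -y^2$, and lands on a two-dimensional Gaussian integral over $(s,y)$ whose finiteness is exactly the threshold condition.

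One caveat with your variant deserves attention. When $\mu>1$ (which is allowed under $\lambda^2+\mu^2/\gamma<1$ as soon as $\gamma>1$), the Gaussian integral over $u,u'$ actually \emph{diverges} on the event $\{|\langle v,v'\rangle|\ge n/\mu\}$; by Tonelli this makes the full second moment infinite, so the factor is not ``at most exponential in $n$'' there and your tail split as written cannot close. The paper sidesteps this by effectively working with $u$ on the unit sphere (so that $|Y|\le 1$ and the density $(1-y^2)^{p/2}$ supplies the needed decay). You can adopt the same device, or equivalently run the second-moment argument with a spherical prior on $u$ and then argue that contiguity under that prior transfers to the Gaussian prior.
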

To establish that consistent detection is possible above this boundary, we need the following lemma. Recall the matrices $A,B$ from the Gaussian model \eqref{eq:gaussiangraphmodel}, \eqref{eq:gaussiancovariatemodel}. 
\begin{lemma}\label{lemma:upper_bound}
Let $b_* = \frac{2\mu}{\lambda \gamma}$. Define 
\begin{align}
T = \sup_{\|x\| = \|y \| =1} \Big[ \langle x, Ax \rangle + b_* \langle x, By \rangle \Big]. \nonumber
\end{align}
\begin{enumerate}
\item[(i)] Under $\P_{0,0}$, as $n, p \to \infty$, $T \to 2 \sqrt{1 + \frac{b_*^2 \gamma}{4}} + b_*$ almost surely.  

\item[(ii)] Let $\lambda, \mu >0$, $\varepsilon>0$, with $\lambda^2 + \frac{\mu^2}{\gamma}>1 + \varepsilon $. Then as $n, p \to \infty$, 
\begin{align}
\P_{\lambda,\mu}\Big(T > 2 \sqrt{1 + \frac{b_*^2 \gamma}{4}} + b_* + \delta \Big) \to 1,  \nonumber
\end{align}
where $\delta := \delta(\varepsilon)>0$. 

\item[(iii)] Further, define 
\begin{align}
\tilde{T}(\tilde{\delta}) = \sup_{\|x\|= \|y\|=1, 0< \langle x, v \rangle < \tilde{\delta} \sqrt{n}} \Big[ \langle x, Ax \rangle + b_* \langle x, By \rangle \Big]. \nonumber
\end{align}
Then for each $\delta >0$, there exists $\tilde{\delta}>0$ sufficiently small, such that  as $n,p \to \infty$,
\begin{align}
\P_{\lambda, \mu} \Big( \tilde{T} (\tilde{\delta}) < 2 \sqrt{1 + \frac{b_*^2 \gamma}{4}} + b_* + \frac{\delta}{2} \Big) \to 1. \nonumber
\end{align}
%under $\P_{\lambda,\mu}$ $\tilde{T} \to 2 \sqrt{1 + \frac{b_*^2 \gamma}{4}} + b_*$ almost surely. 
\end{enumerate}
\end{lemma}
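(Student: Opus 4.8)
The plan is to deduce all three parts from a single asymptotic evaluation of $T$ (and of the constrained value $\tilde T(\tilde\delta)$), obtained by Gaussian process comparison. Under $\P_{\lambda,\mu}$ write $A=\tfrac{\lambda}{n}vv^{\sT}+W$ with $W$ a GOE matrix of off-diagonal variance $1/n$ (so $\lambda_{\max}(W)\to 2$) and $B=\sqrt{\mu/n}\,uv^{\sT}+G$ with $G$ having i.i.d.\ $\normal(0,1/p)$ entries; then
\begin{align*}
\< x,Ax\>+b_*\< Bx,y\>=\tfrac{\lambda}{n}\< v,x\>^2+b_*\sqrt{\tfrac{\mu}{n}}\,\< v,x\>\< u,y\>+N(x,y),
\end{align*}
where the noise field $N(x,y):=\< x,Wx\>+b_*\< Gx,y\>$ is centered Gaussian on the product of unit spheres with $\E[(N(x,y)-N(x',y'))^2]=\tfrac{4}{n}(1-\< x,x'\>^2)+\tfrac{2b_*^2}{p}(1-\< x,x'\>\< y,y'\>)$.

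Next I would sandwich $\E T$ between $\E\max\mathcal X_2$ and $\E\max\mathcal X_1$ via Sudakov--Fernique. For the upper bound take $\mathcal X_1(x,y)=(\text{same signal})+\< x,\tg_x\>+\< y,\tg_y\>$ with $\tg_x\sim\normal(0,(\tfrac4n+\tfrac{b_*^2}{p})I_n)$, $\tg_y\sim\normal(0,\tfrac{b_*^2}{p}I_p)$; the increment inequality $\E[(\Delta N)^2]\le\E[(\Delta\mathcal X_1)^2]$ follows from $1-a^2\le 2(1-a)$ and $1-ab\le(1-a)+(1-b)$ on $[-1,1]^2$. For the lower bound take $\mathcal X_2(x,y)=(\text{same signal})+\tfrac12\< x,\tW_x x\>+\tfrac12\< y,\tW_y y\>$ with $\tW_x,\tW_y$ independent GOE matrices of off-diagonal variances $\tfrac4n+\tfrac{b_*^2}{p}$ and $\tfrac{b_*^2}{p}$, tuned so that $\mathcal X_2$ and $N$ have the same variance on the spheres; then $\E[(\Delta\mathcal X_2)^2]\le\E[(\Delta N)^2]$ collapses to $(\< x,x'\>-\< y,y'\>)^2\ge 0$. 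Hence $\E\max\mathcal X_2\le\E T\le\E\max\mathcal X_1$, and standard Gaussian concentration applied to each of $T$, $\max\mathcal X_1$, $\max\mathcal X_2$ upgrades this to almost-sure and high-probability statements.

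I would then evaluate both bounds and observe they coincide. Parametrizing by the overlaps $\alpha=\< v,x\>/\sqrt n$, $\beta=\< u,y\>/\|u\|$, the optimization of $\mathcal X_1$ is separable, giving $\max\mathcal X_1\to\Psi^\star:=\max_{\alpha,\beta\in[-1,1]}\Psi(\alpha,\beta)$ with
\begin{align*}
\Psi(\alpha,\beta)=\lambda\alpha^2+b_*\sqrt\mu\,\alpha\beta+R_x\sqrt{1-\alpha^2}+R_y\sqrt{1-\beta^2},\qquad R_x=2\sqrt{1+\tfrac{b_*^2\gamma}{4}},\ \ R_y=b_*.
\end{align*}
For $\mathcal X_2$, fixing $\alpha,\beta$ decouples the bilinear signal term, leaving two maximizations of GOE quadratic forms under a linear overlap constraint, each equivalent to an eigenvalue problem for a rank-one perturbation of a GOE matrix; a Lagrange-multiplier/resolvent computation using the Stieltjes transform of the semicircle law (together with BBP-type control of rank-one deformations) gives $\tfrac12 g_x(\alpha)\to R_x\sqrt{1-\alpha^2}$ for $g_x(\alpha):=\max\{\< x,\tW_x x\>:\|x\|=1,\<v,x\>=\sqrt n\,\alpha\}$ (and likewise in $y$), so $\max\mathcal X_2$ converges to the \emph{same} $\Psi^\star$. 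Finally, using $1-\sqrt{1-t^2}\ge t^2/2$, $\Psi(\alpha,\beta)-(R_x+R_y)$ is dominated by the quadratic form with matrix $\bigl(\begin{smallmatrix}\lambda-R_x/2 & b_*\sqrt\mu/2\\ b_*\sqrt\mu/2 & -R_y/2\end{smallmatrix}\bigr)$, which after substituting $b_*=2\mu/(\lambda\gamma)$ is negative semidefinite exactly when $\lambda^2+\mu^2/\gamma\le 1$. So $\Psi^\star=R_x+R_y=2\sqrt{1+b_*^2\gamma/4}+b_*$ when $\lambda^2+\mu^2/\gamma\le1$; specializing to $\lambda=\mu=0$ (signal absent, optimum at $\alpha=\beta=0$) gives (i), while in the supercritical regime $\Psi^\star$ exceeds $R_x+R_y$ by a gap $\delta(\varepsilon)>0$ uniform over $\lambda^2+\mu^2/\gamma\ge1+\varepsilon$, giving (ii). For (iii), rerunning the Sudakov--Fernique upper bound over $\{0<\< v,x\><\tilde\delta\sqrt n\}$, i.e.\ $\alpha\in(0,\tilde\delta)$, yields $\tilde T(\tilde\delta)\to\sup_{\alpha\in(0,\tilde\delta),\beta}\Psi(\alpha,\beta)$, which by continuity of $\Psi$ and $\sup_\beta\Psi(0,\beta)=R_x+R_y$ lies below $2\sqrt{1+b_*^2\gamma/4}+b_*+\delta/2$ once $\tilde\delta$ is small.

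The main obstacle is the tightness of the comparison in \emph{both} directions simultaneously: choosing the variances of $\tg_x,\tg_y,\tW_x,\tW_y$ so that the two Sudakov--Fernique inequalities hold and the bounds share a common limit. The crux is the identity $\tfrac12 g_x(\alpha)=R_x\sqrt{1-\alpha^2}$ for the constrained GOE quadratic form — the random-matrix core of the argument — which makes the lower bound $\mathcal X_2$ sharp; it is not a priori obvious that replacing the bilinear noise $b_*\< Gx,y\>$ by two decoupled GOE quadratic forms loses nothing. Secondary points: the concentration/uniform-integrability needed to pass from $\E T$ to the almost-sure limit; uniform control (in $\alpha,\beta$) of the $o(1)$ corrections from the overlap constraints and from $\|u\|\to1$; and checking that $\delta(\varepsilon)$ stays bounded away from $0$ over the unbounded region $\lambda^2+\mu^2/\gamma\ge1+\varepsilon$.
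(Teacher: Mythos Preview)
Your proposal is correct and takes essentially the same approach as the paper: the same Sudakov--Fernique sandwich with the same two comparison processes $\mathcal X_1$ (independent Gaussian vectors) and $\mathcal X_2$ (two independent GOE quadratic forms), with the same variance choices, and the same BBP/rank-one-deformation input to evaluate the lower bound. The only difference is book-keeping in the evaluation step --- the paper decouples the bilinear signal term via an auxiliary parameter $t$ (AM--GM) and expresses the limit through the function $\sG(\kappa,\sigma^2)$, whereas you parametrize directly by the overlaps $(\alpha,\beta)$ and the explicit function $\Psi$; these are dual descriptions of the same variational problem, and your Hessian-at-the-origin criterion is equivalent to the paper's condition $\sG'(2\lambda+b\mu t_*,4+b^2)>0$.
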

We defer the proofs of Lemma \ref{lem:detection_lowerbdd} and Lemma \ref{lemma:upper_bound} to Sections \ref{sec:proof_lowerbdd} and Section \ref{sec:proof_upperbdd} respectively, and complete the proof of Theorem \ref{thm:gaussian}, armed with these results. 

\begin{proof}[Proof of Theorem \ref{thm:gaussian}]
The proof is comparatively straightforward, once we have Lemma \ref{lem:detection_lowerbdd} and \ref{lemma:upper_bound}. Note that Lemma \ref{lem:detection_lowerbdd} immediately implies that $\P_{\lambda,\mu}$ is contiguous to $\P_{0,0}$ for $\lambda^2 + \frac{\lambda^2}{\gamma}<1$. 

Next, let $\lambda, \mu>0$ such that $\lambda^2 + \frac{\mu^2}{ \gamma} > 1 + \varepsilon$ for some $\varepsilon>0$. In this case, consider the test which rejects the null hypothesis $\mathrm{H}_0$ if $T > 2 \sqrt{1 + \frac{b_*^2 \gamma}{4}} + b_* + \delta$. Lemma \ref{lemma:upper_bound} immediately implies that the Type I and II errors of this test vanish in this setting. 

Finally, we prove that weak recovery is possible whenever $\lambda^2 + \frac{\mu^2}{\gamma} >1$. To this end, let $(\hat{x}, \hat{y})$ be the maximizer of $\langle x, Ax \rangle + b_* \langle y, B x \rangle$, with $\|x \| = \| y \| =1$. Combining parts $(ii)$ and $(iii)$ of Lemma \ref{lemma:upper_bound}, we conclude that $\hat{x}$ achieves weak recovery of the community assignment vector. 
\end{proof}

\subsection{Proof of Lemma \ref{lem:detection_lowerbdd}}
\label{sec:proof_lowerbdd}
Fix $\lambda, \mu >0$ satisfying $\lambda^2 + \frac{\mu^2}{\gamma} <1$. 
We start with the likelihood,
\begin{align}
&L(u,v) = \frac{\de \P_{\lambda, \mu}}{\de \P_{0,0}} = L_1 (u,v) L_2 (u,v) , \nonumber\\
&L_1(u,v) = \exp\Big[ \frac{\lambda}{2} \langle  A, v v^{T} \rangle - \frac{\lambda^2 n}{4}\Big]. \label{eqgoe}\\
&L_2(u,v) = \exp\Big[ p \sqrt{\frac{\mu}{n}} \langle B, uv^{T} \rangle - \frac{\mu p}{2 } \|u\|^2 \Big]. \label{eqwishart}
\end{align}
We denote the prior joint distribution of (u,v) as $\pi$, and set 
\begin{align}
L_{\pi} = \E_{(u,v) \sim \pi } \Big[ L(u,v) \Big]. \nonumber
\end{align}
To establish contiguity, we bound the second moment of $L_{\pi}$ under the null hypothesis, and appeal to Lemma \ref{lem:contiguity_suff}. In particular, we denote $\E_{0}[\cdot]$ to be the expectation operator under the distribution $P_{(0,0)}$ and compute 
\begin{align}
\E_{0}[ L_{\pi}^2 ] = \E_{0}[ \E_{(u_1, v_1), (u_2, v_2)}\Big[ L(u_1, v_1) L(u_2 , v_2) \Big] \Big] = \E_{(u_1, v_1), (u_2, v_2)} \Big[ \E_{0} \Big[ L(u_1, v_1) L(u_2 , v_2) \Big] \Big], \nonumber 
\end{align}
where $(u_1, v_1) , (u_2, v_2)$ are i.i.d. draws from the prior $\pi$, and the last equality follows by Fubini's theorem. We have, using \eqref{eqgoe} and \eqref{eqwishart}, 
\begin{align}
&L(u_1, v_1) L( u_2, v_2) \nonumber\\
&= \exp\Big[ - \frac{\lambda^2 n}{2} - \frac{\mu p}{2 n} \Big(\| u_1 \|^2 + \| u_2 \|^2 \Big)  + \frac{\lambda}{2} \Big\langle A, v_1 v_1^T + v_2 v_2^T \Big\rangle + p\sqrt{\frac{\mu}{n}} \Big\langle B, u_1 v_1^T + u_2 v_2^T  \Big\rangle \Big] \nonumber. 
\end{align}
Taking expectation under $\E_{0}[\cdot]$, upon simplification, we obtain,
\begin{align}
\E_0[L_{\pi}^2 ] &= \E_{(u_1, v_1),(u_2,v_2)}\Big[ \exp\Big[ \frac{\lambda^2}{2n} \langle v_1 , v_2 \rangle^2  + \frac{\mu p}{n} \langle u_1, u_2 \rangle \langle v_1, v_2 \rangle\Big] \Big]\\
&=\E_{(u_1, v_1),(u_2,v_2)}\Big[\exp \Big[n \Big( \frac{\lambda^2}{2} \Big(\frac{\<v_1, v_2\>}{n} \Big)^2 + \frac{\mu}{\gamma} \< u_1, u_2 \> \frac{\<v_1, v_2\>}{n}   \Big) \Big]\Big] \\
&=\E\Big[\exp \Big[n \Big( \frac{\lambda^2}{2} X^2 + \frac{\mu}{\gamma} XY  \Big) \Big]\Big]
\label{eq:sec_moment}
\end{align}

Here that $X, Y\in [-1,+1]$ are independent, with $X$ distributed as the normalized sum of $n$ Radamacher random variables, and $Y$ as the first coordinate of a uniform vector on the unit sphere. In particular, defining $h(s) = -((1+s)/2)\log((1+s))-((1-s)/2)\log((1-s))$, and denoting by $f_Y$ the density of $Y$, 
we have, for $s\in (2/n){\mathbb Z}$ 
\begin{align}
\prob\big(X= s\big) &= \frac{1}{2^n}\binom{n}{n(1+s/2)}\\
&\le \frac{C}{n^{1/2}}\, e^{n h(s)}\\
f_Y(y) & = \frac{\Gamma(p/2)}{\Gamma((p-1)/2)\Gamma(1/2)}
(1-y^2)^{(p-3)/2}\\
& \le C\sqrt{n} (1-y^2)^{p/2}\, .
\end{align}
Approximating sums by integrals, and using $h(s)\le -s^2/2$, we get
\begin{align}
\E_0[L_{\pi}^2 ]&\le C n\int_{[-1,1]^2}
\exp\Big\{ n\Big[\frac{\lambda^2}{2} s^2+ \frac{\mu}{\gamma} s y+
h(s) +\frac{1}{2\gamma}\log(1-y^2) \Big\} \de s \de y\\
&\le Cn \int_{{\mathbb R}^2} 
\exp\Big\{ n\Big[\frac{\lambda^2}{2} s^2+ \frac{\mu}{\gamma} s y
-\frac{s^2}{2} -\frac{y^2}{2\gamma} \Big]\Big\} \de s \de y
\le C'\, .
\end{align}
The last step holds for $\lambda^2+\mu^2/\gamma<1$.

Next, we turn to the proof of Lemma \ref{lemma:upper_bound}. This is the main technical contribution of this paper, and uses a novel Gaussian process comparison argument based on 
Sudakov-Fernique comparison. 
\subsection{A Gaussian process comparison result}
Let $Z\sim\reals^{p \times n}$ and $W\sim\reals^{n\times n}$ denote random matrices
with independent entries as follows. 
\begin{align}\label{eq:model}
 W_{ij} &\sim \begin{cases}
 \normal(0, \rho/n) &\text{ if } i < j \\
 \normal(0, 2\rho/n) &\text{ if } i = j
 \end{cases} \\
 \text{ where }W_{ij} &= W_{ji}, \nonumber \\
 Z_{ai} &\sim \normal(0, \tau/p). 
 \end{align} 

 For an integer $N > 0$,  we let $\sphere^{N}$ denote the sphere of radius $\sqrt{N}$ 
 in $N$ dimensions, i.e. $\sphere^N = \{x\in\reals^N: \norm{x}_2^2 = N\}$.
Furthermore let $u_0 \in \sphere^p$ and $v_0\in \{\pm 1\}^n$ be fixed 
vectors. We denote the standard inner product between vectors $x, y \in\reals^N$ as $\<x, y\> = \sum_{i} x_i y_i$.
The normalized version will be useful as well: we define $\<x, y\>_N \equiv \sum_i x_iy_i/N$. 

 We are interested in characterizing the behavior of the following optimization
 problem in the limit high-dimensional limit $p, n\to \infty$ with constant
 aspect ratio $n/p = \gamma \in (0, \infty)$. 

 \begin{align}\label{eq:basictest}
\OPT(\lambda, \mu, b)  &\equiv \frac{1}{n} \E \max_{(x, y) \in\sphere^{n} \times \sphere^{p}}
\Big[ \big( \frac{\lambda}{n} \< x, v_0\>^2 + \<x, W x\>\big) +  b \,\big(\sqrt{\frac{\mu}{np}}\<x, v_0\>\, \<y, u_0\> + \<y, Z x\>
\big) \Big] . \nonumber
 \end{align}

We  now introduce two different comparison processes which
give upper and lower bounds to 
$\OPT(\lambda, \mu, b)$. Their asymptotic values will coincide in the high dimensional limit $n,p \to \infty$ with $n/p = \gamma$. Let $g_x$, $g_y$, $W_x$ and $W_y$
be:
\begin{align}
 g_x &\sim \normal(0, (4\rho + b^2\tau) \id_n) \\
 g_y  &\sim \normal(0,  b^2 \tau n/ p \id_p), \\
 (W_x)_{ij} &\sim \begin{cases}
 \normal(0, (4 \rho + b^2 \tau) /n ) &\text{ if } i < j \\
 \normal(0, 2(4 \rho + b^2 \tau)/n) &\text{ if } i = j
 \end{cases} \\
 (W_y)_{ij} &\sim \begin{cases}
 \normal(0, b^2 \tau n/p^2 ) &\text{ if } i < j \\
 \normal(0, 2 b^2 \tau n/p^2) &\text{ if } i = j
 \end{cases}
\end{align}

\begin{proposition}\label{prop:slepianupper}
We have
\begin{align}
\OPT(\lambda, \mu, b) &\le \frac{1}{n} \E \max_{(x, y) \in \sphere^n\times\sphere^p} 
\frac{\lambda}{n} \<x, v_0\>^2 + \<x, g_x\>
+ b \sqrt{\frac{\mu}{np}} \<x, v_0\> \<y, u_0\> + \<y, g_y\> \nonumber\\
\OPT(\lambda, \mu, b) &\ge \frac{1}{n} \E \max_{(x, y) \in \sphere^n\times\sphere^p} 
\frac{\lambda}{n} \<x, v_0\>^2 + \frac{1}{2} \<x, W_x x\> 
+ b\sqrt{\frac{\mu}{np}} \<x, v_0\> \<y, u_0\> + \frac{1}{2}\<y, W_y y\> \label{eq:lower}
\end{align}
\end{proposition}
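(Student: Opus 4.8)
The plan is to derive both inequalities from a single tool: the Sudakov--Fernique Gaussian comparison inequality, applied on the compact index set $\sphere^n\times\sphere^p$ in the form that permits the two compared processes to carry a common (possibly nonzero) deterministic mean. Write the objective inside $\OPT(\lambda,\mu,b)$ as a Gaussian process $F(x,y)$ and decompose $F(x,y)=m(x,y)+G(x,y)$, where $m(x,y)=\tfrac{\lambda}{n}\<x,v_0\>^2+b\sqrt{\mu/(np)}\<x,v_0\>\<y,u_0\>$ is deterministic and $G(x,y)=\<x,Wx\>+b\<y,Zx\>$ is centered. Write the two comparison processes in the statement as $\cX_1(x,y)=m(x,y)+H_1(x,y)$ and $\cX_2(x,y)=m(x,y)+H_2(x,y)$ with $H_1(x,y)=\<x,g_x\>+\<y,g_y\>$ and $H_2(x,y)=\tfrac12\<x,W_xx\>+\tfrac12\<y,W_yy\>$; by construction all three processes have the \emph{same} mean function $m$. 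Because of this, the Gaussian-interpolation proof of Sudakov--Fernique goes through with $m$ playing no role (its $\theta$-derivative vanishes), and the comparison reduces to comparing the increments of the \emph{centered} parts $G$, $H_1$, $H_2$, which are genuine variances.

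The first step is to record the covariance kernels. Using independence of $W$ and $Z$ and the elementary identities $\E[\<x,Wx\>\<x',Wx'\>]=\tfrac{2\rho}{n}\<x,x'\>^2$ (for the stated GOE scaling) and $\E[\<y,Zx\>\<y',Zx'\>]=\tfrac{\tau}{p}\<x,x'\>\<y,y'\>$, one gets $\E[G(x,y)G(x',y')]=\tfrac{2\rho}{n}\<x,x'\>^2+\tfrac{b^2\tau}{p}\<x,x'\>\<y,y'\>$. Similarly $\E[H_1(x,y)H_1(x',y')]=(4\rho+b^2\tau)\<x,x'\>+\tfrac{b^2\tau n}{p}\<y,y'\>$ and $\E[H_2(x,y)H_2(x',y')]=\tfrac{4\rho+b^2\tau}{2n}\<x,x'\>^2+\tfrac{b^2\tau n}{2p^2}\<y,y'\>^2$. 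On the spheres the diagonal values are constants: $\E G^2=\E H_2^2=2\rho n+b^2\tau n$ and $\E H_1^2=4\rho n+2b^2\tau n$.

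The second step verifies the two increment comparisons. Parametrize by $s=\<x,x'\>/n$ and $t=\<y,y'\>/p$, so that $|s|,|t|\le 1$ on the spheres by Cauchy--Schwarz. Writing $G'=G(x',y')$ and similarly for $H_1,H_2$, a short computation collapses the differences of squared increments to $\E(H_1-H_1')^2-\E(G-G')^2=2n\big[2\rho(1-s)^2+b^2\tau(1-s)(1-t)\big]$ and $\E(G-G')^2-\E(H_2-H_2')^2=b^2\tau n\,(s-t)^2$. Both right-hand sides are nonnegative: the first because $s,t\le 1$ and $\rho,\tau\ge 0$, the second trivially. Sudakov--Fernique then yields $\E\max_{(x,y)}F\le\E\max_{(x,y)}\cX_1$ and $\E\max_{(x,y)}\cX_2\le\E\max_{(x,y)}F$; dividing by $n$ gives the two displayed bounds.

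I do not anticipate a real obstacle in proving the Proposition as stated: its content is the bookkeeping of the half-dozen Gaussian variances and the reduction of the two increment comparisons to the one-line inequalities above. The one point to handle explicitly is the use of Sudakov--Fernique with a nonzero but \emph{common} mean function $m$ (standard, precisely because the comparison is then an inequality between the variances of the centered parts $G$, $H_1$, $H_2$), together with the remark that continuity of the three processes and compactness of $\sphere^n\times\sphere^p$ make all suprema finite and the argument rigorous. The genuinely nontrivial ingredient, as the paper emphasizes, is the prior \emph{choice} of the lower comparison process $\cX_2$ with its two decoupled GOE terms, whose maximum will subsequently be shown to match the (more routine) upper bound $\cX_1$ asymptotically; but that choice is an input here, not something this Proposition proves.
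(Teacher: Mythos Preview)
Your proposal is correct and follows essentially the same approach as the paper's own proof: apply Sudakov--Fernique on $\sphere^n\times\sphere^p$, compute the squared increments of the three centered Gaussian processes, and reduce the two comparisons to the identities $\E(H_1-H_1')^2-\E(G-G')^2=4\rho n(1-s)^2+2b^2\tau n(1-s)(1-t)\ge 0$ and $\E(G-G')^2-\E(H_2-H_2')^2=b^2\tau n(s-t)^2\ge 0$, which match the paper's formulas exactly up to normalization. Your explicit remark that the deterministic mean $m(x,y)$ is common to all three processes (and hence drops out of the comparison) is a helpful clarification that the paper leaves implicit.
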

\begin{proof}
The proof is via Sudakov-Fernique inequality. First we compute the 
distances induced by the three processes. For any pair $(x, y), (x', y')$: 
\begin{align}
&\frac{1}{4n}\big(\E\{ (\<x, Wx\> + b \<y, Zx\> -   \<x', Wx'\> - b\<y', Zx'\>)^2   \} \big)
 = \rho  ( 1 - \<x, x'\>_n^2) + \frac{b^2\tau}{2}   (1 - \<x, x'\> _n\<y, y'\>_p) \nonumber\\
&\frac{1}{n} \big(\E\{ ( \<x, g_x\> +\<y, g_y\>  - \<x', g_x\>  - \<y', g_y\>)^2\} \big) 
 =  2( 4 \rho + b^2 \tau )  (1- \<x, x'\>_n) + 2 b^2 \tau   ( 1 - \<y, y'\>_p) \nonumber \\
&\frac{1}{4n} \big(\E\{ (\<x, W_x x\> + \<y, W_y y\> - \<x', W_x x\> - \<y', W_y y'\>)^2  \} \big)
 =  (  \rho + \frac{b^2 \tau}{4} ) (1 - \<x, x'\>_n^2) + \frac{b^2  \tau}{4}  (1-\<y, y'\>_p^2). \nonumber
\end{align}
This immediately gives:
\begin{align}
&\frac{1}{n}\big(\E\{ (\<x, Wx\> + b \<y, Zx\> -   \<x', Wx'\> - b \<y', Zx'\>)^2   \} \big) - \nonumber\\
&\frac{1}{n} \big(\E\{ ( \<x, g_x\> +\<y, g_y\>  - \<x', g_x\>  - \<y, g_y'\>)^2\} \big) \nonumber \\
&=  -4 \rho  ( 1 - \<x, x'\>_n)^2 - 2{ b^2 \tau}   (1 - \<x, x'\> _n)(1-\<y, y'\>_p)\le 0, \nonumber\\
&\frac{1}{4n}\big(\E\{ (\<x, Wx\> + b \<y, Zx\> -   \<x', Wx'\> - b \<y', Zx'\>)^2   \} \big) - \nonumber\\
&\frac{1}{4n} \big(\E\{ (\<x, W_x x\> + \<y, W_y y\> - \<x', W_x x\> - \<y', W_y y'\>)^2  \} \big) \nonumber \\
&= \frac{b^2 \tau}{4} (\<x, x'\>_n - \<y, y'\>_p)^2 \ge 0.\nonumber
\end{align}
The claim follows. 
\end{proof}

An immediate corollary of this is the following tight 
characterization for the null value, i.e. the case when $\mu = \lambda = 0$:
\begin{corollary}\label{cor:zerosignal}
For any $\rho, \tau$ as $n, p$ diverge with $n/p\to \gamma$, we have
\begin{align}
\lim_{n\to\infty} \OPT(0, 0) & = \sqrt{4 \rho + b^2  \tau} + b \sqrt{\frac{\tau}{\gamma}}
\end{align}
\end{corollary}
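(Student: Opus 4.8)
The plan is to obtain the corollary by simply specializing Proposition~\ref{prop:slepianupper} to $\lambda=\mu=0$. In that regime the cross term $b\sqrt{\mu/(np)}\<x,v_0\>\<y,u_0\>$ disappears, so \emph{both} comparison processes decouple completely into an $x$-part and a $y$-part, and each part is one of two elementary objects: a linear functional of an isotropic Gaussian vector maximized over a sphere, or a quadratic form of a GOE matrix maximized over a sphere. I would therefore evaluate the four resulting one-variable extremal problems separately and check that the upper and lower bounds of Proposition~\ref{prop:slepianupper} produce the same limit.

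Carrying this out: for the Gaussian-vector problems one uses $\max_{x\in\sphere^n}\<x,g_x\>=\sqrt n\,\|g_x\|_2$ and $\max_{y\in\sphere^p}\<y,g_y\>=\sqrt p\,\|g_y\|_2$ together with $\E\|g\|_2=\sigma\sqrt N\,(1+o(1))$ for $g\sim\normal(0,\sigma^2\id_N)$; plugging in the prescribed variances ($4\rho+b^2\tau$ for $g_x$ and $b^2\tau n/p$ for $g_y$) and using $n/p\to\gamma$, the upper bound converges to $\sqrt{4\rho+b^2\tau}+b\sqrt{\tau/\gamma}$. For the quadratic problems one uses $\max_{x\in\sphere^n}\<x,W_xx\>=n\,\lambda_{\max}(W_x)$ and $\max_{y\in\sphere^p}\<y,W_yy\>=p\,\lambda_{\max}(W_y)$, together with the classical fact that the largest eigenvalue of a symmetric Gaussian matrix with off-diagonal variance $\sigma^2/N$ tends to $2\sigma$; since $W_x$ has off-diagonal variance $(4\rho+b^2\tau)/n$ and $W_y$ has off-diagonal variance $b^2\tau n/p^2$, the lower bound (which carries the extra factors $\tfrac12$) equals $\tfrac12\lambda_{\max}(W_x)+\tfrac{p}{2n}\lambda_{\max}(W_y)$ and hence converges to $\sqrt{4\rho+b^2\tau}+b\sqrt{\tau/\gamma}$ as well (again using $p/n\to 1/\gamma$). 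Matching the two bounds gives the claimed limit.

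I do not expect a genuine obstacle here — the statement really is an immediate corollary. The only thing to be slightly careful about is that $\OPT$ is defined as an expectation, so one must pass from the almost-sure convergence of $\|g\|_2/\sqrt N$ and of $\lambda_{\max}$ of a GOE-scaled matrix to convergence in expectation; this is routine via Gaussian concentration of Lipschitz functions, which yields both concentration and uniform integrability. The one conceptual point worth flagging is \emph{why} the Sudakov--Fernique sandwich is tight in the null case: for a GOE-scaled matrix $\tfrac12\lambda_{\max}$ converges to the same value $\sigma$ as the normalized norm of the matching isotropic Gaussian vector, so the $\tfrac12$'s in the lower comparison process exactly compensate for having replaced linear terms by quadratic ones.
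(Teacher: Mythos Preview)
Your proposal is correct and is exactly the argument the paper has in mind: the corollary is stated immediately after Proposition~\ref{prop:slepianupper} with no separate proof, so specializing the two comparison inequalities to $\lambda=\mu=0$, decoupling into $x$- and $y$-problems, and invoking the elementary limits $\E\|g\|/\sqrt N\to\sigma$ and $\lambda_{\max}(\text{GOE})\to 2\sigma$ is precisely what is intended. Your care about passing from almost-sure convergence to convergence in expectation via Gaussian concentration is appropriate and fills the only small gap.
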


Note that this upper bound generalizes the maximum eigenvalue and singular value 
bounds of $W$, $Z$ respectively. In particular, the case $\tau = 0$ corresponds
to the maximum eigenvalue of $W$, which yields $\OPT = 2\sqrt{\rho}$ while the maximum singular value
of $Z$ can be recovered by setting $\rho$ to 0 and $b$ to 1, yielding 
$\OPT = \sqrt{\tau} (1+\gamma^{-1/2})$. Corollary \ref{cor:zerosignal} demonstrates the limit
for the case when $\mu = \lambda = 0$. The following theorem
gives the limiting value when $\lambda, \mu$ may be nonzero. 
\begin{theorem}\label{thm:main_comparison}
Suppose $\sG: \reals\times\reals_+\to\reals$ is
as follows:
\begin{align}
\sG(\kappa, \sigma^2) & = \begin{cases}
{\kappa}/{2} + {\sigma^2}/{2\kappa} &\text{ if } \kappa^2 \ge \sigma^2, \\
\sigma &\text{ otherwise.}
\end{cases} \label{def:G}
\end{align}
Then the optimal value $\OPT(\lambda, \mu)$ is 
\begin{align}
\lim_{n\to\infty} \OPT (\lambda, \mu) &=
\min_{t\ge 0}\left\{\sG( 2 \lambda + b \mu t, 4 \rho + b^2\tau) +\gamma^{-1} \sG( b/t, b^2 \gamma \tau)\right\}.
\end{align}
If the minimum above occurs at $t = t_* $ 
such that $\sG'(2\lambda + b\mu t_*, 4\rho + b^2 \tau) = \partial_\kappa \sG(\kappa, 4\rho + b^2\tau)|_{\kappa = 2\lambda + b\mu t_*} > 0$, then
$\lim_{n\to\infty}\OPT(\lambda, \mu) >   \sqrt{4\rho + b^2\tau} + \gamma^{-1}\sqrt{ \frac{\tau}{\gamma}}$. 
 \end{theorem}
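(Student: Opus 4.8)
The plan is to apply the Sudakov--Fernique comparison of Proposition~\ref{prop:slepianupper} and reduce both the upper and lower bounds to explicit variational problems that turn out to have the same limit. First I would handle the upper bound: in the process with linear Gaussian terms $\<x,g_x\> + \<y,g_y\>$, the maximization over $y\in\sphere^p$ is a maximization of an affine function of $y$ over a sphere, hence solved in closed form by $y \propto b\sqrt{\mu/(np)}\<x,v_0\> u_0 + g_y$; substituting this back leaves a maximization over $x\in\sphere^n$ of a quadratic-plus-linear functional whose value concentrates. To evaluate it I would introduce a Lagrange-multiplier/parametrization in terms of the overlap $m = \<x,v_0\>/n$: for fixed $m$, the remaining problem is again an affine maximization over a sphere slice, and optimizing over $m$ produces a one-dimensional minimization that yields the function $\sG$ as a Legendre-type transform. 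The key analytic fact is that for a sphere the maximum of $\frac{\kappa}{n}\<x,v_0\>^2 + \<x,g\>$ with $\|g\|^2/n \to \sigma^2$ equals $n\,\sG(\kappa,\sigma^2)(1+o(1))$, which is an elementary Lagrangian computation (the two branches of $\sG$ corresponding to whether the rank-one term dominates).

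Next I would treat the lower bound, where the comparison process has GOE-type quadratic terms $\frac12\<x,W_xx\> + \frac12\<y,W_yy\>$ coupled only through the rank-one term $\<x,v_0\>\<y,u_0\>$. Here I would decouple by writing $\<x,v_0\>\<y,u_0\> = \frac{1}{4t}\<x,v_0\>^2 t + \ldots$ --- more precisely, use that for any $t>0$, $2ab \le t a^2 + t^{-1} b^2$ and optimize, which after introducing the parameter $t$ separates the problem into an $x$-problem $\max_x \frac{2\lambda + b\mu t}{n}\cdot\frac{\<x,v_0\>^2}{?} + \frac12\<x,W_xx\>$ and a $y$-problem of the same shape. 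Each is precisely the top eigenvalue of a rank-one deformation of a GOE matrix, whose limit is governed by the BBP transition and equals exactly $\sG(\kappa,\sigma^2)$ with the appropriate $(\kappa,\sigma^2)$: when the deformation strength exceeds the spectral edge the eigenvalue pops out (first branch), otherwise it sticks at the edge (second branch). Taking $n,p\to\infty$ and then optimizing the decoupling parameter $t$ gives the $\min_{t\ge0}$ expression, and matching it against the upper bound (which after its own optimization yields the same $\min_t$) establishes the formula for $\lim \OPT(\lambda,\mu)$.

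For the final assertion --- that a strict-positivity condition on $\partial_\kappa\sG$ at the optimal $t_*$ forces $\lim\OPT(\lambda,\mu)$ to strictly exceed the null value $\sqrt{4\rho + b^2\tau} + \gamma^{-1}\sqrt{\tau/\gamma}$ --- I would argue by a first-order/envelope comparison: the null value is exactly $\sG(2\lambda+b\mu t, 4\rho+b^2\tau) + \gamma^{-1}\sG(b/t, b^2\gamma\tau)$ evaluated at $\lambda=\mu=0$ (equivalently on the flat branch of $\sG$), and the hypothesis $\partial_\kappa\sG > 0$ at $\kappa = 2\lambda + b\mu t_*$ means we are on the first (strictly increasing) branch, so increasing $\kappa$ away from the degenerate value strictly increases that term while the other term is unaffected at fixed $t$; since $t_*$ is the minimizer, a short convexity/continuity argument upgrades this to a strict gap for the minimized expression. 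I would verify that the null value is indeed attained by plugging $\lambda=\mu=0$ into the min-formula and checking the minimizer structure, which reduces to Corollary~\ref{cor:zerosignal}.

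The main obstacle I anticipate is making the passage from the finite-$n$ expected-max quantities to the deterministic variational formulas fully rigorous and uniform: this requires (i) concentration of the maxima around their expectations (Gaussian concentration of measure, which is standard), and (ii) justifying the interchange of the limit with the optimization over the overlap parameter $m$ and the decoupling parameter $t$ --- in particular showing the one-dimensional outer optimizations commute with $n\to\infty$, which needs equicontinuity or an explicit two-sided bound for each fixed $t$ followed by optimizing. Handling the boundary/non-smooth behavior of $\sG$ at $\kappa^2 = \sigma^2$ (the BBP threshold) cleanly, so that the ``if the minimum occurs at $t_*$ with $\sG' > 0$'' caveat is precisely what controls strictness, is the delicate endgame.
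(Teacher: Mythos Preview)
Your upper-bound sketch is essentially what the paper does: after the Sudakov--Fernique comparison to the linear process, parametrize by the two overlaps and decouple the cross term via $2\sqrt{ab}=\min_{t>0}(ta+b/t)$, reducing each piece to a closed-form sphere maximum and obtaining $\min_{t\ge 0}\{\sG(2\lambda+b\mu t,4\rho+b^2\tau)+\gamma^{-1}\sG(b/t,b^2\gamma\tau)\}$. Your variant of first maximizing out $y$ is fine but leaves you with a one-variable $\max_{m_1}$; you would still need the AM--GM reparametrization to recover the $\min_t$ form that matches the lower bound.

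The lower bound, however, has a real gap. You propose to decouple the cross term in $\cX_2$ by the inequality $2ab\le ta^2+t^{-1}b^2$. But this replaces $\cX_2(x,y)$ by something \emph{larger} pointwise, so after maximizing you only get $\max_{x,y}\cX_2\le \min_t\bigl(\max_x H_t(x)+\max_y K_t(y)\bigr)$ --- an upper bound on the lower-bound process, which is the wrong direction. From $\OPT\ge\max\cX_2$ and $\max\cX_2\le\min_t\{\cdots\}$ you cannot conclude $\OPT\ge\min_t\{\cdots\}$; equivalently, $\max_{x,y}\min_t\le\min_t\max_{x,y}$ goes against you.

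The missing ingredient is the BBP \emph{eigenvector} (not just eigenvalue) limit. The paper fixes $t=t_*$, the minimizer of the upper-bound formula, lets $(\tilde x,\tilde y)$ be the top eigenvectors of the two decoupled deformed-GOE matrices, and writes $\OPT\ge\cX_2(\tilde x,\tilde y)=H_{t_*}(\tilde x)+K_{t_*}(\tilde y)-\Delta$, where $\Delta\ge 0$ is the AM--GM slack at $(\tilde x,\tilde y)$. The eigenvalue part of BBP gives $H_{t_*}(\tilde x)+K_{t_*}(\tilde y)\to\sG(2\lambda+b\mu t_*,\cdot)+\gamma^{-1}\sG(b/t_*,\cdot)$; the eigenvector part gives $\langle\tilde x,v_0\rangle^2/n^2\to 2\sG'(2\lambda+b\mu t_*,\cdot)$ and $\langle\tilde y,u_0\rangle^2/p^2\to 2\sG'(b/t_*,\cdot)$. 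Plugging these into $\Delta$ and using the stationarity condition $\mu\gamma t_*^2\,\sG'(2\lambda+b\mu t_*,\cdot)=\sG'(b/t_*,\cdot)$ shows that the slack vanishes in the limit, which is exactly what makes the lower bound match. Without the overlap asymptotics and the stationarity link you have no control on $\Delta$, and the argument does not close. Your sketch invokes only the BBP eigenvalue transition; you need Theorem~\ref{thm:rankonedeformed} in full (overlaps included), together with the choice $t=t_*$ dictated by the upper bound.
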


\subsection{Proof of Theorem \ref{thm:main_comparison}: the upper bound}
 The following lemma removes
the effect of the projection of $g_x$ ($g_y$) along $v_0$ (resp. $u_0$).
Let $F(x, y) = \frac{1}{n}[\lambda x_1^2 + \<x, g_x\> + b \sqrt{\mu}x_1 y_1 + \<y, g_y\>]$.
Further, let $\widetilde{g}_x$ ($\widetilde{g}_y$) be the vectors obtained
by setting the first coordinate of $g_x$ (resp. $g_y$) to zero, and
$\widetilde{F}(x, y) = \frac{1}{n} [\lambda x_1^2 + \<x, \tg_x\> + b \sqrt{\mu}x_1 y_1 + \<y, \tg_y\>]$.
\begin{lemma}\label{lem:removeprojection}
The optima of $F$ and $\widetilde{F}$ differ by at most $o(1)$. More precisely:
\begin{align}
\Big| \E\max_{x, y} F(x, y) -  \E \max_{x, y}\widetilde{F}(x, y) \Big|  = O\Big( \frac{1}{\sqrt{n}} \Big) \, .\nonumber 
\end{align}
\end{lemma}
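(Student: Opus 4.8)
The plan is to bound the difference between $F$ and $\widetilde F$ uniformly over the constraint set $\sphere^n\times\sphere^p$, and then take expectations. Since $F(x,y)-\widetilde F(x,y)=\frac{1}{n}\big[(g_x)_1 x_1+(g_y)_1 y_1\big]$, I would first observe that
\begin{align}
\Big|\max_{x,y}F(x,y)-\max_{x,y}\widetilde F(x,y)\Big|
\le \sup_{(x,y)\in\sphere^n\times\sphere^p}\big|F(x,y)-\widetilde F(x,y)\big|
= \frac{1}{n}\sup_{(x,y)}\big|(g_x)_1 x_1+(g_y)_1 y_1\big|.\nonumber
\end{align}
On $\sphere^n$ we have $|x_1|\le\sqrt{n}$ and on $\sphere^p$ we have $|y_1|\le\sqrt{p}$, so the right-hand side is at most $\frac{1}{n}\big(\sqrt{n}\,|(g_x)_1|+\sqrt{p}\,|(g_y)_1|\big)$.

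Next I would take expectations and use the explicit variances from the definitions of $g_x,g_y$. We have $(g_x)_1\sim\normal(0,4\rho+b^2\tau)$, so $\E|(g_x)_1|=O(1)$, and $(g_y)_1\sim\normal(0,b^2\tau n/p)$, so $\E|(g_y)_1| = O(\sqrt{n/p})=O(1)$ since $p/n\to 1/\gamma$. Therefore
\begin{align}
\Big|\E\max_{x,y}F(x,y)-\E\max_{x,y}\widetilde F(x,y)\Big|
\le \frac{1}{n}\Big(\sqrt{n}\,\E|(g_x)_1|+\sqrt{p}\,\E|(g_y)_1|\Big)
= O\Big(\frac{1}{\sqrt{n}}\Big),\nonumber
\end{align}
which is the claimed bound. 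A small point to be careful about: the maxima of $F$ and $\widetilde F$ are achieved (the domain is compact and the objectives continuous), so the interchange of $\max$ and the pointwise bound is legitimate; and the quantity inside the expectation is integrable since it is bounded by a linear function of two Gaussians.

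There is essentially no main obstacle here — the lemma is a routine perturbation estimate. The only mild subtlety worth flagging is dimensional bookkeeping: one must track that the ``large'' factor $\sqrt{p}$ multiplying $(g_y)_1$ is compensated by the fact that $g_y$ has variance of order $n/p^2$ per coordinate (equivalently, $(g_y)_1$ has standard deviation of order $\sqrt{n}/p$), so that $\sqrt{p}\cdot\frac{\sqrt n}{p}=\sqrt{n/p}=O(1)$, leaving the overall $\frac{1}{n}$ prefactor to produce the $O(n^{-1/2})$ rate. I would also remark that the same argument shows the difference is $O(n^{-1/2})$ with high probability, not just in expectation, via a standard Gaussian tail bound on $|(g_x)_1|$ and $|(g_y)_1|$, should that be needed downstream.
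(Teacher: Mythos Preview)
Your proof is correct and essentially identical to the paper's: bound $|F-\widetilde F|$ pointwise by $\frac{1}{n}(\sqrt{n}\,|(g_x)_1|+\sqrt{p}\,|(g_y)_1|)$, then take expectations. One minor slip in your closing remark: $g_y$ has per-coordinate variance $b^2\tau n/p$ (not $n/p^2$), so $(g_y)_1$ has standard deviation of order $\sqrt{n/p}$, giving $\sqrt{p}\cdot\sqrt{n/p}=\sqrt{n}$ and hence the $O(n^{-1/2})$ rate --- your main computation gets this right, only the parenthetical bookkeeping is off.
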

\begin{proof}
For any $x, y$:
\begin{align}
&F(x, y) = \frac{1}{n} \Big( \lambda x_1^2 + \<x, g_x\> + \sqrt{\mu} x_1y_1 + \<y, g_y\>\Big) =
\widetilde{F}(x, y) + \frac{1}{n}(x_1 (g_x)_1+ y_1(g_y)_1) \nonumber\\
&\Big| F(x,y) - \widetilde{F}(x,y) \Big| \le \frac{1}{n}( \sqrt{n}|(g_x)_1| + \sqrt{p} |(g_y)_1| ). \nonumber
\end{align}
Maximizing each side over $x, y$ and taking expectation yields the lemma. 
\end{proof}
With this in hand, we can concentrate on
computing the maximum of $\widetilde{F}(x, y)$.
\begin{lemma}\label{lem:maximumFtilde}
Let $\widetilde{g}_x$ ($\widetilde{g}_y$) be the projection of $g_x$ (resp. $g_y$)
orthogonal to the first basis vector. Then
\begin{align}
\lim\sup_{n\do\infty}\E\max_{(x, y)\in \sphere^n\times\sphere^p} \widetilde{F}(x, y) 
& \le \min_{t\le 0}  \sG(2 \lambda + b\mu t, 4 \rho + b^2\tau )
+ \frac{1}{\gamma} \sG(b/t, b^2 \gamma\tau )
\end{align}
\end{lemma}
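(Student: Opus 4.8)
The plan is to \emph{decouple} the bilinear coupling $x_1y_1$ by a one-parameter family of Young (AM--GM) inequalities, which turns the joint maximum over $\sphere^n\times\sphere^p$ into a sum of two independent spherical optimizations, each of which I can solve in closed form; the free parameter then survives to be optimized, producing the claimed minimum over $t$.

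First I would observe that for every $t>0$ and all $x_1,y_1\in\reals$,
\begin{align}
b\sqrt{\mu}\,x_1y_1\ \le\ \frac{b\mu t}{2}\,x_1^2+\frac{b}{2t}\,y_1^2,\nonumber
\end{align}
which is AM--GM when $x_1y_1\ge0$ and trivial otherwise. Since $(\widetilde{g}_x)_1=(\widetilde{g}_y)_1=0$, substituting this into $\widetilde{F}$ gives, pointwise in $(x,y)$, an upper bound that separates over $x$ and $y$:
\begin{align}
\widetilde{F}(x,y)\ \le\ \frac1n\Big[\big(\lambda+\tfrac{b\mu t}{2}\big)x_1^2+\<x,\widetilde{g}_x\>\Big]+\frac1n\Big[\tfrac{b}{2t}\,y_1^2+\<y,\widetilde{g}_y\>\Big].\nonumber
\end{align}
Maximizing over $(x,y)\in\sphere^n\times\sphere^p$ and taking expectations, it then suffices, for each fixed $t>0$, to control $\E\,\Phi_n(a)+\E\,\Psi_n(a')$, where $\Phi_n(a)=\tfrac1n\max_{\|x\|^2=n}\{ax_1^2+\<x,\widetilde{g}_x\>\}$, $\Psi_n(a')=\tfrac1n\max_{\|y\|^2=p}\{a'y_1^2+\<y,\widetilde{g}_y\>\}$, with $a=\lambda+\tfrac{b\mu t}{2}$ and $a'=\tfrac{b}{2t}$ (both positive under the standing assumption $\lambda,\mu>0$).

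Next I would evaluate these scalar problems. Decomposing $x=x_1e_1+x_\perp$ with $x_\perp\perp e_1$ and maximizing the direction of $x_\perp$ against $\widetilde{g}_x$ (which is supported on $e_1^{\perp}$) gives $\Phi_n(a)=\max_{s\in[0,1]}\{a(1-s^2)+sR_n\}$ with $R_n=\|\widetilde{g}_x\|/\sqrt n$; an elementary one-variable calculation shows that for $a\ge0$, $\max_{s\in[0,1]}\{a(1-s^2)+s\sigma\}=\sG(2a,\sigma^2)$, the two branches of \eqref{def:G} corresponding to whether the critical point $\sigma/(2a)$ lies in $[0,1]$. Using $|\Phi_n(a)-\sG(2a,4\rho+b^2\tau)|\le|R_n-\sqrt{4\rho+b^2\tau}|$ together with $\E R_n^2=\tfrac{n-1}{n}(4\rho+b^2\tau)$ and concentration of the norm of a Gaussian vector, I would get $\limsup_n\E\,\Phi_n(a)\le\sG(2a,4\rho+b^2\tau)$. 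The identical computation for $\Psi_n$, now keeping $\|y\|^2=p$, $p/n\to1/\gamma$ and $\E\|\widetilde{g}_y\|^2=\tfrac{p-1}{p}b^2\tau n$, gives $\limsup_n\E\,\Psi_n(a')\le\sG(\tfrac{2a'}{\gamma},\tfrac{b^2\tau}{\gamma})$, and the homogeneity $\sG(c\kappa,c^2\sigma^2)=c\,\sG(\kappa,\sigma^2)$ (with $c=\gamma^{-1}$) rewrites the bound on the $y$-part as $\gamma^{-1}\sG(b/t,b^2\gamma\tau)$. Combining the two pieces yields, for every $t>0$,
\begin{align}
\limsup_{n\to\infty}\E\max_{x,y}\widetilde{F}(x,y)\ \le\ \sG(2\lambda+b\mu t,\,4\rho+b^2\tau)+\gamma^{-1}\sG(b/t,\,b^2\gamma\tau),\nonumber
\end{align}
and minimizing over $t>0$ (the bound blows up as $t\to0^+$ and as $t\to\infty$, so the infimum is attained in the interior) finishes the proof.

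I do not anticipate a genuine obstacle here: the points requiring care are the case split in the identity $\max_s\{a(1-s^2)+s\sigma\}=\sG(2a,\sigma^2)$ (valid because in our application $a,a'>0$) and the interchange of limit and expectation, which is just Gaussian concentration of $\|\widetilde{g}_x\|$ and $\|\widetilde{g}_y\|$. The one genuinely designed step is the particular split $\frac{b\mu t}{2}x_1^2+\frac{b}{2t}y_1^2$: it is chosen so that after the $1/n$ normalization the $x$-block assembles exactly into $\sG(2\lambda+b\mu t,\cdot)$ and the $y$-block into $\gamma^{-1}\sG(b/t,\cdot)$; tightness of the resulting upper bound (for the purposes of Lemma \ref{lemma:upper_bound}) is then certified separately by the matching lower bound built from the comparison process $\cX_2$.
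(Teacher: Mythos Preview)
Your proof is correct and follows essentially the same route as the paper: decouple the cross term $b\sqrt{\mu}\,x_1y_1$ by the one-parameter AM--GM inequality, reduce each resulting block to a one-variable maximum that evaluates to $\sG$, pass to the limit via the law of large numbers for $\|\widetilde g_x\|$ and $\|\widetilde g_y\|$, and finally optimize over $t$. The only cosmetic difference is ordering: the paper first maximizes over the orthogonal components (reducing to a two-variable problem in $m_1=x_1^2/n$, $m_2=y_1^2/p$) and \emph{then} applies AM--GM, whereas you apply AM--GM first to split $\widetilde F$ and then solve the two spherical problems; the resulting scalar optimizations are identical.
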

\begin{proof}
Since $\widetilde{F}(x, y)$ increases if we align the signs of $x_1$ and $y_1$ to $+1$,
we can assume that they are positive. Furthermore, for fixed, positive
$x_1, y_1$, $\widetilde{F}$ is maximized if the other coordinates align with
$\widetilde{g}_x$ and $\widetilde{g}_y$ respectively. Therefore:
\begin{align}
\max_{x, y} \widetilde{F}(x, y) & = \max_{x_1 \in [0, \sqrt{n}], y_1\in [0, \sqrt{p}]}
\frac{\lambda x_1^2}{n} + \sqrt{ 1- \frac{x_1^2}{n}} \frac{\norm{\widetilde{g}_x}}{\sqrt{n}}
 + \frac{b\sqrt{\mu x_1 y_1}}{n} +\sqrt{ 1- \frac{y_1^2}{p}} \frac{\sqrt{p}\norm{\widetilde{g}_y}}{n} \nonumber\\
 & = \max_{m_1, m_2\in [0, 1]}
\lambda m_1 + \sqrt{ 1- m_1} \frac{\norm{\widetilde{g}_x}}{\sqrt{n}}
 + b\sqrt{\frac{\mu m_1 m_2 p}{n}} +\sqrt{ 1- m_2} \frac{\sqrt{p}\norm{\widetilde{g}_y}}{n} \nonumber\\
& \le \max_{m_1, m_2 \in [0, 1]} 
\Big(\lambda + \frac{b \mu t}{2}\Big) m_1 + \sqrt{ 1- m_1} \frac{\norm{\widetilde{g}_x}}{\sqrt{n}} 
 + \frac{p}{n}\Big(\frac{b m_2}{2t} +\sqrt{ 1- m_2} \frac{\norm{\widetilde{g}_y}}{\sqrt{p}}\Big) \nonumber\\
& = \sG(2 \lambda + b \mu t, \norm{\widetilde{g}_x}^2/n)
 +\frac{1}{\gamma} \sG \Big( \frac{b}{t}, \norm{\widetilde{g}_y}^2/p \Big), \label{eq:Ftildeupperbound}
 \end{align} 
 where the first equality is change of variables, the second inequality is the fact that
 $2\sqrt{ab} = \min_{t\ge 0} (at + b/t)$, and the final equality is by direct calculus.

Now let $t_*$ be any minimizer of $\sG(2 \lambda + b\mu t, 4 \rho + b^2 \tau) + \gamma^{-1} \sG(b/t,  b^2 \gamma \tau)$. We may assume that $t_*\not\in\{0, \infty\}$,
otherwise we can use $t_*(\eps)$, an $\eps$-approximate minimizer in $(0, \infty)$ in the argument below.  Since the above holds for any $t$, we have:
\begin{align}
\max_{x, y} \widetilde{F}(x, y) &\le \sG(2 \lambda + b \mu t_*, \norm{\tg_x}^2/n)
+ \gamma^{-1}\sG(b/t_*, \norm{\tg_y}^2/p). 
\end{align}
By the strong law of large numbers, $\norm{\tg_x}^2/n \to 4 \rho + b^2 \tau$ and $\norm{\tg_y}^2/p \to b^2  \gamma \tau$ almost surely. Further, as $\sG(\kappa, \sigma^2)$ is
continuous in the second argument on $(0, \infty)$, when $\kappa \not\in \{0, \infty\}$, almost surely:
\begin{align}
\lim\sup \max_{x, y} \widetilde{F}(x, y) & \le \sG(2\lambda + b \mu t_*, 4 \rho + b^2 \tau)
+ \gamma^{-1} \sG(b/t_*, b^2 \gamma \tau). 
\end{align}
Taking expectations and using bounded convergence yields the lemma.
 \end{proof}

We can now prove the upper bound. 
\begin{proof}[Theorem \ref{thm:main_comparison}, upper bound]
Using Proposition \ref{prop:slepianupper}, Lemma \ref{lem:removeprojection}
and Lemma \ref{lem:maximumFtilde} in order:
\begin{align}
\OPT(\lambda, \mu) & \le \E\{ \max_{x, y} F(x, y) \} \\
& \le \E\{ \max_{x, y} \widetilde{F}(x, y) \} + o(n^{-1/3})\\
& \le \min_{t} \sG(2 \lambda + b\mu t, 4 \rho + b^2 \tau) + \frac{1}{\gamma}\sG( b/t , b^2 \gamma \tau) + o(n^{-1/3}). 
\end{align}
Taking limit $p\to\infty$ yields the result.
\end{proof}
\subsection{Proof of Theorem \ref{thm:main_comparison}: the lower bound}

Recall that $t_*$ denotes  the optimizer of the upper
bound $\sG(2 \lambda + b \mu t, 4 \rho + b^2 \tau) + \gamma^{-1}\sG(b/t, b^2 \gamma \tau)$. By stationarity, we have:
\begin{align}
b\mu \sG'(2 \lambda + b\mu t_*, 4 \rho + b^2 \tau)  - \frac{b}{\gamma t_*^2} \sG'(\frac{b}{t_*}, b^2 \gamma\tau) = 0. \label{eq:stationary_pt}
\end{align}
Now we proceed in two cases. First, suppose $\sG'(2 \lambda + b \mu t_*, 4 \rho + b^2  \tau) = 0$. In this case $\sG'(b/t_*, b^2 \gamma \tau) / t_*^2 = 0$, 
whence $\sG'(b/t_*, b^2 \gamma\tau) = 0$. Indeed, the case when $t_* = \infty$ also satisfies this. However, this also implies that 
$2 \lambda + b \mu t_* \le \sqrt{4 \rho + b^2 \tau}$ and $t_* \ge (\gamma \tau)^{-1/2}$, whereby
$\sG(2 \lambda + b \mu t_*, 4 \rho + b^2 \tau) = \sqrt{ 4 \rho + b^2 \tau}$ and
$\sG'(b /t_*,  b^2 \gamma \tau) = b\sqrt{\gamma \tau}$. In this case
we consider $\tilde{x}, \tilde{y}$ to be the principal
eigenvectors of $W_x, W_y$ rescaled to norms $\sqrt{n}, \sqrt{p}$
respectively and, hence using \eqref{eq:lower},
\begin{align}
\OPT(\lambda, \mu, b) & \ge \frac{1}{2 n} \E\Big[ \< \tilde{x}, W_x \tilde{x} \> + \< \tilde{y}, W_y \tilde{y} \>  \Big] - o(1) .
\end{align}
By standard results on GOE matrices the right hand side
converges to $\sqrt{4 \rho + b^2 \tau} + b \sqrt{\frac{\tau}{\gamma}}$ 
implying the required lower bound. 

Now consider the case that $\sG'(2 \lambda + b \mu t_*, 4 \rho + b^2\tau) > 0$.
Importantly, by stationarity we have
\begin{align}
t_*^2 &= \frac{\sG'(b t_*^{-1}, b^2 \gamma\tau)}{\mu\gamma \sG'(2 \lambda + b \mu t_*, 4\rho + b^2 \tau)}, \label{eq:stationaritytstar}
\end{align}
and that $t_*$ is finite since the numerator is decreasing in $t_*$.
The key ingredient to prove the lower bound
is the following result on the principal eigenvalue/eigenvector
of a deformed GOE matrix. 
\begin{theorem}[\cite{capitaine2009largest,knowles2013isotropic}]\label{thm:rankonedeformed}
Suppose $W\in \reals^{n\times n}$ is a GOE matrix with variance $\sigma^2$, i.e.
$W_{ij} = W_{ji} \sim \normal(0, (1+\delta_{ij}\sigma^2/p)$ and $A = \kappa v_0 v_0^\sT + W$
where $v_0$ is a unit vector.
Then the following holds almost surely and in expectation:
\begin{align}
\lim_{n\to\infty} \lambda_1(A)  = 2\sG(\kappa, \sigma^2) & = \begin{cases}
2\sigma & \text{ if } \kappa < \sigma \\
\kappa +{\sigma^2}/{\kappa} &\text{ if } \kappa > \sigma.
\end{cases}\\
\lim_{n\to\infty} \<v_1(A), v_0\>^2  = 2\sG'(\kappa, \sigma^2) &= \begin{cases}
0 &\text{ if } \kappa <\sigma,\\
1 - {\sigma^2}/{\kappa^2} &\text{ if } \kappa > \sigma.
\end{cases}, 
\end{align}
where $\sG'$ denotes the derivative with respect to the first
argument. 
\end{theorem}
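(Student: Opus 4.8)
The plan is the classical resolvent/secular-equation analysis of a rank-one perturbation of a Wigner matrix, taking advantage of the rotational invariance of the GOE to replace every quadratic form by a single resolvent entry. Write $R(z) = (zI - W)^{-1}$ and $m_\sigma(z) = (z - \sqrt{z^2 - 4\sigma^2})/(2\sigma^2)$ for the Stieltjes transform of the semicircle law on $[-2\sigma,2\sigma]$ (principal branch, $m_\sigma(z)\to 0$ as $z\to\infty$), which satisfies $\sigma^2 m_\sigma(z)^2 - z\,m_\sigma(z) + 1 = 0$, hence $m_\sigma^{-1}(w) = 1/w + \sigma^2 w$. First I would reduce to the case $v_0 = e_1$: since $W \ed O W O^\sT$ for any fixed orthogonal $O$, choose $O$ with $O v_0 = e_1$. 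The three spectral inputs I will use for $W$ alone are: (a) the empirical spectral distribution converges a.s. to the semicircle law on $[-2\sigma,2\sigma]$ and $\norm{W}_{\rm op}\to 2\sigma$ a.s.; (b) for every $\delta>0$, a.s. eventually $W$ has no eigenvalue in $[2\sigma+\delta,\infty)$; and (c) the crucial quantitative fact that $R_{11}(z) = \<e_1, R(z) e_1\> \to m_\sigma(z)$ a.s., uniformly on compact subsets of $\{z : \mathrm{dist}(z,[-2\sigma,2\sigma])\ge\delta\}$, and consequently (Cauchy's estimate) $R_{11}'(z)\to m_\sigma'(z)$ there; this follows from the Schur-complement identity $R_{11}(z) = (z - W_{11} - \<w, (zI-W^{(1)})^{-1} w\>)^{-1}$ together with concentration of the quadratic form $\<w, (zI-W^{(1)})^{-1} w\>$ around $\sigma^2 m_\sigma(z)$, or directly from the isotropic local semicircle law of \cite{knowles2013isotropic}.

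\textbf{Eigenvalue.} For $z\notin\mathrm{spec}(W)$ the matrix determinant lemma gives $\det(zI - A) = \det(zI - W)\,(1 - \kappa R_{11}(z))$, so the eigenvalues of $A$ outside $\mathrm{spec}(W)$ are exactly the zeros of $z\mapsto 1 - \kappa R_{11}(z)$; moreover, since $\kappa v_0 v_0^\sT\succeq 0$ has rank one, Weyl's inequalities give $\lambda_2(A)\le\lambda_1(W)\le\lambda_1(A)$, so the only eigenvalue of $A$ that can exceed $\lambda_1(W)$ is $\lambda_1(A)$, and on $(\lambda_1(W),\infty)$ the function $R_{11}$ is real, positive, strictly decreasing, with $R_{11}(z)\to\infty$ as $z\downarrow\lambda_1(W)$ and $\to 0$ as $z\to\infty$, so this top zero exists and equals $\lambda_1(A)$. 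If $\kappa>\sigma$ then $1/\kappa<1/\sigma$ lies in the range $m_\sigma((2\sigma,\infty)) = (0,1/\sigma)$, so by uniform convergence $1 - \kappa R_{11}(z)$ changes sign near $z_* := m_\sigma^{-1}(1/\kappa) = \kappa + \sigma^2/\kappa > 2\sigma$ and has no zero on $(2\sigma+\delta,z_*-\delta)\cup(z_*+\delta,\infty)$ eventually; hence $\lambda_1(A)\to z_* = \kappa + \sigma^2/\kappa = 2\sG(\kappa,\sigma^2)$ a.s. If $\kappa<\sigma$ then $\kappa\,R_{11}(z)\to\kappa\,m_\sigma(z)\le\kappa/\sigma<1$ uniformly on $(2\sigma+\delta,\infty)$, so $A$ has no eigenvalue there eventually; combined with $\lambda_1(W)\le\lambda_1(A)$ and $\lambda_1(W)\to2\sigma$ this forces $\lambda_1(A)\to2\sigma=2\sG(\kappa,\sigma^2)$. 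The boundary case $\kappa=\sigma$ follows by sandwiching with $\kappa\pm\eps$.

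\textbf{Eigenvector.} In the supercritical case, let $z_1=\lambda_1(A)\to z_*>2\sigma$ with unit eigenvector $v_1$; from $(z_1 I - W)v_1 = \kappa\<v_0,v_1\>v_0$ we get $v_1 = \kappa\<v_0,v_1\>R(z_1)v_0$, and taking squared norms, $1 = \kappa^2\<v_0,v_1\>^2\<v_0,R(z_1)^2 v_0\> = \kappa^2\<v_0,v_1\>^2\,(-R_{11}'(z_1))$. Since $z_1\to z_*$ and $R_{11}'(z_1)\to m_\sigma'(z_*)$, a short computation with the explicit $m_\sigma$ (using $\sqrt{z_*^2-4\sigma^2} = \kappa-\sigma^2/\kappa$) gives $m_\sigma'(z_*) = -1/(\kappa^2-\sigma^2)$, hence $\<v_0,v_1\>^2\to (\kappa^2(-m_\sigma'(z_*)))^{-1} = 1 - \sigma^2/\kappa^2 = 2\sG'(\kappa,\sigma^2)$. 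In the subcritical case I would use the Woodbury identity $\<v_0,(zI-A)^{-1}v_0\> = R_{11}(z)/(1-\kappa R_{11}(z))$, noting that $1-\kappa m_\sigma(z)\ge 1-\kappa/\sigma>0$ uniformly near the edge, together with the square-root edge regularity of $m_\sigma$; feeding this into the smoothing bound $\<v_0,v_1\>^2\le\eta\,\mathrm{Im}\,\<v_0,(\lambda_1(A)+i\eta-A)^{-1}v_0\>$ and letting $\eta\to0$ slowly yields $\<v_0,v_1\>^2\to0 = 2\sG'(\kappa,\sigma^2)$ (alternatively, the isotropic local law gives $\max_i\<v_0,v_i\>^2 = o(1)$ directly). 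Finally, almost-sure convergence upgrades to convergence in expectation by dominated convergence: $|\lambda_1(A)|\le\kappa+\norm{W}_{\rm op}$ with $\E\norm{W}_{\rm op}^2$ uniformly bounded by Gaussian concentration, and the overlaps are bounded by $1$.

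\textbf{Main obstacle.} The only substantial technical input is item (c) above together with the near-edge rigidity (b): the uniform-in-$z$ convergence $R_{11}(z)\to m_\sigma(z)$ on a complex neighborhood of $z_*$ and the absence of $W$-eigenvalues beyond $2\sigma+\delta$, which is precisely the content of the isotropic local semicircle law and the associated edge estimates. For the GOE, rotational invariance makes the pointwise concentration of $R_{11}$ elementary, but the uniformity and the edge control — which are also what makes the subcritical statement $\<v_0,v_1\>^2\to0$ rigorous — still require the local-law machinery, and this is exactly the place where one appeals to \cite{capitaine2009largest,knowles2013isotropic}.
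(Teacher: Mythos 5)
Your proposal is correct. Note, however, that the paper does not prove this statement at all: Theorem \ref{thm:rankonedeformed} is imported wholesale from \cite{capitaine2009largest,knowles2013isotropic}, so there is no in-paper argument to compare against. What you have written is essentially the standard proof underlying those references: the determinant-lemma reduction to the secular equation $1=\kappa R_{11}(z)$, the identification $z_*=m_\sigma^{-1}(1/\kappa)=\kappa+\sigma^2/\kappa$, the overlap identity $1=\kappa^2\langle v_0,v_1\rangle^2(-R_{11}'(z_1))$ giving $1-\sigma^2/\kappa^2$, and the Sherman--Morrison/spectral-measure bound for delocalization in the subcritical phase. All the computations check out (in particular $m_\sigma'(z_*)=-1/(\kappa^2-\sigma^2)$ and the matching with $2\sG$ and $2\sG'$ as defined in \eqref{def:G}), and you correctly isolate the one nontrivial input --- uniform convergence of $R_{11}$ near the edge together with edge rigidity for $W$ --- as the place where the isotropic local law of \cite{knowles2013isotropic} (or, for GOE, rotational invariance plus concentration of the Schur-complement quadratic form) must be invoked rather than rederived. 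The only cosmetic remark is that the interlacing step and the monotonicity of $R_{11}$ on $(\lambda_1(W),\infty)$ implicitly assume $\kappa>0$, which is the regime in which the paper applies the theorem.
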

For the prescribed $t_*$, define:
\begin{align}
H(x, y) & = \Big(\lambda + \frac{b\mu t_*}{2} \Big) \frac{\<x, v_0\>^2}{n^2} + \frac{\<x, W_x x\>}{2n}
+ \frac{p}{n}\Big(\frac{b\<y, u_0\>^2}{2 t_* p^2} + \frac{\<y, W_y y\>}{2 p} \Big)
\end{align}
Let $\tilde{x}, \tilde{y}$ be the principal eigenvector of
$(2 \lambda + b \mu t_*) v_0 v_0^\sT /n  + W_x$, $b t_*^{-1} u_0 u_0^\sT/p + W_y$,
rescaled to norm $\sqrt{n}$ and $\sqrt{p}$
respectively. Further, we choose the sign of $\tilde{x}$ 
so that $\<\tilde{x}, v_0\> \ge 0$, and analogously for $\tilde{y}$. 
Now, fixing an $\eps > 0$, we have 
by Theorem \ref{thm:rankonedeformed},   for every $p$ large enough:
\begin{align}
H(\tilde{x}, \tilde{y}) &\ge \sG(2 \lambda + b \mu t_*, 4 \rho + b^2 \tau)
+ \gamma^{-1} \sG(b t_*^{-1}, b^2 \gamma \tau) -\eps \label{eq:Htlowerbound}\\
\frac{\<\tilde{x}, v_0\>}{n} &= \sqrt{2 \sG'(2 \lambda + b \mu t_*, 4 \rho + b^2\tau)} + O(\eps) \label{eq:corrlowerbound} \\
\frac{\<\tilde{y}, u_0\>}{p} & = \sqrt{2 \sG'(b t_*^{-1}, b^2 \gamma \tau)} +O(\eps) \label{eq:corrlowerbound2}
\end{align}
We have, therefore: 
\begin{align}
\OPT(\lambda, \mu,b) & \ge  \E\Big[ H(\tilde{x}, \tilde{y}) + \Big(\frac{b}{n}\sqrt{\frac{\mu}{np} }\<\tilde{x}, v_0\> \<\tilde{y}, u_0\>  - \frac{b\mu t \<\tilde{x}, v_0\>^2  }{2n^2} - \frac{b\<y, u_0\>^2}{2t np} \Big) \Big] \\
&\ge \sG(2 \lambda + b \mu t_*, 4 \rho + b^2 \tau) + \gamma^{-1} \sG(b t_*^{-1}, b^2 \gamma \tau)
+O( \eps (t_*\vee t_*^{-1})) \nonumber\\
&+ \Big( 2\sqrt{ \frac{\mu}{\gamma} \sG'(2 \lambda + b \mu t_*, 4 \rho + b^2 \tau)\sG'(bt_*^{-1}, b^2 \gamma \tau)}
- {b\mu t_* \sG'(2 \lambda + b\mu t_*, 4\rho + b^2\tau)} \nonumber\\
&- \frac{\sG'(bt_*^{-1}, b^2\gamma \tau)}{\gamma t_*}
\Big) \nonumber\\
&\ge\sG(2\lambda + b\mu t_*, 4\rho +b^2 \tau) + \gamma^{-1} \sG(bt_*^{-1},b^2 \gamma \tau)
+O( \eps (t_*\vee t_*^{-1})).
\end{align}
Here the first inequality since we used a specific guess $\tilde{x}, \tilde{y}$, the second using Theorem \ref{thm:rankonedeformed} and 
the final inequality follows since the remainder term vanishes
due to  \myeqref{eq:stationaritytstar}.
Taking expectations and letting $\eps$ going to 0 yields the required
lower bound. 

Given Corollary \ref{cor:zerosignal} and Theorem \ref{thm:main_comparison}, it is not too hard to establish Lemma \ref{lemma:upper_bound}, which we proceed to do next. 
\subsection{Proof of Lemma \ref{lemma:upper_bound}}
\label{sec:proof_upperbdd}
Recall $b_* = \frac{2\mu}{\lambda \gamma} $.  
Part (i) follows directly from Corollary \ref{cor:zerosignal}, upon setting $\rho=\tau=1$, and $b=b_* \sqrt{\gamma}$. To establish part (ii), we use 
Theorem \ref{thm:main_comparison}. In particular, it suffices to establish that with this specific choice of $b = b_*\sqrt{\gamma}$, for any $(\lambda,\mu)$ with $\lambda^2 + \mu^2/\gamma >1$, the minimizer $t_*$ of $G(2\lambda + b\mu t , 4 + b^2) + \gamma^{-1} G(b/t, b^2 \gamma)$ satisfies $G'(2\lambda + b \mu t_*, 4 + b^2 ) >0$.
Let us assume, if possible, that $G(2 \lambda + b \mu t_*, 4 + b^2) =0$. Using the stationary point condition \eqref{eq:stationary_pt}, in this case $G'(b/t_*, b^2 \gamma )=0$. Next, using the definition of $G$ \eqref{def:G}, observe that this implies 
\begin{align}
t_* > \frac{1}{\sqrt{\gamma}},\,\,\,\,  2\lambda + \frac{2 \mu^2}{\lambda \sqrt{\gamma}} t_* < \sqrt{4 + \frac{4\mu^2 }{\lambda^2 \gamma}}. \nonumber
\end{align}
These imply:
\begin{align}
\frac{2}{\lambda}\Big( \lambda^2 + \frac{\mu^2}{\gamma}\Big)
& < 
2\lambda + 2\frac{\mu^2 t_*}{\lambda \mu \sqrt \gamma} \\
&< \sqrt{ 4 + \frac{4\mu^2}{\lambda^2  \gamma}} \\
& = \frac{2}{\lambda}\sqrt{ \lambda^2 + \frac{\mu^2}{\gamma}}.
\end{align}
That this is impossible whenever $\lambda^2 + \frac{\mu^2}{\gamma}>1$. This establishes part (ii). 
To establish part (iii), we again use the upper bound from Proposition \ref{prop:slepianupper}, and note that for $0< \< x, v \> < \tilde{\delta} \sqrt{n}$, 
\begin{align}
\E[\tilde{T}(\tilde{\delta})] \leq \lambda \tilde{\delta}^2 + \sqrt{4 + b_*^2} + \max_{\|y \| =1} \{  b_* \sqrt{\mu} \tilde{\delta} \< u, y  \>  + \frac{1}{\gamma} \< y, g \>\}, \nonumber
\end{align}
where $g \sim \normal(0, b^2 \gamma I_p/p) $. The proof follows using continuity in $\tilde{\delta}$. This completes the proof.

%Finally, part (iii) follows along the same lines as Corollary \ref{cor:zerosignal}, and is thus omitted. This completes the proof. 
 
\section{Proof of Lemma \ref{lem:densityevolutioninstability}}
Recall the distributional recursion specified by density evolution (Definition \ref{def:densityevolutiondef}).  
\begin{align}
\bar{m}' |_{U} &\stackrel{d}{=} \mu U \E[ V \bar{\eta}] + \zeta_1
 \sqrt{\mu \E[  \bar{\eta} ^2 ]}, \nonumber\\
\bar{\eta}' |_{V'=+1} &\stackrel{d}{=} \frac{\lambda}{\sqrt{d}} \Big[ \sum_{k=1}^{k_+} \bar{\eta}_k |_{+} + \sum_{k=1}^{k_{-}} \bar{\eta}_k|_{-}\Big] - \lambda \sqrt{d} \E[\bar{\eta}]  + \frac{\mu }{\gamma} \E[U \bar{m} ] + \zeta_2 \sqrt{\frac{\mu}{\gamma}\E[\bar{m}^2 ]}, \nonumber 
\end{align}
where $V\sim U(\{\pm 1 \})$, $U\sim \normal (0,1)$, $k_+ \sim \textrm{Poisson}\Big( \frac{d + \lambda \sqrt{d}}{2} \Big)$, $k_{-} \sim \textrm{Poisson} \Big( \frac{d - \lambda \sqrt{d}}{2} \Big)$, $\zeta_1, \zeta_2 \sim \normal(0,1)$ are all mutually independent. Further, $\{\bar{\eta}_k|_{+}\}$  are iid random variables, distributed as $\bar{\eta}|_{V=+1}$. Similarly, $\{\bar{\eta}_k|_{-} \}$, are iid random variables, distributed as $\bar{\eta}|_{V=-1}$. Finally, we require the collections to be mutually independent, and independent of the other auxiliary variables defined above. 

Given these distributional recursions, we compute the vector of moments 
\begin{align}
\E[V'\bar{\eta}'] &= \lambda^2 \E[ V \bar{\eta}] + \frac{\mu}{\gamma } \E[U \bar{m}] \nonumber\\
\E[U'\bar{m}'] &= \mu \E[V \bar{\eta} ] \nonumber\\
\E[\bar{\eta}'^2] &= \lambda^2 \E[\bar{\eta}^2] + \frac{\mu^2}{\gamma^2} \E^2[ U\bar{m}] + \frac{\mu}{\gamma} \E[\bar{m}^2  ] + 2 \frac{\lambda^2}{\gamma} \E[ U \bar{m}] \E[V\etabar]. \nonumber\\
\E[\bar{m}'^2 ] &= \mu^2 \E^2 [V \bar{\eta}] + \mu \E[  \bar{\eta}^2 ] \nonumber
\end{align}
Thus the induced mapping on moments $\phi^{\DE}: \mathbb{R}^4 \to \mathbb{R}^4$, $\phi^{\DE}(z_1, z_2, z_3, z_4) = (\phi_1, \phi_2, \phi_3, \phi_4)$, with 
\begin{align}
\phi_1 &= \lambda^2 z_1 + \frac{\mu}{\gamma} z_2 \nonumber\\
\phi_2 &= \mu z_1 \nonumber\\
\phi_3 &=   \frac{\mu^2}{\gamma^2 } z_2^2  + \frac{2 \lambda^2 }{\gamma} z_1 z_2  + \lambda^2 z_3  + \frac{\mu}{\gamma} z_4  , \nonumber\\
\phi_4 &= \mu^2 z_1^2 + \mu z_3. \nonumber
\end{align}
The Jacobian of $\phi^{\DE}$ at $0$ is, up to
identical row/column permutation:
\begin{align}
J = \left[ \begin{matrix}
\lambda^2 I_2 & \frac{\mu}{\gamma} I_2 \\
\mu I_2 & 0 
\end{matrix} \right]. \nonumber
\end{align}
By direct computation, we see that $z$ is an eigenvalue of $J$ if and only if $z^2 - \lambda^2 z - \frac{\mu^2}{\gamma }=0$. Consider the quadratic function $f(z) = z^2 - \lambda^2 z - \frac{\mu^2}{\gamma}$ and note that $f(0) <0$. Thus to check whether $f$ has a root with magnitude greater than $1$, it suffices to check its value at $z=1, -1$. Note that if $\lambda^2 + \frac{\mu^2}{\gamma }>1$, $f(1) <0$ and thus $J$ has an eigenvalue greater than $1$. Conversely, if $\lambda^2 + \frac{\mu^2}{\gamma } <1$, $f(1) >0$ and $f(-1) = 1+ \lambda^2 - \frac{\mu^2}{\gamma } > 1 - \frac{\mu^2}{\gamma } >0 $. This completes the proof.

\section{Proof of Theorem \ref{thm:graph} }
We prove Theorem \ref{thm:graph} in this Section. 
Recall the matrix mean square errors 
\begin{align}
\MMSE(v; A, B) = \frac{1}{n(n-1)} \E\Big[ \| vv^{T} - \E[vv^{T} | A, B] \|_F^2  \Big], \nonumber\\
\MMSE(v; A^G, B ) = \frac{1}{n(n-1)} \E\Big[ \| vv^{T} - \E[vv^{T}|A^G, B] \|_F^2 \Big]. \nonumber
\end{align}
The following lemma is immediate from Lemma 4.6 in \cite{deshpande2016asymptotic}.
\begin{lemma}\label{lem:overlap_mmse}
Let $\hv = \hv(A, B)$ be any estimator so that
$\norm{\hv}_2 = \sqrt{n}$. Then
\begin{align}
\liminf_{n\to\infty} \frac{\<\hv, v\>}{n} > 0 \text{ in probability }
& \Rightarrow \limsup_{n\to\infty} \MMSE(v; A,B) < 1. 
\end{align}
Furthermore, if $\limsup_{n\to\infty} \MMSE(v; A, B) < 1$, 
there exists an estimator $\hs(A, B)$ with $\norm{\hs(A, B)}_2 = \sqrt{n}$ so that, in probability:
\begin{align}
\liminf_{n\to\infty} \frac{\<\hs, v\>}{n} > 0. 
\end{align}
Indeed, the same holds for the observation model $A^G, B$.
\end{lemma}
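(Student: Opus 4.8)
The statement is the usual equivalence between weak recovery of $v$ and the matrix MMSE being bounded below $1$, and I would prove it directly, reproducing the argument behind Lemma~4.6 of \cite{deshpande2016asymptotic} (whose hypotheses both the Gaussian model \eqref{eq:gaussiangraphmodel}--\eqref{eq:gaussiancovariatemodel} and the sparse model \eqref{eq:graphmodel}--\eqref{eq:covariatemodel} meet, being rank-one signals $vv^\sT$ observed through coordinatewise-independent channels). Write $\widehat M=\E[vv^\sT\mid A,B]$; since $vv^\sT\succeq0$ with deterministic unit diagonal, so is $\widehat M$, and $\mathrm{Tr}(\widehat M)=n$. Two properties of conditional expectation drive everything: (a) $\E\|vv^\sT-\widehat M\|_F^2\le\E\|vv^\sT-M\|_F^2$ for every data-measurable $M$; and (b) $\E\langle vv^\sT,\widehat M\rangle=\E\|\widehat M\|_F^2$. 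Using $\|vv^\sT\|_F^2=n^2$ and the cancellation of the diagonals, (b) rewrites the error as $\MMSE(v;A,B)=1-\frac{1}{n(n-1)}\big(\E\|\widehat M\|_F^2-n\big)$, so ``$\limsup_n\MMSE<1$'' is exactly ``$\E\|\widehat M\|_F^2-n=\Omega(n^2)$''. None of this uses any feature of the channel, so the $(A^G,B)$ version is verbatim the same; the only model-dependent input appears at the very end.

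\medskip

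\noindent\emph{Overlap $\Rightarrow$ MMSE $<1$.} Given $\hv$ with $\|\hv\|_2=\sqrt n$ and $\P(\langle\hv,v\rangle/n\ge c)\to1$ for some $c>0$, I would apply (a) to the rank-one competitor $M=\alpha\,\hv\hv^\sT$, getting $\E\|vv^\sT-\widehat M\|_F^2\le n^2-2\alpha\,\E\langle\hv,v\rangle^2+\alpha^2 n^2$, and optimizing the scalar $\alpha$ gives $\E\|vv^\sT-\widehat M\|_F^2\le n^2\big(1-(\E\langle\hv,v\rangle^2/n^2)^2\big)$. Since the hypothesis forces $\E\langle\hv,v\rangle^2\ge c^2n^2(1-o(1))$, dividing by $n(n-1)$ yields $\limsup_n\MMSE\le1-c^4<1$. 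This is a one-line computation once the rank-one test matrix is in hand.

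\medskip

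\noindent\emph{MMSE $<1$ $\Rightarrow$ overlap.} Here I would read off a rank-one estimator from $\widehat M$. The cleanest choice is a fresh posterior sample: conditionally on the data draw $\hs\sim\P(v\in\cdot\mid A,B)$ independently of $v$, so that $\hs\in\{\pm1\}^n$, $\|\hs\|_2=\sqrt n$, and by the Nishimori symmetry $v$ and $\hs$ are conditionally i.i.d.; then $\E[\langle\hs,v\rangle^2\mid A,B]=\|\widehat M\|_F^2$, hence $\liminf_n\E\langle\hs,v\rangle^2/n^2=\liminf_n\E\|\widehat M\|_F^2/n^2\ge1-\limsup_n\MMSE>0$. (Equivalently one could take $\hs=\sqrt n$ times a leading eigenvector of $\widehat M$, or a random eigenvector picked with probability proportional to its eigenvalue, each giving the same second-moment bound while staying data-measurable.) It remains to upgrade this $L^2$ bound to the in-probability conclusion; I would do so using the self-averaging of the normalized overlap $\langle\hs,v\rangle/n$ about its mean, which holds for these models and is part of Lemma~4.6 of \cite{deshpande2016asymptotic}, after fixing the global $\pm$ ambiguity (so the bound is on $|\langle\hs,v\rangle|/n$, the quantity Theorems~\ref{thm:graph} and~\ref{thm:gaussian} use anyway).

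\medskip

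\noindent\emph{Main obstacle.} Every step except the last is routine: a rank-one test/estimator together with the $L^2$-optimality of the posterior mean handle both directions at the level of second moments and with no reference to the channel. The one genuinely model-dependent, non-formal ingredient --- which I would import from \cite{deshpande2016asymptotic} rather than reprove --- is the concentration of the overlap, i.e.\ the passage from ``$\widehat M$ has Frobenius mass $\Omega(n^2)$'' to ``some unit-norm vector estimator is correlated with $v$ \emph{in probability}''.
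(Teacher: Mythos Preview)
Your proposal is correct and aligns with the paper's own treatment: the paper does not give an independent proof of this lemma but simply states that it ``is immediate from Lemma~4.6 in \cite{deshpande2016asymptotic}'', precisely the reference you invoke. Your sketch in fact supplies more detail than the paper does --- the rank-one competitor for the first implication, the posterior-sample estimator and Nishimori identity for the second, and the honest identification of overlap concentration as the one nontrivial, model-dependent ingredient to be imported from \cite{deshpande2016asymptotic} --- so there is nothing to correct or contrast.
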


%We define the associated overlap metrics. 
%\begin{align}
%\overlap(\hv; A^G, B) &= \frac{1}{n} \sum_{\hat{s} \in \mathcal{C}} \E[| \< v, \hat{s} \>  |], \nonumber\\
%\textrm{Overlap}_2(\lambda, \mu) &= \frac{1}{n} \sum_{\hat{s} \in \mathcal{C}'} \E[| \< v, \hat{s} \>  |],
%\end{align}
%where $\mathcal{C} = \{ \hat{s}: (A^G, B) \to \mathbb{R}^n, \|\hat{s}\|_2^2 =n \}$, $\mathcal{C} = \{ \hat{s}: (A, B) \to \mathbb{R}^n, \|\hat{s}\|_2^2 =n \}$. The following lemma relates the overlap and the matrix mean square error parameters. 

\begin{proof}[Proof of Theorem \ref{thm:graph}]
Consider first the case $\lambda^2 + \frac{\mu^2}{\gamma} <1$. 
For any $\theta \in [0, \lambda]$, $\theta^2 + \mu^2/\gamma < 1$ as
well. Suppose we have $A(\theta), B$ according to model
\eqref{eq:gaussiangraphmodel}, \eqref{eq:gaussiancovariatemodel} where
$\lambda$ is replaced with $\theta$. 
 By Theorem \ref{thm:gaussian} (applied at $\theta$) and the second part
of Lemma \ref{lem:overlap_mmse}, $\liminf_{n\to\infty} \MMSE(v; A(\theta), B) =1$. Using the I-MMSE identity \cite{Guo05mutualinformation}, this implies 
\begin{align}
\lim_{n\to\infty} \frac{1}{n} (I(v; A(\theta), B) - I(v; A(0), B) )= \frac{\theta^2}{4} . \label{eq:gaussianinfo}
\end{align}
By Theorem \ref{thm:Universality}, for all $\theta \in [0, \lambda]$
\begin{align}
\lim_{d\to\infty}\lim_{n\to\infty} \frac{1}{n}(I(v; A^G(\theta), B) -
I(v; A^G(0), B) &= \frac{\theta^2}{4}, \label{eq:graphinfo} \\
\text{ and, therefore}
\lim_{n\to\infty} \MMSE(v; A^G, B) & = 1
\end{align}
This implies, via the first part of Lemma 
\ref{lem:overlap_mmse} that for any estimator
$\hv(A^G; B)$, we have $\limsup_{n\to\infty} |\<\hv, v\>|/n = 0$
in probability, as required. 

Conversely, consider the case $\lambda^2 + \frac{\mu^2}{\gamma } >1$. We may assume that $\mu^2/\gamma < 1$, as otherwise
the result follows from Theorem \ref{thm:baik}. Let
$\lambda_0 = (1- \mu^2/\gamma)^{1/2}$.

Now, by the same argument for Eqs.\eqref{eq:gaussianinfo}, \eqref{eq:graphinfo},  we obtain for all $\theta_1, \theta_2 \in [\lambda_0, \lambda]$:
\begin{align}
\limsup_{n\to\infty} \frac{1}{n}( I(v; A(\theta_1), B)
- I(v; A(\theta_2), B) ) & < \frac{\theta_1^2 - \theta_2^2}{4}.
\end{align}
Applying Theorem \ref{thm:Universality}, we have for
all $\theta_1, \theta_2, \theta \in [\lambda_0, \lambda]$:
\begin{align}
\lim_{d\to\infty}\limsup_{n\to\infty} \frac{1}{n}( I(v; A^G(\theta_1), B)
- I(v; A^G(\theta_2), B) ) & < \frac{\theta_1^2 - \theta_2^2}{4}\\
\text{ and therefore, } \limsup \MMSE(v; A^G(\theta), B) < 1. 
\end{align}
Applying then Lemma \ref{lem:overlap_mmse} implies that
we have an estimator $\hs(A^G, B)$ with non-trivial
overlap i.e. in probability:
\begin{align}
\lim_{d\to\infty}\liminf_{n\to \infty} \frac{\<\hs, v\>}{n} > 0 .
\end{align}
This completes the proof. 

\end{proof}

}{}

\iftoggle{arxiv}{
% no appendix
\bibliographystyle{amsalpha}
\bibliography{all-bibliography}
\appendix

% !TEX root = main.tex
%--------------------------

\section{Belief propagation: derivation}
\label{sec:bpderivation}

In this section we will derive the belief propagation
algorithm. Recall the observation model for $(A^G, B) \in\reals^{n\times n}\times \reals^{p\times n}$ in Eqs. \eqref{eq:graphmodel}, \eqref{eq:covariatemodel}:
\begin{align}
A^G_{ij} &= \begin{cases}
1 &\text{ with probability } \frac{d + \lambda \sqrt{d}v_i v_j}{n} \\
0 & \text{ otherwise. } 
\end{cases}\\
B_{qi} & = \sqrt{\frac{\mu}{n}} u_q v_i + Z_{qi}, 
\end{align}
where $u_q$ and $Z_{qi}$ are independent $\normal(0, 1/p)$
variables. 

We will use the following conventions throughout this 
section to simplify some of the notation. We will index
nodes in the graph, i.e. elements in $[n]$ with $i, j, k\dots$
and covariates, i.e. elements in $[p]$ with $q, r, s, \dots$.
We will use `$\simeq$' to denote equality of
probability distributions (or densities)
up to an omitted proportionality constant, that may
change from line to line. We will omit
the superscript $G$ in $A^G$. In the graph $G$, 
we will denote neighbors of a node $i$ with $\partial i$
and non-neighbors with $\partial i ^c$. 

We start with the posterior distribution of $u, v$ given
the data $A, B$:
\begin{align}
\d\P\{u , v | A, B\} & = 
\frac{\d\P\{A, B | u , v\}}{\d\P\{A, B\}} \d\P\{u, v\} \\
&\simeq \prod_{i < j} \Big( \frac{d + \lambda \sqrt d v_i v_j}{n}\Big)^{A_{ij}} \Big(1 - \frac{d + \lambda \sqrt d v_i v_j}{n}\Big)^{1- A_{ij}}\nonumber\\
 &\quad\cdot\prod_{q, i} \exp\Big(\sqrt{\frac{\mu p^2}{n}}B_{qi} u_q v_i \Big) \prod_{q} \exp\Big(-\frac{p(1+ \mu)}{2} u_q^2\Big). 
\end{align}
The belief propagation algorithm operates `messages' $\nu^t_{i\to j}, \nu^t_{q\to i},\nu^t_{i\to q}$ which
are probability distributions. 
They represent the marginals of the variables $v_i, u_q$
in the absence of variables $v_j, u_q$, in the posterior
distribtuion $\d\P\{u, v | A, B\}$. We denote by 
$ \E^t_{i\to j}, \E^t_{q\to i}, \E^t_{i\to q}$ expectations with respect
to these distributions. The messages are are
computed using the following update equations:
\begin{align}
\nu^{t+1}_{i\to j}(v_i) &\simeq
\prod_{q\in [p]} \E^t_{q\to i} \Big\{\exp\Big( \sqrt\frac{\mu p^2}{n} B_{qi} v_i u_q\Big) \Big\} %
\prod_{k \in \partial i \bsl j} \E^t_{k\to i} \Big(\frac{d+ \lambda \sqrt{d}v_iv_k}{n}\Big)  \prod_{k \in \partial i ^c \bsl j} \E^t_{k\to i} \Big(1- \frac{d + \lambda \sqrt d v_i v_k}{n}\Big) \,,\label{eq:bpij}\\
\nu^{t+1}_{i\to q}(v_i) &\simeq
\prod_{r\in [p]\bsl q} \E^t_{r\to i} \Big\{\exp\Big( \sqrt\frac{\mu p^2}{n} B_{ri} v_i u_r\Big) \Big\} %
\prod_{k \in \partial i } \E^t_{k\to i} \Big(\frac{d+ \lambda \sqrt{d}v_iv_k}{n}\Big)  \prod_{k \in \partial i ^c } \E^t_{k\to i} \Big(1- \frac{d + \lambda \sqrt d v_i v_k}{n}\Big), \label{eq:bpiq} \\
\nu^{t+1}_{q\to i}(u_q) &\simeq %
\exp\Big(-\frac{p(1+\mu)u_q^2}{2}\Big)
\prod_{j\ne i} \E^t_{j\to q} \Big\{\exp\Big( \sqrt\frac{\mu p^2}{n} B_{qj} v_j u_q\Big) \Big\}\, .\label{eq:bpqi} %
\end{align}
As is standard, we define $\nu^t_i, \nu^t_q$ in the 
same fashion as above, except without the removal 
of the incoming message. 
\subsection{Reduction using Gaussian ansatz}

The update rules \eqref{eq:bpij}, \eqref{eq:bpiq}, 
\eqref{eq:bpqi} are in terms of probability distributions, 
i.e. measures on the real line or $\{\pm 1\}$.  
We reduce them to update rules on real numbers using
the following analytical ansatz. The
measure $\nu^t_{i\to j}$ on $\{\pm 1\}$ 
can be summarized 
using the log-odds ratio: 
\begin{align}
\eta^t_{i\to j} &\equiv \frac{1}{2}\log \frac{\nu^t_{i\to j}(+1)}{\nu^t_{i\to j}(-1)}, \label{eq:etadef}
\end{align}
and we similarly define $\eta^t_{i\to q}$, $\eta^t_i$. In order to 
reduce the densities $\nu^t_{q\to i}$, we use the Gaussian
ansatz:
\begin{align}
\nu^t_{q\to i} &=\normal\Big(\frac{m^t_{q\to i}}{\sqrt{p}}, \frac{\tau^t_{q\to i}}{p}\Big). \label{eq:mdef}
\end{align}
With \Cref{eq:etadef,eq:mdef} we can now
simplify \Cref{eq:bpij,eq:bpiq,eq:bpqi}. The
following lemma computes the inner marginalizations
in \Cref{eq:bpij,eq:bpiq,eq:bpqi}.
We omit the proof. 
\begin{lemma}\label{lem:innermarg}
With $\nu^t, \E^t$ as defined as per \Cref{eq:bpij,eq:bpiq,eq:bpqi}
and $\eta^t, m^t, \tau^t$ as in \Cref{eq:etadef,eq:mdef} we have
\begin{align}
\E^t_{q\to i} \exp\Big(\sqrt\frac{\mu p^2}{n} B_{qi} v_i u_q \Big)
& = \exp\Big( \sqrt\frac{\mu p}{n} B_{q i} v_i m^t_{q\to i} +
\frac{\mu p}{2n} B_{qi}^2 \tau^t_{q\to i}\Big),\label{eq:margqi} \\
\E^t_{i\to j} \Big(\frac{d+\lambda \sqrt{d} v_i v_j}{n}\Big) & = 
\frac{d}{n}\Big(1+ \frac{\lambda v_j}{\sqrt{d}} \tanh(\eta^t_{i\to j})\Big)\, ,\label{eq:margij}\\
\E^t_{i\to j} \Big(1- \frac{d+\lambda \sqrt{d} v_i v_j}{n}\Big) &=  
1 -\frac{d}{n}\Big(1+ \frac{\lambda v_j}{\sqrt{d}} \tanh(\eta^t_{i\to j})\Big)\, ,\label{eq:margij2} \\
\E^t_{i\to q} \exp\Big(p\sqrt\frac{\mu}{n} B_{qi} v_i u_q\Big)
& = \frac{\cosh (\eta^t_{i\to q} + p\sqrt{\mu/n} B_{qi} u_q)}
{\cosh \eta^t_{i\to q}}\label{eq:margiq}.
\end{align}
\end{lemma}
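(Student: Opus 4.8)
The plan is to verify the four identities one at a time; each is a direct computation from the definitions of the messages, and the only real content is the bookkeeping of the powers of $n$ and $p$. For \eqref{eq:margqi}, recall that the Gaussian ansatz \eqref{eq:mdef} gives $\nu^t_{q\to i}=\normal(m^t_{q\to i}/\sqrt p,\,\tau^t_{q\to i}/p)$, so $\E^t_{q\to i}\exp(\alpha u_q)$ is just the moment generating function of a Gaussian, namely $\exp\!\big(\alpha\, m^t_{q\to i}/\sqrt p+\alpha^2\tau^t_{q\to i}/(2p)\big)$. Substituting $\alpha=\sqrt{\mu p^2/n}\,B_{qi}v_i$ and using $v_i^2=1$, together with $\sqrt{\mu p^2/n}/\sqrt p=\sqrt{\mu p/n}$ and $(\mu p^2/n)/(2p)=\mu p/(2n)$, produces exactly the right-hand side of \eqref{eq:margqi}.

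For \eqref{eq:margij} and \eqref{eq:margij2}, I would use the log-odds parametrization \eqref{eq:etadef}: the measure $\nu^t_{i\to j}$ on $\{\pm1\}$ satisfies $\nu^t_{i\to j}(\pm1)=e^{\pm\eta^t_{i\to j}}/(2\cosh\eta^t_{i\to j})$, hence $\E^t_{i\to j}[v_i]=\tanh\eta^t_{i\to j}$. By linearity, $\E^t_{i\to j}[(d+\lambda\sqrt d\,v_iv_j)/n]=(d+\lambda\sqrt d\,v_j\tanh\eta^t_{i\to j})/n$, which rearranges to \eqref{eq:margij}; and \eqref{eq:margij2} follows from $\E^t_{i\to j}[1-X]=1-\E^t_{i\to j}[X]$.

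Finally, for \eqref{eq:margiq}, write the expectation over $v_i\in\{\pm1\}$ explicitly. With $c=p\sqrt{\mu/n}\,B_{qi}u_q$ and the same expression for $\nu^t_{i\to q}(\pm1)$ as above,
\[
\E^t_{i\to q}e^{cv_i}=\frac{e^{\eta^t_{i\to q}+c}+e^{-(\eta^t_{i\to q}+c)}}{2\cosh\eta^t_{i\to q}}=\frac{\cosh(\eta^t_{i\to q}+c)}{\cosh\eta^t_{i\to q}},
\]
which is the claim.

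I do not expect a genuine obstacle here: all four statements reduce either to the Gaussian moment generating function or to the Bernoulli log-odds identity $\E[e^{cv}]=\cosh(\eta+c)/\cosh\eta$. The only place that requires care — and the reason the lemma is worth recording at all — is reconciling the $O(1/\sqrt p)$ scaling of $u_q$ built into the ansatz \eqref{eq:mdef} with the $\sqrt{p^2/n}$ and $p/\sqrt n$ prefactors appearing in the likelihood, so that the induced update rules for $\eta^t,m^t,\tau^t$ come out with the stated normalization.
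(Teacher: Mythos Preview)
Your proposal is correct. The paper itself omits the proof of this lemma, and your argument --- the Gaussian moment generating function for \eqref{eq:margqi}, linearity plus $\E^t_{i\to j}[v_i]=\tanh\eta^t_{i\to j}$ for \eqref{eq:margij}--\eqref{eq:margij2}, and the direct $\{\pm1\}$ average for \eqref{eq:margiq} --- is exactly the intended routine computation, with the scalings tracked correctly.
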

The update equations take a simple form using 
the following definitions
\begin{align}
f(z; \rho) &\equiv \frac{1}{2}\log \Big(\frac{\cosh(z + \rho)}{\cosh(z-\rho)}\Big)\, , \\
\rho &\equiv \tanh^{-1}(\lambda/\sqrt{d})\, , \\
\rho_n & \equiv \tanh^{-1}\Big(\frac{\lambda \sqrt{d}}{n-d}\Big). 
\end{align}
With this, we first compute the update equation for the
node messages $\eta^{t+1}$. Using \Cref{eq:bpij,eq:bpiq,eq:margqi,eq:margij,eq:margij2,eq:margiq}:
\begin{align}
\eta^{t+1}_{i\to j} & = \sqrt\frac{\mu }{\gamma}\sum_{q\in [p]} B_{qi}m^t_{q\to i} + \sum_{k\in \partial i \bsl j } f(\eta^t_{k\to i}; \rho)
- \sum_{k\in \partial i \bsl j} f(\eta^t_{k\to i}; \rho_n) \, ,\\
\eta^{t+1}_{i\to q} & = \sqrt\frac{\mu }{\gamma}\sum_{r\in [p]\bsl q}
B_{ri} m^t_{r\to i} + \sum_{k\in \partial i} f(\eta^t_{k\to i}; \rho) - \sum_{k \in \partial i ^c}f(\eta^t_{k\to i}; \rho_n) \, ,\\
\eta^{t+1}_i & = \sqrt\frac{\mu }{\gamma}\sum_{q\in [p]} B_{qi}m^t_{q\to i} + \sum_{k\in \partial i  } f(\eta^t_{k\to i}; \rho)
- \sum_{k\in \partial i^c } f(\eta^t_{k\to i}; \rho_n)\, .
\end{align}
Now we compute the updates for $m^t_{a\to i}, \tau^t_{a\to i}$. 
We start from \Cref{eq:bpqi,eq:margqi}, and use Taylor
approximation assuming $u_q, B_{jq}$ are both $O(1/\sqrt{p})$, 
as the ansatz \eqref{eq:mdef} suggests. 
\begin{align}
\log \nu^{t+1}_{q\to i}(u_q)
&= {\rm const.} +\frac{-p(1+\mu)}{2} u_q^2
+ \sum_{j\in [n]\bsl i} \log \cosh\Big(\eta^t_{j\to q} + p\sqrt\frac{\mu}{n} B_{qj} u_q \Big) \\
= {\rm const.} + &\frac{-p(1+\mu)}{2} u_q^2 
+ \Big( p\sqrt\frac{\mu}{n}\sum_{j\in[n]\bsl i} B_{qj} \tanh(\eta^t_{j\to q}) \Big) u_q
+ \Big(\frac{p^2 \mu}{2n} \sum_{j\in[n]} B_{qj}^2 %
\sech^2(\eta^t_{j\to q}) \Big) u_q^2 + O\Big(\frac{1}{\sqrt n}\Big).
\end{align}
Note that here we compute $\log \nu^{t+1}$ only up to 
constant factors (with slight abuse of the notation `$\simeq$'). It
follows from this quadratic approximation that:
\begin{align}
\tau^{t+1}_{q\to i} & = \Big(1+\mu - \frac{\mu}{\gamma} \sum_{j\in
[n]\bsl i} B_{qj}^2\sech^2(\eta^t_{j\to q}) \Big)^{-1} \, ,\\
m^{t+1}_{q\to i} & = \tau^{t+1}_{q\to i}\sqrt\frac{\mu}{\gamma}\sum_{j\in[n]\bsl i}B_{qj}\tanh(\eta^t_{j\to q}) \\
& = \frac{  \sqrt{\mu/\gamma}\sum_{j\in[n]\bsl i}B_{qj}\tanh(\eta^t_{j\to q})  }{ 1+\mu - {\mu}{\gamma^{-1} } \sum_{j\in
[n]} B_{qj}^2\sech^2(\eta^t_{j\to q})  }.
\end{align}
Updates computing $m^{t+1}_q, \tau^{t+1}_q$ are analogous.

\subsection{From message passing to approximate message passing}

The updates for $\eta^t, m^t$ derived in the previous section require keeping
track of $O(np)$ messages. In this section, we further 
reduce the number of messages to $O(dn+p)$, i.e. 
linear in the size of the input graph observation. 

The first step is to observe that the dependence
of $\eta^t_{i\to j}$ on $j$ is negligible when $j$
is not a neighbor of $i$ in the graph $G$. 
This derivation is similar to the presentation in
\cite{decelle2011asymptotic}.
As $\sup_{z\in\reals} f(z; \rho) \le \rho$. Therefore, if $i, j$ are not
neighbors in $G$:
\begin{align}
 \eta^{t}_{i\to j} & = \eta^{t}_i - f(\eta^{t-1}_{j\to i}; \rho_n) \\
 & = \eta^{t}_i + O(\rho_n) = \eta^{t}_i+ O\Big(\frac{1}{n}\Big).
 \end{align} 
 Now,  
 for a pair  $i, j$ not connected, by Taylor expansion
 and the fact that  $\partial_z f(z; \rho) \le \tanh(\rho)$, 
 \begin{align}
 f(\eta^t_{i\to j}; \rho_n) - f(\eta^t_i; \rho_n) & =  O\Big(\frac{\tanh (\rho_n )}{n} \Big) = O\Big(\frac{1}{n^2}\Big). 
 \end{align}
Therefore, the update equation for $\eta^{t+1}_{i\to j}$
satisfies:
\begin{align}
\eta^{t+1}_{i\to j} & = \sqrt\frac{\mu }{\gamma}\sum_{q\in[p]} B_{qi}m^t_{q\to i} +
\sum_{k\in \partial i\bsl j} f(\eta^t_{k\to i}; \rho) - \sum_{k\in[n]} f(\eta^t_k; \rho_n) + O\Big(\frac{1}{n}\Big),\label{eq:bpij2} \\
\eta^{t+1}_{i} & = \eta^{t+1}_{i\to j} + f(\eta^t_{j\to i}; \rho).\label{eq:bpi2}
\end{align}
Similarly for $\eta^{t+1}_{i\to q}$ we have:
\begin{align}
\eta^{t+1}_{i\to q} & = \sqrt\frac{\mu }{\gamma}\sum_{r\in [p]\bsl q}
B_{ri} m^t_{r\to i} + \sum_{k\in \partial i} f(\eta^t_{k\to i}; \rho) - \sum_{k \in [n]}f(\eta^t_{k}; \rho_n) + O\Big(\frac{1}{n}\Big). \label{eq:bpiq2}
\end{align}
Ignoring $O(1/n)$ correction term, the update
equations reduce to
 variables $(\eta^t_{i\to j}, \eta^t_i)$ where
$i, j$ are neighbors. 

We now move to reduce updates for $\eta^t_{i\to q}$
and $m^t_{q \to i}$ to involving $O(n)$ variables. This reduction is more subtle then
that of $\eta^t_{i\to j}$, 
where we are able to simply ignore the dependence of $\eta^t_{i\to j}$
on $j$ if $j\not\in\partial i$. We follow a derivation similar
to that in \cite{MontanariChapter}. We use the ansatz:
\def\deta{{\delta\eta}}
\def\dm{{\delta m}}
\def\dtau{{\delta\tau}}
\begin{align}
\eta^t_{i\to q} & = \eta^t_i + \deta^t_{i\to q} \\
m^{t}_{q \to i} & = m^t_q + \dm^t_{q \to i}\\
\tau^t_{q\to i} & = \tau^t_q + \dtau^t_{q\to i},
\end{align}
where the corrections $\deta^t_{i\to q}, \dm^t_{q\to i},
\dtau^t_{q\to i}$
are $O(1/\sqrt{n})$. From \Cref{eq:bpiq2,eq:bpqi} at iteration $t$:
\begin{align}
\eta^{t}_i + \deta^{t}_{i\to q}
&=  \sqrt\frac{\mu }{\gamma} \sum_{r\in [p]\bsl q} B_{ri} (m^{t-1}_r + \dm^{t-1}_{r\to i})
+ \sum_{k\in \partial i} f(\eta^{t-1}_{k\to i}; \rho) - \sum_{k}f(\eta^{t-1}_k; \rho_n) \\
& = \sqrt\frac{\mu }{\gamma} \sum_{r\in [p]} B_{ri} (m^{t-1}_r + \dm^{t-1}_{r\to i})
+ \sum_{k\in \partial i} f(\eta^{t-1}_{k\to i}; \rho) - \sum_{k}f(\eta^{t-1}_k; \rho_n) - \sqrt\frac{\mu}{\gamma}\big(B_{qi}m^{t-1}_{q} + B_{qi} \dm^{t-1}_{q\to i}\big).
\end{align}
Notice that the last term is the only term that depends
on $q$. Further, since $B_{qi}\dm^{t-1}_{q\to i} = O(1/n)$ by 
our ansatz, we may safely ignore it to obtain
\begin{align}
\eta^{t}_i & = \sqrt\frac{\mu }{\gamma} \sum_{r\in [p]} B_{ri} (m^{t-1}_r + \dm^{t-1}_{r\to i})
+ \sum_{k\in \partial i} f(\eta^{t-1}_{k\to i}; \rho) - \sum_{k}f(\eta^{t-1}_k; \rho_n) \label{eq:etaiden} \\
\deta^{t}_{i\to q} & = -\sqrt\frac{\mu}{\gamma} B_{qi}m^{t-1}_q.
\label{eq:detaiden}
\end{align}
We now use the update equation for $\tau^{t+1}_{q\to i}$:
\begin{align}
\tau^{t+1}_{q} &= \left(1+\mu - \frac{\mu}{\gamma} \sum_{j\in [n]} B_{qj}^2\sech^2(\eta^t_j + \deta^t_{j\to q}) \right)^{-1}+O(1/n)\\
& = \left(1 +\mu - \frac{\mu}{\gamma} \sum_{j\in [n]} B_{qj}^2\big((\sech^2(\eta^t_j) - 2 \sech^2(\eta^t_j)\tanh(\eta^{t}_j) \deta^t_{i\to q}\big)\right)^{-1}+O(1/n), 
\end{align}
where we expanded the equation to linear order in $\deta^t_{i\to q}$ and ignored higher order terms. By the identification 
\Cref{eq:detaiden}:
\begin{align}
\tau^{t+1}_{q} 
& = \left(1 +\mu - \frac{\mu}{\gamma} \sum_{j\in [n]} B_{qj}^2\sech^2(\eta^t_j) + 2\Big(\frac{\mu}{\gamma}\Big)^{3/2}
\sum_{j\in [n]} B_{qj}^3\sech^2(\eta^t_j)\tanh(\eta^{t}_j) m^{t-1}_q \right)^{-1}+O(1/n). 
\end{align}
Notice here, that there is no term that explicitly depends
on $i$ and the final term is $O(1/\sqrt{n})$ since $B_{qj} =O(1/\sqrt{n})$. Therefore, ignoring lower order terms, we have the identification:
\begin{align}
\tau^{t+1}_q & = \left(1 + \mu - \frac{\mu}{\gamma} \sum_{j\in [n]} B_{qj}^2 \sech^2(\eta^t_j)\right)^{-1},\label{eq:tauiden}\\
\dtau^{t+1}_{q\to i} & = 0 \label{eq:dtauiden}.
\end{align}
Now we simplify the update for $m^{t+1}_{q\to i}$
using Taylor expansion to first order:
\begin{align}
m^{t+1}_q + \dm^{t+1}_{q\to i} & = 
 \frac{\sqrt{\mu/\gamma}}{\tau^{t+1}_q}\sum_{j\in[n]\bsl i}B_{qj}\tanh(\eta^t_{j} + \deta^t_{j\to q})   \\
 & =   \frac{\sqrt{\mu/\gamma}}{\tau^{t+1}_q}\sum_{j\in[n]\bsl i}\left(B_{qj}\tanh(\eta^t_{j})  + B_{qj}\sech^2(\eta^t_i)\deta^t_{j\to q} \right)\\
 & = \frac{\sqrt{\mu/\gamma}}{\tau^{t+1}_q}   \sum_{j\in [n]\bsl i} \Big(B_{qj} \tanh(\eta^t_j) - \sqrt\frac{\mu}{\gamma} B_{qj}^2 \sech^2(\eta^t_j) m^{t-1}_q\Big) \\
 & =  \frac{\sqrt{\mu/\gamma}}{\tau^{t+1}_q} \sum_{j\in [n]} B_{qj}\tanh(\eta^t_j)- \frac{\mu}{\gamma\tau^{t+1}_q} %
  \bigg(\sum_{j\in [n]} B_{qj}^2 \sech^2(\eta^t_j) \bigg) m^{t-1}_q  \nonumber \\
&\quad - \frac{\sqrt{\mu/\gamma}}{\tau^{t+1}_q}   \big(B_{qi} \tanh(\eta^t_i) - \sqrt{\mu/\gamma} B_{qi}^2 \sech^2(\eta^t_i) m^{t-1}_q \big).
\end{align}
Only the final term is dependent on $i$, therefore we can
identify:
\begin{align}
m^{t+1}_q & = \frac{\sqrt{\mu/\gamma}}{\tau^{t+1}_q}
 \sum_{j\in [n]} B_{qj}\tanh(\eta^t_j)- {\frac{\mu}{\gamma\tau^{t+1}_q}} \bigg(\sum_{j\in [n]} B_{qj}^2 \sech^2(\eta^t_j) \bigg) m^{t-1}_q, \label{eq:miden}\\
 \dm^{t+1}_{q\to i} & = -\frac{\sqrt{\mu/\gamma}}{\tau^{t+1}_q} B_{qi}\tanh(\eta^t_i ).\label{eq:dmiden}
\end{align}
Here, as before, we ignore the lower order term in $\dm^{t+1}_{q\to i}$. Now we can substitute the identification \Cref{eq:dmiden}
back in \Cref{eq:etaiden} at iteration $t+1$:
\begin{align}
\eta^{t+1}_i & = %
\sqrt\frac{\mu }{\gamma} \sum_{r\in [p]} B_{ri} m^{t}_r %
-\frac{\mu}{\gamma}  \sum_{r\in [p]} \frac{B_{ri}^2}{\tau^t_r} \tanh(\eta^{t-1}_i)%
+ \sum_{k\in \partial i} f(\eta^{t}_{k\to i}; \rho) %
- \sum_{k}f(\eta^{t}_k; \rho_n).
\end{align}
Collecting the updates for $\eta^t_i, \eta^t_{i\to j}, m^t_q$
we obtain the approximate message passing algorithm:
\begin{align}
\eta^{t+1}_{i} & = \sqrt{\frac{\mu}{\gamma}} \sum_{q\in [p]}
B_{qi}m^t_q -%
 \frac{\mu}{\gamma}\bigg( \sum_{q\in [p]}\frac{ B_{qi}^2}{\tau^t_q}\bigg) \tanh(\eta^{t-1}_i) + %
\sum_{k\in \partial i} f(\eta^t_{k\to i}; \rho) - \sum_{k\in[n]} f(\eta^t_k; \rho_n) \, ,\label{eq:amp1}\\
\eta^{t+1}_{i\to j} & =  \sqrt{\frac{\mu}{\gamma}} \sum_{q\in [p]}
B_{qi}m^t_q -%
 \frac{\mu}{\gamma}\bigg( \sum_{q\in [p]}\frac{ B_{qi}^2}{\tau^t_q}\bigg) \tanh(\eta^{t-1}_i) + %
\sum_{k\in \partial i\setminus j} f(\eta^t_{k\to i}; \rho) - \sum_{k\in[n]} f(\eta^t_k; \rho_n)\, ,\label{eq:amp2} \\
m^{t+1}_q & = \frac{\sqrt{\mu/\gamma}}{\tau^{t+1}_q}%
 \sum_{j\in [n]} B_{qj}\tanh(\eta^t_j)- %
 {\frac{\mu}{\gamma\tau^{t+1}_q}} %
 \bigg(\sum_{j\in [n]} B_{qj}^2 \sech^2(\eta^t_j) \bigg)%
  m^{t-1}_q \label{eq:amp3} \\
\tau^{t+1}_q & = \left(1 + \mu - \frac{\mu}{\gamma} \sum_{j\in [n]} B_{qj}^2 \sech^2(\eta^t_j)\right)^{-1}\label{eq:amp4}.
\end{align}
\subsection{Linearized approximate message passing}
This algorithm results from expanding the updates 
\Cref{eq:amp1,eq:amp2,eq:amp3,eq:amp4} to linear
order in the messages $\eta^t_i, \eta^t_{i\to j}$:
\begin{align}
\eta^{t+1}_{i} & = \sqrt{\frac{\mu}{\gamma}} \sum_{q\in [p]}
B_{qi}m^t_q -%
 \frac{\mu}{\gamma}\bigg( \sum_{q\in [p]}\frac{ B_{qi}^2}{\tau^t_q}\bigg) \eta^{t-1}_i + %
\frac{\lambda}{\sqrt{d}}\sum_{k\in \partial i} \eta^t_{k\to i} %
- \frac{\lambda\sqrt{d}}{n}\sum_{k\in[n]} \eta^t_k  \label{eq:linearamp1}\\
\eta^{t+1}_{i\to j} & = \sqrt{\frac{\mu}{\gamma}} \sum_{q\in [p]}
B_{qi}m^t_q -%
 \frac{\mu}{\gamma}\bigg( \sum_{q\in [p]}\frac{ B_{qi}^2}{\tau^t_q}\bigg) \eta^{t-1}_i + %
\frac{\lambda}{\sqrt{d}}\sum_{k\in \partial i\setminus j} \eta^t_{k\to i} %
- \frac{\lambda\sqrt{d}}{n}\sum_{k\in[n]} \eta^t_k 
\label{eq:linearamp2} \\
m^{t+1}_q & = \frac{\sqrt{\mu/\gamma}}{\tau^{t+1}_q}%
 \sum_{j\in [n]} B_{qj}\eta^t_j- %
 \frac{\mu}{\gamma \tau^{t+1}_q} %
 \bigg(\sum_{j\in [n]} B_{qj}^2  \bigg)%
  m^{t-1}_q \label{eq:linearamp3} \\
\tau^{t+1}_q & = \left(1 + \mu - \frac{\mu}{\gamma} \sum_{j\in [n]} B_{qj}^2 \right)^{-1}\label{eq:linearamp4}.
\end{align}
This follows from the linear approximation  $f(z; \rho)  = \tanh(\rho) z$ for small $z$. The algorithm given in the main text follows
by using the law of large numbers to approximate $\sum_{j\in [n]}B_{qj}^2 \approx 1/\gamma$, $\sum_{q\in [p]}B_{qj}^2 \approx 1$, and hence $\tau^^t_q\approx 1$.

}

\end{document}